% LLNCS macro package for Springer Computer Science proceedings;
% Version 2.20 of 2017/10/04
%
\documentclass[runningheads]{llncs}
%

% ----------------------------------------
% Packages
% ----------------------------------------

%% Some recommended packages.
\usepackage{booktabs}   %% For formal tables:
                        %% http://ctan.org/pkg/booktabs
\usepackage{subcaption} %% For complex figures with subfigures/subcaptions
                        %% http://ctan.org/pkg/subcaption

\usepackage[T1]{fontenc} % needed for scaling fancy fonts (?)
\usepackage[utf8]{inputenc} % not sure this is needed
% \usepackage[scaled=0.8]{luximono} % nicer code font
% Maybe useful:
% \usepackage{lmodern} % SID: turning this on disables highlighting of keywords in listings
\usepackage{graphicx}
\usepackage{amssymb}
\usepackage{multirow}

\usepackage[scaled=.92]{helvet}
\usepackage{times}

% For code
\usepackage[final]{listings}
\usepackage{silver}
\lstset{mathescape=true}
%\newcommand{\code}[1]{\text{\lstinline[mathescape=true, language=silver,basicstyle=\small\ttfamily]`#1`}}
% NOTE: Viper-specific macros are currently in the silver.sty file

% For code highlighting
% \usepackage{bold-extra}

% Tikz
\usepackage{tikz}
\usetikzlibrary{matrix,arrows,positioning,calc,fit,backgrounds}
\usepackage{wrapfig}

% To control enum item labelling/numbering
\usepackage[shortlabels, inline]{enumitem}
% To give custom item labels and reference them
\makeatletter
\newcommand{\myitem}[1][]{
  \protected@edef\@currentlabel{#1}%
\item[#1]
}
\makeatother

% To stop aligned env swallowing up []s
\usepackage{mathtools}

% To use ifstrempty
\usepackage{etoolbox}

% For math mode tables
\usepackage{array}
% A text column in array
\newcolumntype{L}{>$l<$}

% For \llbracket and \rrbracket
\usepackage{stmaryrd}

% For dashed boxes
\usepackage{dashbox}

% For big separating conjunction
\usepackage{scalerel}

% For mathpar environment
\usepackage{mathpartir}
% For typesetting inference rules
\usepackage{pftools}  % A local package
\usepackage{xspace}

% Use the Times font to save space (but probably not for PACMPL papers)
% \usepackage[scaled=.92]{helvet}
% \usepackage{times}

% Some LNCS specific fixes:

% LNCS template doesn't have the citet command
\usepackage{cite}
% \citet doesn't make sense with numeric citations so make red to remember to change these!
\newcommand{\citet}[1]{\textcolor{red}{\cite{#1}}}

\usepackage[pdftex,pdfpagelabels,bookmarks,hyperindex,hyperfigures]{hyperref}
\setcounter{tocdepth}{2}

% If you use the hyperref package, please uncomment the following line
% to display URLs in blue roman font according to Springer's eBook style:

% ----------------------------------------
% Comments and TODOs:
% ----------------------------------------

% Uncomment this to remove all comments
\newcommand{\nocomments}{}

% Uncomment this to eliminate all text colour
\newcommand\nocolour{}

% Uncomment this to remove all TODOs
\newcommand\notodos{}

%%% crossing out text - defines \sout
\usepackage[normalem]{ulem}

% Comments by authors:
\newcommand{\as}[1]  {\ifdefined\nocolour{#1}\else{\color{purple}{#1}}\fi}

\newcommand{\asfootnote}[1]{\ifdefined\nocomments{}\else\as{\footnote{\as{ALEX: #1}}}\fi}
\newcommand{\tw}[1]  {\ifdefined\nocolour{#1}\else{\color{blue}{#1}}\fi}
\newcommand{\twout}[1]{\ifdefined\nocolour{}\else\tw{{\sout{#1}}}\fi}
\newcommand{\twfootnote}[1]{\ifdefined\nocomments{}\else\tw{\footnote{\tw{THOMAS: #1}}}\fi}
\newcommand{\sk}[1]  {\ifdefined\nocolour{#1}\else{\color{green!50!black}{#1}}\fi}

\newcommand{\skfootnote}[1]{\ifdefined\nocomments{}\else\sk{\footnote{\sk{SID: #1}}}\fi}

% Comments and TODOs
\newcommand{\fcomment}[2]{\ifdefined\nocomments{}\else\footnote{\textcolor{red}{#1:} #2}\fi}

\newcommand{\ftodo}[1]{\ifdefined\notodos{}\else\fcomment{TODO}{#1}\fi}

% ----------------------------------------
% Paper specific macros & commands
% ----------------------------------------

% Iris notation and symbols (local package)
\usepackage{iris}

% Bunch of common macros
% ----------------------------------------
% Generic definitions
% ----------------------------------------
% Required packages: listings, tikz

% Tech report (#1) vs main paper (#2)
\newcommand{\moreless}[2]{#1}
\newcommand{\techreport}[1]{\moreless{#1}{}}

% define to be {} for paper
%\newcommand{\ifspace}[1]{#1}
\newcommand{\ifspace}[1]{}

%\newcommand{\m}[1]{\mbox{\rm \textsf{#1}}}

 % keyword in pseudo code

\newcommand{\defFunc}[2]{\newcommand{#1}{\mathsf{#2}}}

% Math stuff
\newcommand{\Nat}{\mathbb{N}}

\newcommand{\impl}{\Rightarrow}

\renewcommand{\proves}{\vdash}

\renewcommand{\dom}{\operatorname{\mathsf{dom}}}

\newcommand{\pto}{\rightharpoonup}

\newcommand{\defeq}{\coloneqq}

\newcommand{\pipe}{\triangleright}

%% Caligraphic

% Wrappers: Parens, brackets, etc
% \newcommand{\op}[1]{\operatorname{#1}}
\newcommand{\paren} [1] {\ensuremath{ \left( {#1} \right) }}

\newcommand{\abs}[1]{\ensuremath{\lvert #1 \rvert}}
\newcommand{\setcomp}[2]{\ensuremath{\left\{#1\;\middle|\;#2\right\}}}

% References

\newcommand{\rSc}[1]{\S\ref{#1}}
\newcommand{\rF}[1]{Figure~\ref{#1}}
\newcommand{\rD}[1]{Definition~\ref{#1}}
\newcommand{\rL}[1]{Lemma~\ref{#1}}

\newcommand{\rE}[1]{Example~\ref{#1}}

\newcommand{\rEq}[1]{\ensuremath{(\ref{#1})}}
\newcommand{\refRule}[1]{(\ref{#1})}
\newcommand{\refApp}[1]{Appendix~\ref{#1}}

\newcommand{\code}[1]{\textnormal{\small\texttt{#1}}}
% \newcommand{\code}[1]{\text{\lstinline{#1}}}
% \newcommand{\grasshopper}{\tool{GRASShopper}}

% TODO have macros for \forall and \exists

% Some macros from Alex

\newcommand{\eg}{e.g.\@}
\newcommand{\ie}{i.e.\@}
\newcommand{\cf}{cf.\@}
\newcommand{\etal}{et al.\@}

\newcommand{\entails}{\ensuremath{\models}}

% sometimes good for tables..

%terminology
\newcommand*{\figref}[1]{Figure~\ref{fig-#1}}
\newcommand*{\secref}[1]{\S\ref{sec-#1}}

% \newcommand*{\axref}[1]{Axiom~\ref{ax:#1}}
% \WithSuffix{\newcommand*}\axref*[1]{\ref{ax:#1}}
% \newcommand*{\algref}[1]{Algorithm~\ref{alg:#1}}
\newcommand*{\defref}[1]{Definition~\ref{def-#1}}

\makeatletter
% macros for declaring operators with arguments; e.g. \def\FV{\operator{\textit{FV}}}
% Optional arguments - if followed by a { then generates ( ) around argument
\def\operator#1{\@ifnextchar\bgroup {\operatorarg{\ensuremath{#1}}}{\ensuremath{#1}}}
\def\operatorarg#1#2{{#1}{\ensuremath{(#2)}}}
% Optional arguments - if followed by a { then generates ( ) around argument
\def\spoperator#1#2{\@ifnextchar\bgroup{\spoperatorarg{\ensuremath{#1}}{\ensuremath{#2}}}{\ensuremath{#1}}}
\def\spoperatorarg#1#2#3{\ensuremath{#1#2#3}}
\makeatother

% encoding brackets
\newskip \point \point =1pt
% << >> brackets
\setbox134=\hbox{\leavevmode\raise0\point\hbox{${\langle}\kern-2.5\point{\langle}$}}
\setbox135=\hbox{\raise0\point\hbox{${ \rangle}\kern-2.5\point{ \rangle}$}}

% double "floor" brackets
\setbox136=\hbox{\leavevmode\raise0\point\hbox{${\lfloor}\kern-3.0\point{\lfloor}$}}
\setbox137=\hbox{\raise0\point\hbox{${ \rfloor}\kern-3.0\point{ \rfloor}$}}

% [[ ]] brackets
\setbox138=\hbox{\leavevmode\raise0\point\hbox{${[}\kern-1.5\point{[}$}}
\setbox139=\hbox{\raise0\point\hbox{${]}\kern-1.5\point{]}$}}

% ----------------------------------------
% Paper specific macros & commands
% ----------------------------------------

\newcommand{\framework}{foundational flow framework\xspace}

\newcommand{\FRAMEWORK}{Foundational Flow Framework\xspace}

\newcommand{\emptySubscript}{\emptyset}

% Viper/IDF stuff

% \newcommand{\field}[1]{\text{\lstinline+#1+}}
%\newcommand{\field}[1]{\text{\small\texttt{#1}}}
\newcommand{\field}[1]{\text{\code{#1}}}
\newcommand{\nextField}{\field{next}}
\newcommand{\fnextField}{\field{fnext}}

% encoding variables / features

% Programming language and semantics
\newcommand{\values}{\mathsf{Val}}

% Iris/ReLoC stuff

% Double plus sign for list concatenation
\makeatletter%
\@ifundefined{dplus}{%
\newcommand\dplus{\mathbin{+\kern-1.0ex+}}
}{}
\makeatother%

% Link template example

\newcommand{\globalInt}{\varphi}

\defFunc{\inset}{ins}
\defFunc{\linkset}{lnks}
\defFunc{\outset}{outs}
\defFunc{\edgeset}{es}
\defFunc{\reachset}{rs}
\defFunc{\contents}{cn}
\defFunc{\inreach}{inr}

% Flow domains

% Graph stuff
\newcommand{\graph}{G}

\newcommand{\nodeDom}{\mathfrak{N}}%{\mathsf{Node}}

% Flow graph stuff

%{\oplus}

% \newcommand{\flowEmpty}{\flowmap_{\emptySubscript}}

\newcommand{\mDom}{M}
\newcommand{\mOp}{+}
\newcommand{\mPlus}{\mOp} % just an alias
\newcommand{\mplus}{\mPlus} % just an alias
\newcommand{\mBigOp}{\sum}
\newcommand{\mLeq}{\leq}
\newcommand{\mZero}{0}
\newcommand{\mVar}{m}

\newcommand{\edgeDom}{E}

% Flow graph stuff (July 2019)
\newcommand{\fGraph}{H}
\newcommand{\fGraphDom}{\mathsf{FG}}
\newcommand{\fGraphComp}{\odot}
\newcommand{\fGraphBigComp}{\bigodot}
\newcommand{\fGraphEmpty}{\fGraph_{\emptySubscript}}

\newcommand{\fGraphMap}{\mathcal{H}}

\newcommand{\edgeFn}{\operator{e}}

\newcommand{\flowVar}{\textit{fl}}
\newcommand{\flow}{\operator{\mathsf{flow}}}%{\mathit{flow}}
\newcommand{\flowEqn}{\mathsf{FlowEqn}}

\newcommand{\edgeFnEmpty}{\edgeFn_{\emptySubscript}}
\newcommand{\flowEmpty}{\flow_{\emptySubscript}}

\newcommand{\morphisms}{\mathsf{End}}
\newcommand{\deltaFn}[2]{\delta_{#1 = #2}}

\newcommand{\fnComp}{\circ}
\newcommand{\zeroFn}{\lambda_{\mZero}}
\newcommand{\identityFn}{\lambda_{\mathsf{id}}}
\newcommand{\lambdaFn}[2]{(\lambda #1.\; #2)}

\newcommand{\capacityExtendedBy}{\precsim_s}

\defFunc{\inflowFn}{inf}
\defFunc{\outflowFn}{outf}

\newcommand{\outflow}{\mathit{out}}

% Flow domain stuff

%{\preccurlyeq}

% Flow Interfaces
\newcommand{\interfaces}{\mathsf{FI}}
\newcommand{\interface}{I}
\defFunc{\interfaceFn}{int}
\defFunc{\footprintFn}{ffp}
\newcommand{\interfaceEmpty}{\interface_{\emptySubscript}}

\newcommand{\intComp}{\oplus}

\newcommand{\intLessEquiv}{\precsim}

\newcommand{\inflowOfInt}[1]{{#1}.\inflow}

\newcommand{\outflowOfInt}[1]{{#1}.\outflow}

\defFunc{\init}{init}
\defFunc{\capacity}{cap}
\defFunc{\capacityAux}{capAux}
\defFunc{\flowFn}{flow}
\defFunc{\flowmapFn}{flm}
\defFunc{\userEdgeFn}{edges}

\defFunc{\goodCondition}{\gamma}

\defFunc{\pathCount}{pc}

% Proof Technique stuff

% Encoding into Viper

% Search graph stuff

\newcommand{\inflow}{\mathit{in}}

\defFunc{\lock}{lock}
\defFunc{\keyset}{ks}
\defFunc{\pathset}{path}
\defFunc{\inflows}{In}

\defFunc{\composition}{comp}
\defFunc{\projection}{proj}

% State stuff
\newcommand{\vars}{\mathsf{Var}}
\newcommand{\addrs}{\mathsf{Loc}}
\newcommand{\fieldDom}{\mathsf{Field}}
\newcommand{\cmds}{\mathsf{Com}}
\newcommand{\states}{\mathsf{Heap}}
\newcommand{\state}{\sigma}

\newcommand{\stateComp}{\circledcirc}

\newcommand{\heap}{h}

\newcommand{\interp}{i}

% Separation logic stuff

\newcommand{\fields}{\mathit{fs}}

\defFunc{\abstractionFn}{edge}
\defFunc{\hrepSpatial}{spatialRep}

% Specification stuff

\newcommand{\mkblue}[1]{\textcolor{blue}{#1}}

\newcommand{\annot}[1]{\mkblue{\left\{\begin{aligned}#1\end{aligned}\right\}}}

\newcommand{\hoareTriple}[3]{\annot{#1} \; #2 \; \annot{#3}}

\newcommand{\denotation}[1]{\llbracket #1 \rrbracket}

\DeclareMathOperator*{\Sep}{\scalerel*{\ast}{\sum}}
\newcommand{\ite}[3]{\paren{#1 \;?\; #2 : #3}}

% ALEX: added a few:

%

\newcommand{\lseg}{\mathsf{lseg}}

\newcommand{\emp}{\mathit{emp}}
\newcommand{\true}{\mathit{true}}

\newcommand{\nullVal}{\mathit{null}}

\newcommand{\graphPred}{\mathsf{Gr}}
\newcommand{\graphPredEA}{\mathsf{Gr_{a}}}

\newcommand{\nodePred}{\mathsf{N}}
\newcommand{\nodePredEA}{\mathsf{N_{a}}}

\newcommand{\emptyFn}{\epsilon}

% \newcommand{\heapRep}{\hbar}

% For enumerating partial functions (because \mapsto is used for SL points-to)
\newcommand{\goesto}{\rightarrowtail}

% Programming language stuff
\newcommand{\skipCommand}{\text{\texttt{\textbf{skip}}}}
\newcommand{\assumeCommand}{\text{\texttt{\textbf{assume}}}}
\newcommand{\allocCommand}{\text{\texttt{\textbf{alloc}}}}

\newcommand{\reduces}[2]{\xrightarrow[#2]{#1}}

\newcommand{\faultConfig}{\text{\texttt{\textbf{fault}}}}

% Harris list example
\newcommand{\freeListHead}{\mathit{fh}}
\newcommand{\freeListTail}{\mathit{ft}}
\newcommand{\mainListHead}{\mathit{mh}}

\newcommand{\marked}{\blacklozenge}
\newcommand{\unmarked}{\diamondsuit}

% % Listings defs for SPL code
\lstdefinelanguage{SPL}{
  morekeywords={acc, method, struct,if,else,returns,procedure,requires,ensures,:=,var,
    new,old,free,implicit,modifies,call,locals,assume,assert,choose,havoc,ghost,
    predicate,function,invariant,while, return,atomic, split, type, field, result,
    define, datatype, domain, axiom},
  deletekeywords={union,int},
  % lineskip=-0.1em,
  numbers=left,
  xleftmargin=2em,
  escapeinside={@}{@},
  % numbers=none,
  % stepnumber=2,
  % firstnumber=1,
  % numberfirstline=true,
  numberstyle=\tiny,
  basicstyle=\footnotesize\ttfamily,
  columns=flexible,
  morecomment=*[s][\color{green!60!black}]{/*}{*/},
  morecomment=*[l][\color{green!60!black}]{//},
  moredelim=**[is][\color{purple}]{|<}{>|},
  mathescape=true,
}
% \lstset{language=SPL}
% \lstset{firstnumber=last}  % Line numbers continue across entire paper

% Listings defs for ML code
\lstset{
  language=SPL,
  numbers=left,
  xleftmargin=1.1em,
  numbersep=3pt,
  escapeinside={@}{@},
  numberstyle=\tiny,
  basicstyle=\footnotesize\ttfamily,
  columns=flexible,
  mathescape=true,
  morecomment=*[s][\color{green!60!black}]{(*}{*)},
  moredelim=**[is][\color{purple}]{|<}{>|},
}

% Tikz macros & styles
\tikzset{%
  array/.style={matrix of nodes,nodes={draw, minimum size=5mm, anchor=center},column sep=-\pgflinewidth, row sep=-\pgflinewidth, nodes in empty cells,anchor=center},
  ptr/.style={*->, shorten <=-(1.8pt+1.4\pgflinewidth)},
  edge/.style={->},
  dedge/.style={<->, dashed},
  % Harris
  fedge/.style={->, dashed},
  unode/.style={circle, draw=black, thick, minimum size=8mm},
  mnode/.style={circle, draw=black, thick, fill=gray!20, minimum size=8mm},
  stackVar/.style={circle, fill=none, inner sep=0pt, minimum size=8mm, font=\normalsize},
  % Templates and instantiations
  gnode/.style={circle, draw=black, thick, minimum size=8mm},
  % PIP
  pnode/.style={circle, draw=black, thick, minimum size=8mm},
  rnode/.style={draw=black, thick, minimum size=8mm},
  prio/.style={circle, fill=none, inner sep=0pt, minimum size=8mm, font=\footnotesize},
  % General flow graphs
  dnode/.style={circle, draw=black, thick, dotted, minimum size=8mm},
  inflow/.style={circle, fill=none, inner sep=0pt, minimum size=5mm, font=\normalsize},
  phantomNode/.style={circle, fill=none, inner sep=0pt, minimum
    size=0pt}
}

%%% Local Variables:
%%% mode: latex
%%% TeX-master: "writeup"
%%% End:

% ----------------------------------------
% Document header
% ----------------------------------------

\begin{document}

%% Title information
\title{Local Reasoning for Global Graph Properties}
%
%\titlerunning{Abbreviated paper title}
% If the paper title is too long for the running head, you can set
% an abbreviated paper title here
%

%% Author information
\author{Siddharth Krishna\inst{1} \and
Alexander J. Summers\inst{2} \and
Thomas Wies\inst{1}}
\authorrunning{S. Krishna et al.}
% First names are abbreviated in the running head.
% If there are more than two authors, 'et al.' is used.
%
\institute{New York University, USA, \email{\{siddharth,wies\}@cs.nyu.edu}
  \and
  ETH Z\"urich, Switzerland, \email{alexander.summers@inf.ethz.ch}}
\maketitle              % typeset the header of the contribution

\begin{abstract}
  Separation logics are widely used for verifying
  programs that manipulate complex heap-based data structures. %state involving
  %graphs (e.g. the heap).
  These logics build on so-called
  \emph{separation algebras}, which allow expressing properties of heap
%  graph
regions
  such that modifications to a region
  % become \emph{frame preserving}, i.e.
  do not invalidate properties stated about the remainder of the
  heap. %graph.
  This concept is key to enabling modular reasoning and also extends to concurrency.
  %One can then reason locally about the effects of these graph
  %modifications, which greatly simplifies the verification.
  %
  While heaps are naturally related to mathematical graphs, %However,
   many ubiquitous graph properties are non-local in
  character, such as reachability between nodes, path lengths,
  acyclicity and other structural invariants, as well as data
  invariants which combine with these notions. %\asfootnote{probably a
%    bit long. Is it worth trying to sneak overlays in somewhere?}
  Reasoning modularly about such
  % reasoning about global
  graph properties
%  (e.g. reachability)
remains notoriously difficult, since a local modification can have side-effects on a global property that cannot be easily confined to a small region.

  In this paper, we address the question: What separation algebra can be used to avoid
  proof arguments reverting back to
%  is the separation
%  algebra that avoids the proof argument from reverting back to
  tedious global reasoning in such cases? To this end, we consider a
  general class of global graph properties expressed as fixpoints of
  algebraic equations over graphs. We present mathematical foundations
  for reasoning about this class of properties, imposing minimal
  requirements on the underlying theory
   that allow us to define a
  suitable separation algebra. Building on this theory we develop a
  general proof technique for modular reasoning about global graph
  properties over program heaps, in a way which
   can be integrated with existing
   separation logics. To demonstrate our approach, we present local
   proofs for two challenging examples: a
   priority inheritance protocol and the non-blocking concurrent Harris list.
  % We further devise a strategy for automating this
  % technique using SMT-based verification tools.
  % %
  % We have implemented this strategy on top of the verification tool
  % Viper and applied it successfully to a variety of challenging
  % benchmarks including 1) algorithms involving general graphs such as
  % Dijkstra's algorithm and a priority inheritance protocol, 2)
  % inductive data structures such as linked lists and B trees, 3)
  % overlaid data structures such as the Harris list and threaded trees,
  % and 4) OO design patterns such as Composite and
  % Subject/Observer. We are not aware of any single other
  % approach that can handle these examples with the same degree of
  % simplicity or automation.

  % However, certain data structure idioms prevalent in real-world
  % programs are notoriously difficult to reason about, even in these
  % advanced logics (e.g., random access into inductively defined
  % structures, data structure overlays).
  %

\end{abstract}

% ----------------------------------------
% Actual paper
% ----------------------------------------

\section{Introduction}\label{sec-introduction}
Separation logic (SL)~\cite{DBLP:conf/csl/OHearnRY01,
  DBLP:conf/lics/Reynolds02} provides the basis of many successful
verification tools that can verify programs manipulating complex data
structures~\cite{DBLP:conf/nfm/CalcagnoDDGHLOP15,DBLP:conf/nfm/JacobsSPVPP11,viper,
  DBLP:conf/nfm/Appel12}. This success is due to the logic's support
for reasoning modularly about modifications to heap-based data.
%graph.
 For simple inductive data structures such as lists and trees,
much of this reasoning can be
automated~\cite{DBLP:conf/fsttcs/BerdineCO04,
  %DBLP:conf/atva/IosifRV14,
  DBLP:conf/tacas/KatelaanMZ19,
  DBLP:conf/cav/PiskacWZ13,DBLP:conf/nfm/EneaLSV17}.
However, these techniques often fail when data structures are less
regular (e.g. multiple overlaid data structures) or provide multiple
traversal patterns (e.g. threaded trees). Such idioms are prevalent in
real-world implementations such as the fine-grained concurrent data
structures found in operating systems and databases. Solutions to
these problems have been proposed~\cite{DBLP:conf/popl/HoborV13} but
remain difficult to automate. For proofs of general graph algorithms,
the situation is even more dire. Despite substantial improvements in
the verification methodology for such
algorithms~\cite{DBLP:conf/pldi/SergeyNB15,
  DBLP:conf/aplas/RaadHVG16}, significant parts of the proof argument
still typically need to be carried out using non-local
reasoning~\cite{DBLP:conf/itp/ChenCLMT19,DBLP:journals/jar/ChargueraudP19,DBLP:journals/afp/HaslbeckLB19,DBLP:journals/jar/LammichS19}. This paper presents a general technique for %automated
local reasoning about global graph properties \tw{that can be used within off-the-shelf
  separation logics}.
%and applies to a broad class of data structures and graph
%algorithms.
\tw{We demonstrate our technique using two challenging examples
  for which no fully local proof existed before, respectively, whose proof
  required a tailor-made logic.}
  %that are difficult to automate}.
%\tw{In fact, for many of the
%examples that we consider in this work, no fully-modular proof had
%existed before.}

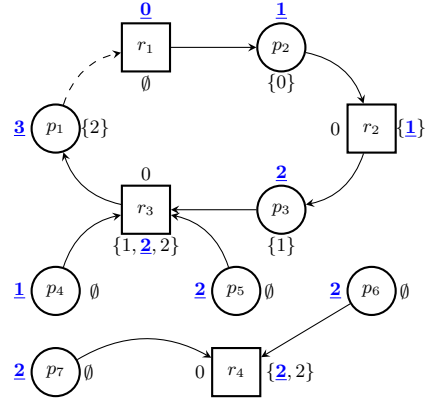
\begin{figure}[t]
  \vspace{-.5em}
  \centering
  \begin{minipage}{.56\linewidth}
  \begin{lstlisting}[gobble=4,basicstyle=\ttfamily\scriptsize]
    method acquire(p: Node, r: Node) {
      if (r.next == null) {
        r.next := p; update(p, -1, r.curr_prio)
      } else {
        p.next := r; update(r, -1, p.curr_prio)
      }
    }
    method update(n: Node, from: Int, to: Int) {
      n.prios := n.prios $\setminus$ {from}
      if (to >= 0) n.prios := n.prios $\cup$ {to}
      from := n.curr_prio
      n.curr_prio := max(n.prios $\cup$ {n.def_prio})@\label{line:pip-max-calc}@
      to := n.curr_prio;
      if (from != to && n.next != null) {
        update(n.next, from, to)
      }
    }
  \end{lstlisting}
  \end{minipage}%
  \begin{minipage}[c]{.44\linewidth}
    \begin{tikzpicture}[>=stealth, scale=0.8, every node/.style={scale=0.8}]
      \def\xsep{1.5}
      \def\ysep{1.35}

      % Nodes
      \node[unode] (p1) {$p_1$};
      \node[prio] (priop1) at ($(p1) + (-.6, 0)$) {$\mathbf{\underline{\color{blue} 3}}$};
      \node[prio] (priosp1) at ($(p1) + (.65, 0)$) {$\{2\}$};

      \node[rnode] (r1) at ($(p1) + (\xsep, \ysep)$) {$r_1$};
      \node[prio] (prior1) at ($(r1) + (0, .6)$) {$\mathbf{\underline{\color{blue} 0}}$};
      \node[prio] (priosr1) at ($(r1) + (0, -.6)$) {$\emptyset$};

      \node[unode] (p2) at ($(r1) + (1.5*\xsep, 0)$) {$p_2$};
      \node[prio] (priop2) at ($(p2) + (0, .6)$) {$\mathbf{\underline{\color{blue} 1}}$};
      \node[prio] (priosp2) at ($(p2) + (0, -.6)$) {$\{0\}$};

      \node[rnode] (r2) at ($(p2) + (\xsep, -\ysep)$) {$r_2$};
      \node[prio] (prior2) at ($(r2) + (-.6, 0)$) {$0$};
      \node[prio] (priosr2) at ($(r2) + (.65, 0)$) {$\{\mathbf{\underline{\color{blue} 1}}\}$};

      \node[unode] (p3) at ($(r2) + (-\xsep, -\ysep)$) {$p_3$};
      \node[prio] (priop3) at ($(p3) + (0, .6)$) {$\mathbf{\underline{\color{blue} 2}}$};
      \node[prio] (priosp3) at ($(p3) + (0, -.6)$) {$\{1\}$};

      \node[rnode] (r3) at ($(p1) + (\xsep, -\ysep)$) {$r_3$};
      \node[prio] (priosr3) at ($(r3) + (0, -.6)$) {$\{1,\mathbf{\underline{\color{blue} 2}},2\}$};
      \node[prio] (prior3) at ($(r3) + (0, .6)$) {$0$};

      \node[unode] (p4) at ($(p1) + (0, -2*\ysep)$) {$p_4$};
      \node[prio] (priop4) at ($(p4) + (-.6, 0)$) {$\mathbf{\underline{\color{blue} 1}}$};
      \node[prio] (priosp4) at ($(p4) + (.65,0)$) {$\emptyset$};

      \node[unode] (p5) at ($(p4) + (2*\xsep, 0)$) {$p_5$};
      \node[prio] (priop5) at ($(p5) + (-.6, 0)$) {$\mathbf{\underline{\color{blue} 2}}$};
      \node[prio] (priosp5) at ($(p5) + (.55,0)$) {$\emptyset$};

      \node[rnode] (r4) at ($(p5) + (0, -\ysep)$) {$r_4$};
      \node[prio] (priosr4) at ($(r4) + (.9, 0)$) {$\{\mathbf{\underline{\color{blue} 2}},2\}$};
      \node[prio] (prior4) at ($(r4) + (-.6, 0)$) {$0$};

      \node[unode] (p6) at ($(p5) + (1.5*\xsep, 0)$) {$p_6$};
      \node[prio] (priop6) at ($(p6) + (-.6, 0)$) {$\mathbf{\underline{\color{blue} 2}}$};
      \node[prio] (priosp6) at ($(p6) + (.55,0)$) {$\emptyset$};

      \node[unode] (p7) at ($(p4) + (0, -\ysep)$) {$p_7$};
      \node[prio] (priop7) at ($(p7) + (-.6, 0)$) {$\mathbf{\underline{\color{blue} 2}}$};
      \node[prio] (priosp7) at ($(p7) + (.55,0)$) {$\emptyset$};

      \draw[fedge] (p1) to[bend left=30] (r1);
      \draw[edge] (r1) to (p2);
      \draw[edge] (p2) to[bend left=30] (r2);
      \draw[edge] (r2) to[bend left=30] (p3);
      \draw[edge] (p3) to (r3);
      \draw[edge] (r3) to[bend left=30] (p1);
      \draw[edge] (p4) to[bend left=30] (r3);
      \draw[edge] (p5) to[bend right=30] (r3);
      \draw[edge] (p6) to (r4);
      \draw[edge] (p7) to[bend left=30] (r4);
    \end{tikzpicture}
  \end{minipage}
  \caption{Pseudocode of the PIP and a state of the protocol data structure. Round nodes
    represent processes and rectangular nodes resources. Nodes are
    marked with their \tw{default priorities \lstinline+def_prio+} as
    well as the aggregate priority multiset \lstinline+prios+. A
    node's \tw{current priority \lstinline+curr_prio+} is underlined and marked in bold blue.\label{fig-pip}}
  \vspace*{-1.5em}
\end{figure}

As a motivating example, we consider an idealized priority inheritance protocol
(PIP), which is a technique used in process
scheduling~\cite{DBLP:journals/tc/ShaRL90}. The purpose of the
protocol is to avoid (unbounded) priority inversion, i.e., a situation
where a low-priority process blocks a high-priority process from making progress.
The protocol maintains a bipartite graph with nodes representing
processes and resources. An example graph is shown in
Fig.~\ref{fig-pip}. An edge from a process $p$ to a resource $r$
indicates that $p$ is waiting for $r$ to become available whereas an
edge in the other direction means that $r$ is currently held by
$p$. Every node has an associated default priority as well as a
current priority, both of which are natural numbers. The
current priority affects scheduling decisions. When a process attempts
to acquire a resource currently held by another process, the graph is
updated to avoid priority inversion. For example, when process $p_1$
with current priority $3$ attempts to acquire the resource $r_1$ that
is held by process $p_2$ of priority $2$, then $p_1$'s higher priority
is propagated to $p_2$ and, transitively, to any other process that
$p_2$ is waiting for ($p_3$ in this case). As a result, all nodes on
the created cycle will be updated to current priority $3$\footnote{The
  algorithm can then detect the cycle to prevent a deadlock, but this is not the concern of this data structure.}. The
protocol thus maintains the following
\emph{invariant}: % invariant that
the current priority of each node is the maximum of its default
priority and the current priorities of all its predecessors. Priority
propagation is implemented by the method \lstinline+update+ shown in
Fig~\ref{fig-pip}. The implementation represents graph edges by
\lstinline+next+ pointers and handles both kinds of modifications to
the graph: adding an edge (\lstinline+acquire+) and removing an edge
(\lstinline+release+ - code omitted). To recalculate the current
priority of a node (line~\ref{line:pip-max-calc}), each node maintains
its default priority \lstinline+def_prio+ and a multiset
\lstinline+prios+ which contains the priorities of all its immediate
predecessors.% as well as its own default priority.

Verifying that the PIP maintains its invariant using established
separation logic (SL) techniques is challenging. In general, SL assertions
describe resources and express the fact that the program has
permission to access and manipulate these resources. \tw{In what follows, w}e stick to the
standard model of SL where resources are memory regions represented as
partial heaps. \tw{We sometimes view partial heaps more
abstractly as partial graphs (hereafter, simply graphs)}. Assertions describing larger regions are built from
smaller ones using \emph{separating conjunction}, $\phi_1 *
\phi_2$. Semantically, the $*$ operator is tied to a notion of
resource composition defined by an underlying \emph{separation
  algebra}~\cite{DBLP:conf/lics/CalcagnoOY07,
  DBLP:conf/aplas/CaoCA17}. In the standard model, composition enforces
that $\phi_1$ and $\phi_2$ must describe disjoint regions. The logic
and algebra are set up so that changes to the region $\phi_1$ do not
affect $\phi_2$ (and vice versa). That is, if $\phi_1 * \phi_2$ holds
before the modification and $\phi_1$ is changed to $\phi_1'$, then
$\phi_1' * \phi_2$ holds afterwards. This so-called \emph{frame rule}
enables modular reasoning about modifications to the heap and extends
well to the concurrent setting when threads operate on disjoint
portions of memory~\cite{DBLP:journals/siglog/BrookesO16, Dodds:2016:VCS:2866613.2818638,
  DBLP:conf/esop/RaadVG15, DBLP:conf/aplas/DockinsHA09}. However, the
mere fact that $\phi_2$ is preserved by modifications to $\phi_1$ does
not guarantee that if a global property such as the PIP invariant holds
for $\phi_1 * \phi_2$, it also still holds for $\phi_1' *
\phi_2$.

For example, consider the PIP scenario depicted in
Fig.~\ref{fig-pip}. If $\phi_1$ describes the subgraph containing only
node $p_1$, $\phi_2$ the remainder of the graph, and $\phi_1'$ the
graph obtained from $\phi_1$ by adding the edge from $p_1$ to $r_1$,
then the PIP invariant will no longer hold for the new composed graph
described by $\phi_1' * \phi_2$. On the other hand, if $\phi_1$
captures $p_1$ and the nodes reachable from $r_1$ (i.e., the set of nodes modified by %footprint of
\lstinline+update+), $\phi_2$ the remainder of the graph, and we
reestablish the PIP invariant locally in $\phi_1$ obtaining $\phi_1'$
(i.e., run \lstinline+update+ to completion), then $\phi_1' * \phi_2$
will also globally satisfy the PIP invariant. The separating conjunction
$*$ is not sufficient to differentiate these two cases; both describe valid
partitions of a possible program heap.
%\tw{In essence, the implicit
%  interface of a partial graph that determines whether
 As a consequence, prior techniques have
to revert back to non-local reasoning to prove that the invariant is
maintained.

\tw{
A first helpful idea towards a solution of this problem is that of \emph{iterated separating
  conjunction}~\cite{Yang01ShorrWaite, DBLP:conf/cav/0001SS16}, which
describes a graph $G$ consisting of a set of nodes $X$ by a
formula $\Psi = \Sep_{x \in X} \nodePred(x)$ where $\nodePred(x)$ is
some predicate that holds locally for every node $x \in X$. Using such
node-local conditions one can naturally express non-inductive
properties of graphs (e.g. \emph{``$G$ has no outgoing edges''} or
\emph{``$G$ is bipartite''}). The advantage of this style of specification is two-fold. First, one can arbitrarily decompose and
recompose $\Psi$ by splitting $X$ into disjoint subsets. For example,
if $X$ is partitioned into $X_1$ and $X_2$, then $\Psi$ is equivalent
to $\Sep_{x \in X_1} \nodePred(x) * \Sep_{x \in X_2}
\nodePred(x)$. Moreover, it is very easy to prove that $\Psi$ is preserved under
modifications of subgraphs. For instance, if a program modifies the subgraph induced by
$X_1$ such that $\Sep_{x \in X_1} \nodePred(x)$ is preserved locally, then
the frame rule guarantees that $\Psi$ will be preserved in
the new larger graph. Iterated separating conjunction thus yields a simple
proof technique for local reasoning about graph properties that can be
described in terms of node-local conditions. However, this idea alone does not actually solve our problem because
general global graph properties such as \emph{``$G$ is a direct acyclic graph''},
\emph{``$G$ is an overlay of multiple trees''}, or \emph{``$G$ satisfies the PIP invariant''} cannot be directly described this way.

% In this paper we adapt this proof technique to reason about general global
% graph properties such as \emph{``$G$ is a direct acyclic graph''},
% \emph{``$G$ is an overlay of multiple trees''}, or \emph{``$G$
%   satisfies the PIP invariant''}.
% %
% To this end we need to answer two interrelated questions: (1) how can
% we encode global graph properties in terms of node-local conditions;
% and (2) what is the right separation algebra underlying iterated
% separating conjunction and what is the accompanying notion
% of interface of a partial graph that enables one to prove locally that
% global graph properties are preserved by using separation-logic-style framing?

\paragraph{Solution.}
The key ingredient of our approach is the concept of a \emph{flow}
  of a graph: a function $\flow$ from the nodes of the graph to \emph{flow values}.
% In this paper we answer the question: What %is the
%  separation algebra
% %that
%  allows us to reason modularly about the effects of local changes
% on global properties of graphs? We consider a general class of global
% graph properties that \sk{can be expressed in terms of \emph{flows} --
%   functions from nodes of the graph to values}.
For the PIP, the flow maps
each node to the multiset of its incoming priorities. In
general, a flow is a fixpoint of a set of algebraic equations induced
by the graph. These equations are defined over a \emph{flow domain},
which determines how flow values are propagated along the edges
of the graph and how they are aggregated at each node. In the PIP example,
an edge between nodes $(n, n')$ propagates the multiset containing $\max(\flow(n),n.\mathtt{def\_prio})$ from $n$ to $n'$. The multisets arriving at $n'$ are aggregated with multiset union to obtain $\flow(n')$.
Flows enable capturing global graph properties in terms of node-local conditions. For example, the PIP invariant can be expressed by the following node-local condition:
$n.\mathtt{curr\_prio} = \max(\flow(n),n.\mathtt{def\_prio})
%  \enspace.
%$
$.
To enable compositional reasoning about such properties we need an appropriate separation algebra allowing us to prove locally that modifications to a
subgraph do not affect the flow of the remainder of the graph.

To this end, we make the useful observation that a separation algebra
induces a notion of an \emph{interface of a resource}: we say that two
resources $a$ and $a'$ are equivalent if they compose with the same
resources. The interface of a resource $a$ is then given by $a$'s
equivalence class.  In the standard model of SL where resources are
graphs and composition is disjoint graph union, the interface of a graph $G$ is the
set of all graphs $G'$ that have the same domain as $G$.
% That is, in this model we can
% identify $G$'s interface with $G$'s domain.

The interfaces of resources described by assertions capture the information that is implicitly communicated when these assertions are conjoined by
separating conjunction. As we discussed earlier, in the standard model of SL, this
information is too weak to enable local reasoning about global
properties of the composed graphs because some additional information
about the subgraphs' structure other than which nodes they contain
must be communicated. For instance, if the goal is to verify the PIP
invariant, the interfaces must capture information about the multisets of
priorities propagated between the subgraphs. We define a separation algebra achieving exactly this: the induced \emph{flow interface} of a graph $G$ in this separation algebra captures how values of the flow domain must enter and leave
$G$ such that, when composed with a compatible
graph $G'$, the imposed local conditions on the flow of each node are satisfied
in the composite graph.

This is the key to enabling SL-style framing for global graph properties.
Using iterated separating conjunctions over the new separation algebra, we obtain a compositional proof technique that yields succinct proofs of programs such as the PIP, whose proofs with existing techniques would involve non-trivial global reasoning steps.

}

% interface of a partial graph $G_1$ that
% abstracts from $G_1$'s internal structure while capturing enough
% information to prove, using separation-logic-style framing, that global
% properties of a larger graph containing $G$ are maintained by
% modifications of $G_1$?}

\paragraph{Contributions.}
In \rSc{sec-framework}, we present mathematical foundations for
flow domains, imposing minimal requirements on the underlying algebra
that allow us to capture a broad range of data structure
invariants and graph properties, and reason locally about them in a
suitable separation algebra. Building on this theory we develop a general proof technique for
modular reasoning about global graph properties that can be integrated
with existing separation logics \tw{provided they support iterated separating
conjunction based on the standard heap separation
algebra}~(\rSc{sec-proof-technique}).
% \tw{The key idea here is to split
% the specification into a conjunction of two iterated separating
%   conjunctions. The first one, expressed in terms of the actual separating
%   conjunction of the logic, abstracts a heap region by a collection of
%   singleton flow interfaces. The second one, expressed in the pure
%   part of the logic in terms of
%   our flow separation algebra, composes
%   these node-level flow interfaces to a single flow interface of the
%   abstracted heap region.}
We further identify general mathematical conditions
that guarantee unique flows and provide local proof arguments to check
the preservation of these conditions~(\rSc{sec-ea}).
 We
demonstrate the versatility of our approach by presenting local
proofs for two challenging examples: the
PIP and the concurrent
non-blocking list due to Harris~\cite{DBLP:conf/wdag/Harris01}.

\iffalse
We further devise a strategy for automating this technique using
SMT-based verification tools (\rSc{sec-automation}). We have
implemented this strategy on top of the verification tool
Viper~\cite{DBLP:series/natosec/0001SS17} and applied it successfully
to a variety of challenging benchmarks (\rSc{sec-evaluation}).  These
include 1) algorithms involving general graphs such as the PIP and
Dijkstra's algorithm, 2) inductive structures such as linked lists and
B trees, 3) overlaid structures such as the Harris list with
draining~\cite{DBLP:conf/wdag/Harris01} and threaded trees, and 4) OO
design patterns such as Composite and Subject/Observer. We are not
aware of any other approach that can handle these examples with the
same degree of simplicity or automation.
\fi

\paragraph{Flows Redesigned.}
Our work is inspired by the recent flow framework explored by some of
the authors~\cite{DBLP:journals/pacmpl/KrishnaSW18}. We revisit the
core algebra behind flows reasoning, and derive a different algebraic
foundation by analysing the minimal requirements for general local
reasoning; we call our newly-designed reasoning framework the \emph{\framework}.
Our new mathematical foundation makes several
significant improvements over~\cite{DBLP:journals/pacmpl/KrishnaSW18} and eliminates its
most stark limitations.
First, we present a simplified and generalized meta theory of flows
that makes the approach much more broadly applicable. \tw{For example, the original
framework cannot reason locally about certain graph updates such as
removing an edge that breaks a cycle (which can happen in the
PIP). Our new framework provides an elegant solution to this problem
by requiring that the aggregation operation on flow values is
cancellative (see \rSc{sec-flows}). This requirement is
fundamentally incompatible with the algebraic foundation of the
original framework, thus necessitating our new development. We show
that requiring cancellativity does not limit expressivity. Moreover, the new framework is much more convenient to
use because, unlike the original framework, it imposes no restrictions on how flow values are propagated along
edges in the graph.}
Next, the proofs of programs shown
in~\cite{DBLP:journals/pacmpl/KrishnaSW18} %are only on paper and
depend on a bespoke program logic. This logic requires new reasoning
primitives that are not supported by the logics implemented in
existing SL-based verification tools. Our general proof technique
eliminates the need for a dedicated program logic and can be implemented
on top of standard separation logics and existing SL-based
tools.
Finally, the underlying separation
algebra of the original framework makes it hard to use equational
reasoning, which is a critical prerequisite for enabling proof
automation.
%We demonstrate that the simplified separation algebra
%paves the way for automated reasoning about flows using SMT solvers.
%
We provide a more detailed technical comparison
to~\cite{DBLP:journals/pacmpl/KrishnaSW18} and other related work
in~\S\ref{sec-related}.

%%% Local Variables:
%%% mode: latex
%%% TeX-master: "writeup"
%%% End:

%\section{Flows from First Principles}\label{sec-principles}
%\input{principles}
%\section{The \FRAMEWORK}\label{sec-framework}
%\input{framework}
\section{The \FRAMEWORK}\label{sec-principles}\label{sec-framework}
In this section, we introduce the \framework, explaining the motivation for its design with respect to local reasoning principles. We aim for a general technique for modularly proving the preservation of recursively-defined invariants over (partial) graphs, with well-defined decomposition and composition operations.
% \as{Partial graphs (hereafter, simply graphs), enriched with flow
%   information according to our framework, form a separation algebra
%   (\secref{framework}), allowing our technique to integrate seamlessly
%   with a variety of separation-logic-based proof techniques, as we
%   will show in \secref{notsurewhich}.}\asfootnote{This may be
%   redundant with respect to the intro, as Thomas commented for a
%   similar sentence last time around. However, I found the prior
%   sentence here (about advantages of partialness) a bit
%   unclear/imprecise. We could also shorten.}
%We begin with some basic definitions and notations that we use in the rest of this paper.

\subsection{Preliminaries and Notation}
\label{sec-preliminaries}

The term $\ite{b}{t_1}{t_2}$ denotes $t_1$ if condition $b$ holds and $t_2$ otherwise.
We write $f\colon A \to B$ for a function from $A$ to $B$, and $f \colon A \pto B$ for a partial function from $A$ to $B$.
For a partial function $f$, we write $f(x) = \bot$ if $f$ is undefined at $x$.
We use lambda notation $\lambdaFn{x}{E}$ to denote a function that maps $x$ to the expression $E$ (typically containing $x$).
If $f$ is a function from $A$ to $B$, we write $f[x \goesto y]$ to denote the function from $A \cup \set{x}$ defined by $f[x \goesto y](z) \defeq \ite{z = x}{y}{f(z)}$.
We use $\set{x_1 \goesto y_1, \dotsc, x_n \goesto y_n}$ for pairwise different $x_i$ to denote the function $\emptyFn[x_1 \goesto y_1]\dotsm[x_n \goesto y_n]$, where $\emptyFn$ is the function on an empty domain.
Given functions $f_1 \colon A_1 \to B$ and $f_2 \colon A_2 \to B$ we write $f_1 \uplus f_2$ for the function $f \colon A_1 \uplus A_2 \to B$ that maps $x \in A_1$ to $f_1(x)$ and $x \in A_2$ to $f_2(x)$ (if $A_1$ and $A_2$ are not disjoint sets, $f_1 \uplus f_2$ is undefined).

We write $\deltaFn{n}{n'} \colon \mDom \to \mDom$ for the function defined by $\deltaFn{n}{n'}(\mVar) \defeq \mVar$ if $n = n'$ else $\mZero$.
We also write $\zeroFn \defeq \lambdaFn{\mVar}{\mZero}$ for the identically zero function, $\identityFn \defeq \lambdaFn{\mVar}{\mVar}$ for the identity function, and use $e \equiv e'$ to denote function equality.
For $e \colon \mDom \to \mDom$ and $\mVar \in \mDom$ we write $\mVar \pipe e$ to denote the function application $e(\mVar)$.
We write $e \fnComp e'$ to denote function composition, i.e. $(e \fnComp e')(\mVar) = e(e'(\mVar))$ for $\mVar \in \mDom$, and use superscript notation $e^p$ to denote the function composition of $e$ with itself $p$ times.

For multisets $S$, we use standard set notation when clear from the context. We write $S(x)$ to denote the number of occurrences of $x$ in $S$.
We write $\set{x_1 \goesto i_1, \dotsc, x_n \goesto i_n}$ for the multiset containing $i_1$ occurrences of $x_1$, $i_2$ occurrences of $x_2$, etc.

  A \emph{partial monoid} is a set $\mDom$, along with a partial binary operation $\mOp \colon \mDom \times \mDom \pto \mDom$, and a special zero element $\mZero \in \mDom$, such that
  \begin{enumerate*}[label=(\arabic{enumi})]
  \item $\mOp$ is associative, i.e., $(\mVar_1 \mOp \mVar_2) \mOp \mVar_3 = \mVar_1 \mOp (\mVar_2 \mOp \mVar_3)$; and
  \item $\mZero$ is an identity, i.e., $\mVar \mOp \mZero = \mZero \mOp \mVar = \mVar$.
  \end{enumerate*}
  Here, $=$ means either both sides are defined and equal, or both are undefined.
We identify a partial monoid with its support set $\mDom$.
If $\mOp$ is a total function, then we call $\mDom$ a monoid.
Let $\mVar_1, \mVar_2, \mVar_3 \in \mDom$ be arbitrary elements of the (partial) monoid in the following.
We call a (partial) monoid $\mDom$ \emph{commutative} if $\mOp$ is commutative, i.e., $\mVar_1 \mOp \mVar_2 = \mVar_2 \mOp \mVar_1$.
Similarly, a commutative monoid $\mDom$ is \emph{cancellative} if $\mOp$ is cancellative, i.e., if $\mVar_1 \mOp \mVar_2 = \mVar_1 \mOp \mVar_3$ is defined, then $\mVar_2 = \mVar_3$.

%\begin{definition}[Separation Algebra~\cite{DBLP:conf/lics/CalcagnoOY07}]
  A \emph{separation algebra}~\cite{DBLP:conf/lics/CalcagnoOY07} is a cancellative, partial commutative monoid.
%\end{definition}

\subsection{Flows}
\label{sec-flows}

\as{Recursive properties of graphs naturally depend on non-local information; \eg{} we cannot express that a graph is acyclic directly as a conjunction of per-node invariants.
Our \framework{} \as{captures non-local graph properties by defining \emph{flow values} at each node; its entire theory }
is parametric with the choice of a \emph{flow domain}, whose components will be explained and motivated in this section.
We use \emph{two} running examples of graph properties to illustrate our explanations in this section:
\begin{enumerate}
\item Firstly, we consider \emph{path-counting}, defining a flow domain whose flow values at each node represent the number of paths to this node from a distinguished node $n$.
Path-counting provides enough information to express locally per node that e.g.~(a) all nodes are reachable from $n$ (the path count is non-zero), or (b) that the graph forms a tree rooted at $n$ (all path counts are exactly $1$).
\item Secondly, we use the PIP (\rF{fig-pip}), defining flows with which we can  locally capture the appropriate current node priorities as the graph is modified.
\end{enumerate}}

\begin{definition}[Flow Domain]
  A \emph{flow domain} $(\mDom, \mOp, \mZero, \edgeDom)$ consists of
  a commutative cancellative (total) monoid $(\mDom, \mOp, \mZero)$ and a set of \emph{edge functions} $\edgeDom \subseteq \mDom \to \mDom$.
\end{definition}
%\as{For ease of reference in the remainder of this section, we provide some immediate examples; their details will be explained below.}
\begin{example}
  \label{ex-fd-path-count}
  The flow domain used for the path-counting flow is $(\Nat, +, 0, \set{\identityFn, \zeroFn})$, consisting of the monoid on natural numbers under addition and the set of edge functions containing only the identity function and the zero function.
\end{example}

\begin{example}
  \label{ex-fd-pip}
We use $(\Nat^{\Nat}, \cup, \emptyset, \set{\zeroFn} \cup \set{\lambdaFn{m}{\set{\max(m\cup\set{p})}} \mid p{\in}\Nat})$\asfootnote{Renamed flow domain elements to $m$ and natural numbers to $p$, here.} as flow domain for the PIP example. This consists of the monoid of multisets of natural numbers under multiset union and two kinds of edge functions: $\zeroFn$ and functions mapping a multiset $m$ to the singleton multiset containing the maximum value between $m$ and a fixed value $p$ (representing a node's default priority).
\end{example}
\as{As explained below, edge functions are used to determine which flow values are propagated from node to node around the graph.
For further definitions in this section we will assume a fixed} flow domain $(\mDom, \mOp, \mZero, \edgeDom)$ and a (potentially infinite) set of nodes $\nodeDom$. \as{For this section, we abstract heaps using directed partial graphs; integration of our graph reasoning with direct proofs over program heaps is explained in \secref{proof-technique}}.
\begin{definition}[Graph]
  A \emph{(partial) graph} $\graph = (N, \edgeFn)$ consists of a finite set of nodes $N \subseteq \nodeDom$ and a mapping from pairs of nodes to edge functions $\edgeFn \colon N \times \nodeDom \to \edgeDom$.
\end{definition}
\paragraph{Flow Values and Flows}
\as{Flow values (taken from $\mDom$; the first element of a flow domain) are used to capture sufficient information to express desired non-local properties of a graph. In Example~\ref{ex-fd-path-count}, flow values are non-negative integers; for the PIP (Example~\ref{ex-fd-pip}) we instead use \emph{multisets} of integers, representing relevant \emph{non-local} information: the priorities of nodes currently referencing a given node in the graph. Given such flow values, a node's correct priority can be defined locally per node in the graph. This definition requires only the \emph{maximum} value of these multisets, but as we will see shortly these multisets enable local \emph{recomputation} of a correct priority when the graph is changed.}

\as{For a graph $\graph = (N, \edgeFn)$} we express properties of $\graph$ in terms of node-local conditions that may depend on the nodes' \emph{flow}. A flow is a function $\flow: N \to \mDom$ assigning every node a flow value and must be some fixpoint of the following \emph{flow equation}:
\begin{align}
%  \label{eq-flow}
  \label{eqn-flow-equation}
  \forall n \in N.\; \flow(n) = \inflow(n) \mOp \kern-3pt \mBigOp_{n' \in N} \flow(n') \pipe \edgeFn(n', n)
    \tag{\textsf{FlowEqn}}
\end{align}
Intuitively, one can think of the flow as being obtained by a fold computation over the graph\footnote{We note that flows are not generally defined in this manner as we consider any fixpoint of the flow equation to be a flow. \as{Nonetheless}, the analogy helps to build \as{an initial} intuition.}: the \emph{inflow} $\inflow: N \to \mDom$ defines an initial flow at each node. This initial flow is then updated recursively for each node $n$: the current flow value at its predecessor nodes $n'$ is transferred to $n$ via \emph{edge functions} $e(n',n): \mDom \to \mDom$
% TW: the following notation is already defined in the preliminaries
% (we use $\pipe$ to denote function application where the function is on the \emph{right}).
These flow values are aggregated using the \emph{summation operation} $+$ of the flow domain to obtain an updated flow of $n$\as{; a flow for the graph is some fixpoint satisfying this equation at all nodes.}

\begin{example}
  \twfootnote{I think this example needs to be moved further down since we haven't yet talked about the edge functions for the PIP graph.}
  Consider the graph in \figref{pip}; if the flow domain is as
  in \rE{ex-fd-pip}, the inflow function $\inflow$ assigns \as{the empty multiset to every node $n$}
  and we let $\flow(n)$ be
  the multiset labelling every node in the figure, then $\flowEqn(\inflow, \edgeFn,
  \flow)$ holds.
\end{example}

\begin{definition}[Flow Graph]
  A \emph{flow graph} $\fGraph = (N, \edgeFn, \flow)$ is a graph $(N, \edgeFn)$ and function $\flow \colon N \to \mDom$ such that there exists an \emph{inflow} $\inflow \colon N \to \mDom$ satisfying $\flowEqn(\inflow, \edgeFn, \flow)$.
\end{definition}
We let $\dom(\fGraph) = N$, and sometimes identify $\fGraph$ and $\dom(\fGraph)$ to ease notational burden.
For $n \in \fGraph$ we write $\fGraph_n$ for the singleton flow subgraph of $\fGraph$ induced by $n$.

\paragraph{Edge Functions}
\tw{In any flow graph, the flow value assigned to a node $n$ by a flow is propagated to its neighbours $n'$ (and transitively) according to the edge function $\edgeFn(n,n')$ labelling the edge $(n,n')$. The edge function maps} the flow value at the \emph{source node} $n$ to one propagated on \as{\emph{this edge}} to the \emph{target node} $n'$. Note that we require such a labelling for \emph{all} pairs consisting of a source node $n$ inside the graph and a target node \tw{$n' \in \nodeDom$ (i.e.,} possibly outside the graph). %\asout{In addition, we provide a convenient default case.}
\as{The $0$ flow value (the second element of our flow domains)} is used to represent no flow; the corresponding (constant) zero \emph{function} $\zeroFn = \lambdaFn{\mVar}{\mZero}$ \as{is used as edge function to model the \emph{absence}} of an edge in the graph\footnote{We will sometimes informally refer to \emph{paths} in a graph as meaning sequences of nodes for which no edge function labelling a consecutive pair in the sequence is the zero function $\zeroFn$.}.
% We write $\edgeFn(n,n')$ for the edge function labelling the pair $(n,n')$.
A set of edge functions $\edgeDom$ from which this labelling is chosen \as{can, other than the requirement $\zeroFn\in\edgeDom$, be chosen as desired. As} we will see in \secref{unique-fixpoints}, restrictions to \as{particular sets of edge functions} $\edgeDom$ can be exploited to \as{further} strengthen our overall technique.

For our PIP example, we choose the edge functions to be $\zeroFn$ where no edge exists in the PIP structure, and otherwise $\lambdaFn{X}{\set{\max(X \cup \set{m})}}$ where $m$ is the default priority of the source of the edge. For example, in \rF{fig-pip}, $\edgeFn(r_3, p_2) = \zeroFn$ and \tw{$\edgeFn(r_3, p_1) = \lambdaFn{X}{\set{\max(X \cup \set{0})}}$}.
Since the flow value at $r_3$ is $\set{1, 2, 2}$, the edge $(r_3, p_1)$ propagates the value $\set{2}$ to $p_1$\as{, correctly representing the current priority of $r_3$.} \as{Edge functions can depend on the local state of the source node (\eg{} default priorities here); dependencies from elsewhere in the graph must be represented by the node's flow.}
%In the path-counting example, the edge functions would be identity
%functions (edge present) and zero functions (edge absent). %For last-edge path counting, the non-zero edge functions will map empty multisets to singletons containing the source node, and otherwise return a multiset with the same input cardinality, containing only the source node.
%\sk{For inverse reachability, the (non-zero) edge function on the edge $(n, n')$ maps a multiset of sets $T$ to a new multiset $T'$ containing $S %\cup \set{n}$ for every $S \in T$ such that $n \not\in S$.
%Note, $T'$ does not contain sets $S \in T$ that contain $n$, because such sets correspond to cyclic paths in the graph, while this particular domain tracks only simple paths.}

\paragraph{Flow Aggregation and Inflows}
The flow value at a node is defined by those propagated to it \as{from each node in a graph} via edge functions, along with an additional \emph{inflow} value, explained here. Since multiple \as{non-zero flow values can be propagated to a single node, we require an} aggregation of these values, \as{for} which a binary $\mplus$ operator on flow values must be defined\as{:} the third element of our flow domains. \as{We require $\mplus$ to be commutative and associative, making this aggregation order-independent}. The $0$ flow value \techreport{(representing no flow) }must act as a unit \as{for} $\mplus$. For example, in the path-counting flow domain $\mplus$ means addition on natural numbers, while for the multisets employed for the PIP it means multiset union.

Each node in a flow graph has an \emph{inflow}, modelling contributions to its flow value which do \emph{not} come from inside the graph. Inflows plays two important roles: first, since our graphs are partial, they model contributions from nodes \emph{outside of the graph}. Second, inflow can be artificially added as a means of \as{specialising the computation of flow values to characterise specific graph properties. For example, in the path-counting domain, we give an inflow of $1$ to the node from which we are counting paths, and $0$ to all others.}
%
%For example, in our PIP domain, the inflow of each node will be the singleton multiset containing the node's default priority\asfootnote{Is this true any more?}. In our path counting domain, we can select the distinguished root node $n$ by giving it an inflow of $1$; we could also do this for multiple nodes, to count paths from each.% In the inverse reachability domain, we can employ multisets containing a single empty set, forcing paths from these nodes to be tracked by the flow computation.

The flow equation \rEq{eqn-flow-equation} defines the flow of a node $n$ to be the aggregation of flow values coming from other nodes $n'$ inside the graph (as given by the respective edge function $\edgeFn(n', n)$) as well as the inflow $\inflow(n)$. Preserving solutions to this equation \as{across updates to the graph structure} is a fundamental goal of our technique. \as{The following lemma (which relies on the fact that $+$ is required to be cancellative) states that any correct flow values uniquely determine appropriate inflow values:}
\begin{lemma}
  \label{lem-inflow-unique}
  Given a flow graph $(N, \edgeFn, \flow) \in \fGraphDom$, there exists a unique inflow $\inflow \colon N \to \mDom$ such that $\flowEqn(\inflow, \edgeFn, \flow)$.
\end{lemma}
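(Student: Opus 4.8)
The plan is to separate the statement into existence and uniqueness. Existence is immediate: by definition, a flow graph $(N, \edgeFn, \flow) \in \fGraphDom$ comes equipped with \emph{some} inflow $\inflow$ satisfying $\flowEqn(\inflow, \edgeFn, \flow)$, so nothing further is needed there. All the content lies in uniqueness, and the crux is that cancellativity of the flow-domain monoid makes the inflow recoverable from $\flow$ and $\edgeFn$.

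The key observation I would use is that in the flow equation \rEq{eqn-flow-equation} the aggregated-and-propagated term $\mBigOp_{n' \in N} \flow(n') \pipe \edgeFn(n', n)$ depends only on the fixed data $\flow$ and $\edgeFn$, and not on the inflow at all. Since $N$ is finite and the monoid operation $\mOp$ is total, this term is a well-defined element of $\mDom$; I abbreviate it as $S(n)$ for each $n \in N$. The flow equation then reads simply $\flow(n) = \inflow(n) \mOp S(n)$ for every $n \in N$.

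Now suppose $\inflow_1$ and $\inflow_2$ are two inflows each satisfying $\flowEqn(\cdot, \edgeFn, \flow)$. Fixing any $n \in N$, both yield $\flow(n) = \inflow_1(n) \mOp S(n)$ and $\flow(n) = \inflow_2(n) \mOp S(n)$, hence $\inflow_1(n) \mOp S(n) = \inflow_2(n) \mOp S(n)$. Rewriting via commutativity as $S(n) \mOp \inflow_1(n) = S(n) \mOp \inflow_2(n)$ and then invoking cancellativity, I obtain $\inflow_1(n) = \inflow_2(n)$. As $n$ was arbitrary, the two inflows agree on all of $N$, i.e. $\inflow_1 = \inflow_2$.

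I expect no genuine obstacle here; the argument is essentially a one-line cancellation once the propagated term is isolated as the fixed quantity $S(n)$. The only points requiring care are (i) noting that totality of $\mOp$ guarantees the finite aggregate $S(n)$ is always defined, so no well-definedness side-condition arises, and (ii) recognising that cancellativity is exactly the hypothesis driving the proof --- this is precisely the property the surrounding text highlights as the reason the flow domain is required to be cancellative.
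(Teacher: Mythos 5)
Your proposal is correct and matches the paper's proof essentially verbatim: the paper likewise equates the two instances of the flow equation and cancels the common aggregate term $\mBigOp_{n' \in N} \flow(n') \pipe \edgeFn(n', n)$ using cancellativity of $\mOp$. Your added remarks on existence (immediate from the definition of a flow graph) and on totality/commutativity are fine clarifications but introduce no new idea.
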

\techreport{
\begin{proof}
  Suppose $\inflow$ and $\inflow'$ are two solutions to $\flowEqn(\_, \edgeFn, \flow)$.
  Then, for any $n$,
  \[
    \flow(n) = \inflow(n) \mOp \mBigOp_{n' \in \dom(\inflow)} \flow(n') \pipe \edgeFn(n', n)
    = \inflow'(n) \mOp \mBigOp_{n' \in \dom(\inflow')} \flow(n') \pipe \edgeFn(n', n)
  \]
  which, by cancellativity of the flow domain, implies that $\inflow(n) = \inflow'(n)$.
\end{proof}
}
We now turn to how \as{solutions of the flow equation} can be \as{preserved or appropriately updated under \emph{changes} to the underlying graph}.

\skfootnote{Should we start the next subsection ``Flow Graph Composition and Abstraction'' at this point?}

\paragraph{Graph Updates and Cancellativity}
\as{Given a flow graph with known flow and inflow values, suppose we \emph{remove} an edge from $n_1$ to $n_2$ (replacing the edge function with $\zeroFn$). For the same inflow, such an update will potentially affect the flow at $n_2$ and nodes to which $n_2$ (transitively) propagates flow.
Starting from the simple case that $n_2$ has no outgoing edges, we need to recompute a suitable flow at $n_2$. Knowing the old flow value (say, $m$) and the contribution $m' = \flow(n_1) \pipe \edgeFn(n_1, n_2)$ \emph{previously} provided along the removed edge, we know that the correct new flow value is some $m''$ such that $m' + m'' = m$. This constraint has a unique solution (and thus, we can unambiguously recompute a new flow value) exactly when the aggregation $+$ is \emph{cancellative}; we therefore \twout{make} \tw{made} cancellativity a \emph{requirement} on the $+$ of any flow domain.}

\as{Cancellativity intuitively enforces that the flow domain carries enough information to enable adaptation to local updates (in particular, removal of edges\footnote{As we will show in \secref{flow-interfaces}, an analogous problem for composition of flow graphs is also directly solved by this choice to force aggregation to be cancellative.}). Returning to the PIP example, cancellativity requires us to carry multisets as flow values rather than only the maximum priority value: $+$ cannot be a maximum operation, as this would not be cancellative. The resulting multisets (similarly to the \lstinline+prio+ fields in the actual code) provide the information necessary to recompute corrected priority values locally.
}
%
%knowing locally how to update the flow at $n_2$ requires that we can uniquely
%
%Consider that we take a flow graph along with a correct inflow, and obtain a modified graph different only in that a single pair of nodes $(n_1,n_2)$ has a different edge function. We are concerned with the question of whether and how we can change the flow values in the new graph (keeping the inflow unchanged) to satisfy the flow equation.
%
%Consider first the simple case that the target $n_2$ of the
%modified edge propagates no flow via edge functions ($\edgeFn(n_2,n') =
%\zeroFn$ for all $n'$); it may however receive additional flow from edge functions $\edgeFn(n'', n_2)$ coming from nodes $n''$ other than $n_1$.
For example, in the PIP graph shown in \rF{fig-pip}, removing the edge from $p_6$ to $r_4$% (i.e. setting it to the zero function $\zeroFn$)
\as{would} not affect the current priority of $r_4$ whereas if $p_7$ had current priority $1$ instead of $2$, then the current priority of $r_4$ would have to decrease. % For example, in the path-counting\asfootnote{TODO: check that we write this consistently this way} domain, changing $\edgeFn(n_1,n_2)$ from the identity function $\identityFn = \lambda m. m$ to the zero function $\zeroFn = \lambda m. 0$ may require the flow at $n_2$ to decrease. This adjustment must reflect the \emph{number of paths} which depended on this edge;
%For this reason, our PIP Domain aggregates multisets of incoming flow values, rather than having $\mPlus$ simply collapse these to their maximum. The multisets contain enough information to locally adjust the flow value when an edge is removed from the graph, whereas if we knew only the maximum and removed an edge $(p, r)$ which provided exactly this value, we could not decide whether or not to decrease the flow value of $r$ without some knowledge of all of $r$'s incoming edges.
\as{In either case, recomputing} the flow value for $r_4$ is simply a matter of subtraction (removing $\{2\}$ from the multiset at $r_4$); \as{cancellativity guarantees that our flow domains will always provide the information needed for this recomputation.}
%
%this exploits the property that this flow domain is \emph{cancellative} with respect to $\mPlus$, giving us a unique solution.
%\footnote{Note that cancellativity does not imply the existence of additive inverse elements in our flow domains (flow domains are monoids, but not necessarily groups). In fact, we typically employ flow domains in which no non-zero elements have such inverses.\tw{THOMAS: This footnote can also be omitted as the example of multisets makes this fact obvious.}}.
\as{Without} this property, the recomputation of a flow value for the target node $n_2$ \as{would, in general, entail recomputing the incoming flow values from all remaining edges from scratch. Cancellativity is also crucial for Lemma \ref{lem-inflow-unique} above, forcing uniqueness of inflows, given known flow values in a flow graph. This allows us to define natural but powerful notions of flow graph decomposition and recomposition.
}

\subsection{Flow Graph Composition and Abstraction}
\label{sec-flow-interfaces}

\as{Building towards the core of our reasoning technique, we now turn to the question of decomposition and recomposition of flow graphs.}
Two flow graphs with disjoint domains always compose to a graph, but this will only be a \emph{flow graph} if their flows are chosen consistently to admit a solution to the resulting flow equation
(i.e.~the flow graph composition operator defined below is \emph{partial}).

\begin{definition}[Flow Graph Algebra]
  \label{def-flow-graphs}
  The \emph{flow graph algebra} $(\fGraphDom, \fGraphComp, \fGraphEmpty)$ for the flow domain $(\mDom, \mOp, \mZero, \edgeDom)$ is defined by
  \begin{align*}
    \fGraphDom
    &\defeq \setcomp{
      (N, \edgeFn, \flow)}
      {(N, \edgeFn, \flow) \text{ is a flow graph}} \\
    (N_1, \edgeFn_1, \flow_1) \fGraphComp (N_2, \edgeFn_2, \flow_2)
    &\defeq
      \begin{cases}
        \fGraph
        & \fGraph = (N_1 \uplus N_2, \edgeFn_1 \uplus \edgeFn_2, \flow_1 \uplus \flow_2) \in \fGraphDom \\
        \bot & \text{otherwise}
      \end{cases} \\
    \fGraphEmpty &\defeq (\emptyset, \edgeFnEmpty, \flowEmpty)
  \end{align*}
  where $\edgeFnEmpty$ and $\flowEmpty$ are the edge functions and flow on the empty set of nodes $N = \emptyset$.
  We use $\fGraph$ to range over $\fGraphDom$.
\end{definition}
\as{Intuitively, two flow graphs compose to a flow graph if their contributions to each others' flow (along edges from one to the other) are reflected in the corresponding inflow of the other graph. For example, consider the subgraph from \rF{fig-pip} consisting of the single node $p_7$ (with $0$ inflow). This will compose with the remainder of the graph depicted only if this remainder subgraph has an inflow which, at node $r_4$ includes at least the multiset $\{2\}$, reflecting the propagated value from $p_7$.}

We use this \as{intuition} to extract an \emph{abstraction} of flow graphs which we call \emph{flow interfaces}.
Given a flow (sub)graph, its flow interface consists of the node-wise inflow and \emph{outflow} (being the flow contributions its nodes make to all nodes outside of the graph).
It is thus an abstraction that hides the flow values and edges \as{wholly \emph{inside} the flow graph}.
Flow graphs that have the same flow interface ``look the same'' to the external graph, as the same values are propagated inwards and outwards.

Our abstraction of flow graphs consists of two complementary notions.
\as{Recall that} \rL{lem-inflow-unique} implies that any flow graph has a unique inflow.
Thus we can define an inflow function that maps each flow graph $\fGraph = (N, \edgeFn, \flow)$ to the unique inflow $\inflowFn(\fGraph) \colon \fGraph \to \mDom$ such that $\flowEqn(\inflowFn(\fGraph), \edgeFn, \flow)$.
\as{Dually, we} define the \emph{outflow} of $\fGraph$ as the function $\outflowFn(\fGraph) \colon \nodeDom \setminus N \to \mDom$ defined by
\[ \outflowFn(\fGraph)(n) \defeq \mBigOp_{n' \in N} \flow(n') \pipe \edgeFn(n', n).\]

\begin{definition}[Flow Interface]
  A \emph{flow interface} is a tuple $\interface = (\inflow, \outflow)$ where $\inflow \colon N \to \mDom$ and $\outflow \colon \nodeDom \setminus N \to \mDom$ for some $N \subseteq \nodeDom$.
\end{definition}
Given a flow graph $\fGraph \in \fGraphDom$, its flow interface $\interfaceFn(\fGraph)$ is the tuple $\paren{\inflowFn(\fGraph), \outflowFn(\fGraph)}$ consisting of its inflow and its outflow.
\as{Returning to the previous example, if $\fGraph$ is the singleton subgraph consisting of node $p_7$ from \rF{fig-pip} with flow and edges as depicted, then $\interfaceFn(\fGraph) = (\lambda n.\emptyset, \lambda n.\ite{n{=}r_4}{\{2\}}{\emptyset})$.}

We write $\inflowOfInt{\interface}, \outflowOfInt{\interface}$ for the two components of the interface $\interface = (\inflow, \outflow)$.
We again identify $\interface$ and $\dom(\inflowOfInt{\interface})$ to ease notational burden.

This \as{abstraction}, while simple, turns out to be powerful enough to build a separation algebra over our \as{flow graphs, allowing them} to be decomposed, locally modified and recomposed in ways yielding all the local reasoning benefits of separation logics. In particular, for graph operations within a subgraph with a certain interface, we need to prove: (a) that the modified subgraph is still a flow graph (by checking that the flow equation still has a solution locally in the subgraph) and (b) that it satisfies the same interface (in other words, the effect of the modification on the flow is contained within the subgraph)\as{; the meta-level results for our technique then} justify that we can recompose the modified subgraph with any graph that the original could be composed with.

We define the \as{corresponding} \emph{flow interface algebra} as follows\asfootnote{Maybe we should add some text describing the details of this definition}:
\begin{definition}[Flow Interface Algebra]
  \label{def-flow-interfaces}
  \begin{align*}
    \interfaces
    &\defeq \setcomp{\interface}{\interface  \text{ is a flow interface}} \\
    \interfaceEmpty
    &\defeq \interfaceFn(\fGraphEmpty) \\
    \interface_1 \intComp \interface_2
    &\defeq
    \begin{cases}
      \interface
      & \interface_1 \cap \interface_2 = \emptyset \\
      & \quad {}
      \land \forall i \neq j \in \set{1, 2}, n \in \interface_i.\;
      \inflowOfInt{\interface_i}(n) = \inflowOfInt{\interface}(n) \mOp \outflowOfInt{\interface_j}(n) \\
      & \quad {}
      \land \forall n \not\in \interface.\;
      \outflowOfInt{\interface}(n) = \outflowOfInt{\interface_1}(n) \mOp \outflowOfInt{\interface_2}(n) \\
      \bot & \text{otherwise.}
    \end{cases}
  \end{align*}
\end{definition}

Flow interface composition is well defined because of cancellativity of the underlying flow domain (\as{it is also, exactly as} flow graph composition, partial).
\techreport{
The interfaces of a singleton flow graph containing $n$ capture the flow and the outflow values propagated by $n$'s edges:

\begin{lemma}
  \label{lem-singleton-interfaces}
  For any flow graph $\fGraph = (N, \edgeFn, \flow)$ and $n,n' \in N$,
  if $\edgeFn(n, n) = \zeroFn$ then
  $\inflowOfInt{\interfaceFn(\fGraph_n)}(n) = \flow(n)$ and
  $\outflowOfInt{\interfaceFn(\fGraph_n)}(n') = \flow(n) \pipe \edgeFn(n, n')$.
\end{lemma}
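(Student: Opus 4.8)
The plan is to unfold the definitions of $\inflowFn$ and $\outflowFn$ on the singleton flow graph $\fGraph_n = (\set{n}, \restrict{\edgeFn}{\set{n}\times\nodeDom}, \restrict{\flow}{\set{n}})$, observing that the hypothesis $\edgeFn(n,n) = \zeroFn$ is precisely what collapses the self-contribution in the flow equation. Throughout I use that, by definition of $\interfaceFn$, we have $\inflowOfInt{\interfaceFn(\fGraph_n)} = \inflowFn(\fGraph_n)$ and $\outflowOfInt{\interfaceFn(\fGraph_n)} = \outflowFn(\fGraph_n)$, so it suffices to compute these two functions directly.

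First I would establish the inflow claim. By \rL{lem-inflow-unique}, $\inflowFn(\fGraph_n)$ is the unique inflow satisfying \rEq{eqn-flow-equation} on $\fGraph_n$. Instantiating that equation at the only node $n$, the aggregating sum $\mBigOp_{m \in \set{n}}$ has a single summand, so it reduces to $\flow(n) = \inflowFn(\fGraph_n)(n) \mOp (\flow(n) \pipe \edgeFn(n,n))$. The hypothesis $\edgeFn(n,n) = \zeroFn$ gives $\flow(n) \pipe \edgeFn(n,n) = \mZero$, and since $\mZero$ is the unit of $\mOp$ this simplifies to $\flow(n) = \inflowFn(\fGraph_n)(n)$, as required.

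For the outflow claim I would simply unfold the definition of $\outflowFn$: for any $n' \in \nodeDom \setminus \set{n}$, the defining sum again has a single summand, so $\outflowFn(\fGraph_n)(n') = \flow(n) \pipe \edgeFn(n, n')$, giving the second equation.

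The computation is a direct unfolding, so I expect no substantial obstacle; the one point requiring care is the domain of the outflow. Since $\outflowFn(\fGraph_n)$ is defined only on $\nodeDom \setminus \set{n}$, the outflow equation is meaningful precisely when $n' \neq n$, which I would either flag as an implicit assumption of the statement or dispatch by noting that the $n' = n$ instance is vacuous. The genuine content of the lemma is the recognition that requiring $\edgeFn(n,n) = \zeroFn$ is exactly the condition that removes the self-loop term, forcing the inflow of the isolated node to coincide with its flow value.
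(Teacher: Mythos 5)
Your proof is correct and is essentially the paper's own argument: the paper's proof is the one-line remark that the lemma ``follows directly from \rEq{eqn-flow-equation} and the definition of outflow,'' which your computation—instantiating the flow equation at the single node of $\fGraph_n$, cancelling the self-loop summand via $\edgeFn(n,n) = \zeroFn$ and the unit law for $\mZero$, and reading off the one-summand outflow sum—merely spells out in full. Your caveat that the outflow equation is only meaningful for $n' \neq n$, since $\outflowFn(\fGraph_n)$ has domain $\nodeDom \setminus \set{n}$, is a fair observation about an implicit assumption the paper leaves unstated.
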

\begin{proof}
  Follows directly from \rEq{eqn-flow-equation} and the definition of outflow.
\end{proof}
}
We next show the key result for this abstraction: the ability for two flow graphs to compose depends only on their interfaces\as{;
flow} interfaces implicitly define a congruence relation on flow graphs.

\begin{lemma}
  \label{lem-interface-composition-congruence}
  $\interfaceFn(\fGraph_1) = \interface_1 \land \interfaceFn(\fGraph_2) = \interface_2 \impl \interfaceFn(\fGraph_1 \fGraphComp \fGraph_2) = \interface_1 \intComp \interface_2$.
\end{lemma}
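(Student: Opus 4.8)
The plan is to prove the claimed identity as a \emph{partial} equality, following the paper's convention: I first show that $\fGraph_1 \fGraphComp \fGraph_2$ and $\interface_1 \intComp \interface_2$ are defined under exactly the same condition, and then that, whenever defined, the two resulting interfaces coincide componentwise. Write $\fGraph_i = (N_i, \edgeFn_i, \flow_i)$ and abbreviate $\inflow_i \defeq \inflowFn(\fGraph_i)$ and $\outflow_i \defeq \outflowFn(\fGraph_i)$, so that by hypothesis $\interface_i = (\inflow_i, \outflow_i)$. Both operators require $N_1 \cap N_2 = \emptyset$, so I assume this throughout and set $N \defeq N_1 \uplus N_2$, $\edgeFn \defeq \edgeFn_1 \uplus \edgeFn_2$, and $\flow \defeq \flow_1 \uplus \flow_2$.

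The heart of the argument is to characterise exactly which inflows witness the composite as a flow graph. Fix $n \in N_1$. Splitting the aggregation $\mBigOp_{n' \in N}$ in the flow equation \rEq{eqn-flow-equation} into its parts over $N_1$ and over $N_2$, the second part is by definition of outflow exactly $\outflow_2(n)$ (using $n \notin N_2$). Comparing the resulting equation with the flow equation for $\fGraph_1$ at $n$ and cancelling the common $N_1$-aggregation term via cancellativity of $\mOp$, I obtain that a candidate inflow $\inflow$ satisfies \rEq{eqn-flow-equation} for the composite at $n$ if and only if $\inflow_1(n) = \inflow(n) \mOp \outflow_2(n)$. Symmetrically, for $n \in N_2$ the condition is $\inflow_2(n) = \inflow(n) \mOp \outflow_1(n)$. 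These per-node constraints are independent, each mentioning only $\inflow(n)$, so the composite is a flow graph precisely when every such constraint is solvable for $\inflow(n)$.

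But these are verbatim the defining side-conditions of $\interface_1 \intComp \interface_2$ on its inflow component, and the remaining clause of $\intComp$ (fixing the outflow to $\outflow_1 \mOp \outflow_2$ off $N$) is always satisfiable and so imposes no further restriction on definedness. Hence $\fGraph_1 \fGraphComp \fGraph_2$ is defined iff $\interface_1 \intComp \interface_2$ is. When defined, cancellativity makes the witnessing $\inflow$ unique (this is \rL{lem-inflow-unique} applied to the composite, and it also forces the inflow of $\interface_1 \intComp \interface_2$ to be unique), so the inflow component of $\interfaceFn(\fGraph_1 \fGraphComp \fGraph_2)$ equals that of $\interface_1 \intComp \interface_2$. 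For the outflow component, I compute $\outflowFn(\fGraph_1 \fGraphComp \fGraph_2)(n)$ for $n \notin N$ by the same split of the aggregation, obtaining $\outflow_1(n) \mOp \outflow_2(n)$, which matches the third clause of $\intComp$; this completes the equality of interfaces.

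I expect the main obstacle to be the definedness equivalence, i.e.\ establishing that the existence of a witnessing inflow for the composite flow graph corresponds \emph{exactly} to the algebraic solvability conditions baked into $\intComp$, rather than merely implying one direction. This is precisely where cancellativity does the work: it is needed both to cancel the shared $N_1$-aggregation term (turning the flow equation into the clean decomposition $\inflow_1(n) = \inflow(n) \mOp \outflow_2(n)$) and to guarantee that the witness, when it exists, is unique, so that the two interfaces genuinely agree pointwise. A minor point to handle carefully is that self-loops within each $N_i$ remain inside the $N_i$-aggregation term and hence cancel cleanly, so no extra hypothesis such as the $\edgeFn(n,n) = \zeroFn$ condition of \rL{lem-singleton-interfaces} is required here; one should also check that the domains of the inflow ($N$) and outflow ($\nodeDom \setminus N$) line up on both sides.
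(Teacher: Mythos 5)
Your proposal is correct and takes essentially the same route as the paper's proof: the same splitting of the flow-equation aggregation over $N_1 \uplus N_2$, identification of the cross-term with $\outflowFn(\fGraph_j)(n)$, and cancellation of the shared $N_i$-sum to reduce the flow equation at each node to the biconditional $\inflowFn(\fGraph_i)(n) = \inflow(n) \mOp \outflowFn(\fGraph_j)(n)$, together with the same aggregation split for the outflow clause. Your packaging of both directions into a single per-node solvability criterion (with uniqueness of the witnessing inflow from cancellativity, as in \rL{lem-inflow-unique}) is merely a tidier presentation of the paper's chain of equivalences, which it invokes by remarking that its computation ``works in both directions.''
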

\techreport{
\begin{proof}
  If $\fGraph_1 \fGraphComp \fGraph_2$ is defined and has interface $\interface$, then we show that $\interface_1 \intComp \interface_2$ is defined and equal to $\interface$.
  Let $\fGraph_i = (N_i, \edgeFn_i, \flow_i), \interface = (\inflow, \outflow), \interface_1 = (\inflow_1, \outflow_1)$, and $\interface_2 = (\inflow_2, \outflow_2)$.
  Since $\fGraph = \fGraph_1 \fGraphComp \fGraph_2 \in \fGraphDom$ and $\inflowFn(\fGraph) = \inflowOfInt{\interface} = \inflow$, we know by definition that $\forall i \neq j \in \set{1, 2}, n \in \fGraph_i,$
  \begin{align*}
    & \quad \quad
      \flow(n) = \inflow(n) \mOp \mBigOp_{n' \in \fGraph} \flow(n') \pipe \edgeFn(n', n) \\
    & \iff
      \flow_i(n) = \inflow(n) \mOp \mBigOp_{n' \in \fGraph} \flow(n') \pipe \edgeFn(n', n) \\
    & \iff
      \inflow_i(n) \mOp \mBigOp_{n' \in \fGraph_i} \edgeFn_i(n', n, \flow_i(n')) \\
    & \quad = \inflow(n) \mOp \mBigOp_{n' \in \fGraph_i} \edgeFn_i(n', n, \flow_i(n'))  \mOp \mBigOp_{n' \in \fGraph_j} \edgeFn_j(n', n, \flow_j(n')) \\
    & \iff
      \inflow_i(n) = \inflow(n) \mOp \mBigOp_{n' \in \fGraph_j} \edgeFn_j(n', n, \flow_j(n'))  \tag{By cancellativity} \\
    & \iff
      \inflow_i(n) = \inflow(n) \mOp \outflow_j(n).
  \end{align*}
  Secondly, let $\fGraph = \fGraph_1 \fGraphComp \fGraph_2$ and note that
  \begin{align*}
    \outflow(n)
    &\defeq \mBigOp_{n' \in \fGraph} \flow(n') \pipe \edgeFn(n', n) \\
    &=  \mBigOp_{n' \in \fGraph_1} \flow_1(n') \pipe \edgeFn_1(n', n)
    \mOp  \mBigOp_{n' \in \fGraph_2} \flow_2(n') \pipe \edgeFn_2(n', n) \\
    &= \outflow_1(n) \mOp \outflow_2(n).
  \end{align*}
  As $\fGraph = \fGraph_1 \fGraphComp \fGraph_2$ implies $\dom(\fGraph_1) \cap \dom(\fGraph_2) = \emptyset$, this proves that $\interface_1 \intComp \interface_2 = \interface$.

  Conversely, if $\interface_1 \intComp \interface_2$ is defined and equal to $\interface$ then we show that  $\fGraph_1 \fGraphComp \fGraph_2$ is defined and has interface $\interface$.
  First, $\interface_1 \cap \interface_2 = \emptyset$, so we know that the graphs are disjoint.
  Note that the proof above works in both directions, so
  \begin{align*}
    & \quad \forall i \neq j \in \set{1, 2}, n \in \interface_i.\;
    \inflowOfInt{\interface_i}(n) = \inflowOfInt{\interface}(n) \mOp \outflowOfInt{\interface_j}(n) \\
    & \impl \flow(n) = \inflow(n) \mOp \mBigOp_{n' \in \fGraph} \flow(n') \pipe \edgeFn(n', n).
  \end{align*}
  This tells us that $\fGraph = \fGraph_1 \fGraphComp \fGraph_2 \in \fGraphDom$ and $\inflowFn(\fGraph) = \inflow$.
  From above, we also know that $\outflow(n) = \outflow_1(n) \mOp \outflow_2(n)$, so the interface composition condition $\outflowOfInt{\interface}(n) = \outflowOfInt{\interface_1}(n) \mOp \outflowOfInt{\interface_2}(n)$ gives us $\outflowFn(\fGraph) = \outflow$.
\end{proof}
}

\paragraph{Flow Footprints}
%The cancellativity of $\mPlus$ alone is not sufficient to reason locally about general flow graph modifications.
\sk{Consider again the simple modification of changing the edge function labelling a single edge $(n_1, n_2)$\as{; recall that we previously considered the simplified case above that $n_2$ has no outgoing edges.}}
\as{Cancellativity of $\mPlus$} avoids \as{a recomputation over} arbitrary \emph{incoming} edges, but
once we remove \as{this assumption, we} also need to account for the propagation of the change transitively throughout the graph.
For example, \as{by adding} the edge $(p_1,r_1)$ in \rF{fig-pip} and hence, $3$ to the flow of $r_1$, we \as{in turn} add $3$ to the flow of all other nodes reachable from $r_1$. On the other hand, adding an edge from $r_4$ to $p_5$ affects \tw{only the flow value of $p_5$}.
% For example, in the path counting domain, the addition or removal of a path affects the path count of all nodes reachable via (non-zero) edge functions. On the other hand, other operations such as swapping the order of two nodes on a simple path can be performed without affecting the flow values of other graph nodes.
To capture the relative locality of the side-effects of such updates,\as{ we introduce \emph{flow footprints}}. % of a modification to a flow graph.
A modification's flow footprint is the \emph{smallest subset} of the graph containing those nodes which are \as{\emph{sources}} of modified edges, plus all those whose flow values need to be changed in order to obtain a new flow graph \as{with an unchanged inflow}.
For example, the flow footprint for the addition of the edge $(p_1,r_1)$ in \rF{fig-pip} is $p_1$ and all nodes reachable from $r_1$ (including $r_1$ itself). \as{On the other hand, the flow footprint for removing the edge $(p_2,r_2)$ is just these two nodes; the flow to and from the rest of the graph remains unchanged. %Flow footprints can be used to localise the effect of a graph modification: from the perspective of the graph \emph{outside} of the flow footprint, nothing observable in the flow domain has changed.
We will exploit this idea to define
% abstractions and composition of flow graphs, ultimately enabling us to define
when a subgraph can be replaced with another without disturbing its surroundings.}

We next make this notion of flow footprint formally precise.

\begin{definition}[Flow Footprint]
  Let $\fGraph$ and $\fGraph'$ be flow graphs such that
  $\interfaceFn(\fGraph)=\interfaceFn(\fGraph')$, then the \emph{flow
    footprint} of $\fGraph$ and $\fGraph'$, denoted
  $\footprintFn(\fGraph, \fGraph')$, is the smallest flow graph
  $\fGraph_1'$ such that there exists $\fGraph_1,\fGraph_2$ with
  $\fGraph = \fGraph_1 \fGraphComp \fGraph_2$,
  $\fGraph' = \fGraph_1' \fGraphComp \fGraph_2$ and
  $\interfaceFn(\fGraph_1)=\interfaceFn(\fGraph_1')$.
\end{definition}

The following lemma states that the flow footprint captures exactly
those nodes in the graph that are affected by a modification
(i.e. either their flow or their outgoing edges change).

\begin{lemma}
  \label{lem-flow-footprint}
  Let $\fGraph$ and $\fGraph'$ be flow graphs such that
  $\interfaceFn(\fGraph)=\interfaceFn(\fGraph')$, then for all
  $n \in \fGraph$, $n \in \footprintFn(\fGraph,\fGraph')$ iff
  $\fGraph_n \neq \fGraph'_n$.
\end{lemma}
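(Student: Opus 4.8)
The plan is to identify the flow footprint with the set of \emph{changed} nodes and then read off the equivalence. Write $N = \dom(\fGraph) = \dom(\fGraph')$ (the two domains coincide because $\interfaceFn(\fGraph)=\interfaceFn(\fGraph')$ fixes the common domain of the inflow), and partition $N = C \uplus S$ with $C = \setcomp{n \in N}{\fGraph_n \neq \fGraph'_n}$ and $S = N \setminus C$. Since ``$n \in \footprintFn(\fGraph,\fGraph')$'' abbreviates $n \in \dom(\footprintFn(\fGraph,\fGraph'))$, it suffices to prove $\dom(\footprintFn(\fGraph,\fGraph')) = C$, which I would do by two inclusions.

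For the inclusion $C \subseteq \dom(\footprintFn(\fGraph,\fGraph'))$, note that in \emph{every} decomposition admissible in the footprint definition, namely $\fGraph = \fGraph_1 \fGraphComp \fGraph_2$ and $\fGraph' = \fGraph_1' \fGraphComp \fGraph_2$ with $\interfaceFn(\fGraph_1)=\interfaceFn(\fGraph_1')$, the \emph{same} frame $\fGraph_2$ occurs on both sides. Hence $\fGraph_2$ is a common induced subgraph, $\fGraph_2 = \restrict{\fGraph}{N_2} = \restrict{\fGraph'}{N_2}$ with $N_2 = \dom(\fGraph_2)$, which forces $\fGraph_n = \fGraph'_n$ for all $n \in N_2$, i.e. $N_2 \subseteq S$. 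Therefore $\dom(\fGraph_1') = N \setminus N_2 \supseteq C$, so every admissible modified subgraph, and in particular the smallest one, contains $C$.

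For the converse inclusion, I would exhibit a concrete admissible decomposition whose modified part has domain exactly $C$: take $\fGraph_2 = \restrict{\fGraph}{S}$, $\fGraph_1 = \restrict{\fGraph}{C}$ and $\fGraph_1' = \restrict{\fGraph'}{C}$. Here $\restrict{\fGraph}{S} = \restrict{\fGraph'}{S}$ as triples (they agree node-wise on $S$ by definition of $S$), so the same frame is reused, and each induced subgraph is itself a flow graph, the witnessing inflow being the external inflow augmented by the contributions flowing in from the complementary node set (a direct consequence of \rEq{eqn-flow-equation}). The crux is to check $\interfaceFn(\fGraph_1)=\interfaceFn(\fGraph_1')$. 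For this I would apply \rL{lem-interface-composition-congruence} to both decompositions to get $\interfaceFn(\fGraph) = \interfaceFn(\fGraph_1) \intComp \interfaceFn(\fGraph_2)$ and $\interfaceFn(\fGraph') = \interfaceFn(\fGraph_1') \intComp \interfaceFn(\fGraph_2)$; since $\interfaceFn(\fGraph)=\interfaceFn(\fGraph')$ and the frame interface $\interfaceFn(\fGraph_2)$ is shared, the two composites coincide. Unfolding the three clauses of interface composition (\rD{def-flow-interfaces}) then equates, at each relevant node, an expression $x \mOp c$ built from the unknown $\interfaceFn(\fGraph_1)$-value $x$ and the common $\interfaceFn(\fGraph_2)$-contribution $c$ with the analogous $\interfaceFn(\fGraph_1')$-expression; cancelling $c$ by cancellativity of $\mOp$ yields equality of the inflows (on $C$) and of the outflows (on $S$ and on $\nodeDom \setminus N$). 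Thus $\interfaceFn(\fGraph_1)=\interfaceFn(\fGraph_1')$, so $\restrict{\fGraph'}{C}$ is a valid footprint witness of domain $C$ and the smallest such has domain $\subseteq C$.

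Combining the two inclusions gives $\dom(\footprintFn(\fGraph,\fGraph')) = C$ (this also confirms that the ``smallest'' is well defined and equals $\restrict{\fGraph'}{C}$), from which the stated equivalence is immediate. I expect the only genuine obstacle to be the interface equality $\interfaceFn(\fGraph_1)=\interfaceFn(\fGraph_1')$ in the converse inclusion: this is precisely the point where cancellativity of the flow domain is indispensable, since without it the cancellation of the shared frame contribution $c$ would be unjustified and unchanged nodes could not be safely relegated to the frame. The remaining steps, the domain bookkeeping, the fact that induced subgraphs are flow graphs, and the easy inclusion, are routine.
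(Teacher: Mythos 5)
The paper states this lemma without an accompanying proof in the source, so there is no official argument to compare against; judged on its own, your proposal is correct and complete: the lower bound (every admissible frame $\fGraph_2$ is a common induced subgraph, forcing $C \subseteq \dom(\fGraph_1')$), the witness $\fGraph_1' = \restrict{\fGraph'}{C}$ with frame $\restrict{\fGraph}{S}$ (induced subgraphs are flow graphs via the inflow augmented by external contributions, exactly as \rEq{eqn-flow-equation} permits), and the interface equality all go through, and as a bonus you establish that the minimum in the footprint definition exists and is unique. One streamlining: rather than unfolding \rD{def-flow-interfaces} and cancelling the shared contribution by hand, you can pass from $\interfaceFn(\fGraph_1)\intComp\interfaceFn(\fGraph_2) = \interfaceFn(\fGraph_1')\intComp\interfaceFn(\fGraph_2)$ to $\interfaceFn(\fGraph_1)=\interfaceFn(\fGraph_1')$ by citing commutativity and cancellativity of $\intComp$ directly from \rT{thm-flow-interfaces-sa}, whose proof is precisely your hand-unfolded cancellation; your closing observation that cancellativity is the indispensable ingredient matches the paper's own emphasis in \rSc{sec-flows}.
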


Crucially, the following result shows that we can use flow interfaces as an abstraction compatible with separation-logic-style framing.

\begin{theorem}
  \label{thm-flow-interfaces-sa}
  The flow interface algebra $(\interfaces, \intComp, \interfaceEmpty)$ is a separation algebra.
\end{theorem}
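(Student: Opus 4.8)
The plan is to verify the four axioms of a separation algebra---commutativity, unit, associativity, and cancellativity---for $\intComp$ directly from Definition~\ref{def-flow-interfaces}, reducing each to the corresponding property of the underlying monoid operation $\mOp$. Throughout I write $N_i = \dom(\interface_i)$ and recall that the composite $\interface = \interface_1 \intComp \interface_2$, when defined, has domain $N_1 \uplus N_2$, inflow on each $N_i$ determined by $\inflowOfInt{\interface_i}(n) = \inflowOfInt{\interface}(n) \mOp \outflowOfInt{\interface_j}(n)$, and outflow on $\nodeDom \setminus (N_1 \uplus N_2)$ given by $\outflowOfInt{\interface}(n) = \outflowOfInt{\interface_1}(n) \mOp \outflowOfInt{\interface_2}(n)$. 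Commutativity is immediate: the defining clauses are symmetric in the indices, and the outflow clause is symmetric because $\mOp$ is commutative. For the unit I take $\interfaceEmpty = \interfaceFn(\fGraphEmpty)$, whose domain is empty and whose outflow is the constant $\mZero$ (an empty aggregation); substituting $\interface_2 = \interfaceEmpty$ collapses every occurrence of $\outflowOfInt{\interfaceEmpty}$ to $\mZero$, so the composite inherits exactly the inflow and outflow of $\interface_1$, giving $\interface_1 \intComp \interfaceEmpty = \interface_1$. Closure is free, since any tuple of functions with the required domains is already a flow interface.

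For cancellativity, suppose $\interface_1 \intComp \interface_2 = \interface_1 \intComp \interface_3 = \interface$. Comparing domains, $N_1 \uplus N_2 = N_1 \uplus N_3$ forces $N_2 = N_3$. On this common domain the inflow clause gives $\inflowOfInt{\interface_2}(n) = \inflowOfInt{\interface}(n) \mOp \outflowOfInt{\interface_1}(n) = \inflowOfInt{\interface_3}(n)$, so the inflows agree. For the outflows (defined off $N_2 = N_3$) I split on whether $n \in N_1$ or $n \notin N_1 \uplus N_2$: in the first case the $N_1$-inflow clause yields $\inflowOfInt{\interface}(n) \mOp \outflowOfInt{\interface_2}(n) = \inflowOfInt{\interface_1}(n) = \inflowOfInt{\interface}(n) \mOp \outflowOfInt{\interface_3}(n)$, and in the second the outflow clause yields $\outflowOfInt{\interface_1}(n) \mOp \outflowOfInt{\interface_2}(n) = \outflowOfInt{\interface}(n) = \outflowOfInt{\interface_1}(n) \mOp \outflowOfInt{\interface_3}(n)$; cancellativity of $\mOp$ then forces $\outflowOfInt{\interface_2}(n) = \outflowOfInt{\interface_3}(n)$ in both cases, so $\interface_2 = \interface_3$.

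Associativity is the main work, and the delicate point is matching the partiality of the two bracketings rather than the equational content. I would introduce a symmetric ternary composite $\interface_1 \intComp \interface_2 \intComp \interface_3 = \interface$, declared defined when the $N_i$ are pairwise disjoint, $\outflowOfInt{\interface}(n) = \outflowOfInt{\interface_1}(n) \mOp \outflowOfInt{\interface_2}(n) \mOp \outflowOfInt{\interface_3}(n)$ off $N_1 \uplus N_2 \uplus N_3$, and $\inflowOfInt{\interface_i}(n) = \inflowOfInt{\interface}(n) \mOp \mBigOp_{j \neq i} \outflowOfInt{\interface_j}(n)$ on each $N_i$. Expanding $(\interface_1 \intComp \interface_2) \intComp \interface_3$ and regrouping with associativity and commutativity of $\mOp$ shows that, whenever it is defined, it satisfies exactly these ternary clauses, hence equals the ternary composite; commutativity then gives the same for $\interface_1 \intComp (\interface_2 \intComp \interface_3)$, and the index-symmetry of the ternary definition makes both groupings coincide. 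The remaining obligation is the equivalence of definedness: if the ternary composite exists I must show the intermediate binary composite $\interface_1 \intComp \interface_2$ exists. This is where cancellativity and commutativity are used constructively---each binary side-condition ``$x \mOp \outflowOfInt{\interface_j}(n) = \inflowOfInt{\interface_i}(n)$'' is a monoid-division problem, and I exhibit the witness $x = \inflowOfInt{\interface}(n) \mOp \outflowOfInt{\interface_k}(n)$ (with $k$ the third index) extracted from the ternary clause, whose existence is already guaranteed; uniqueness from cancellativity then ensures the pieces glue into a single interface. I expect this definedness transfer to be the main obstacle, since it is precisely the step where the partial, cancellative structure of $\mOp$ must be leveraged to reconstruct the missing intermediate interface, whereas the equational verifications reduce to routine regroupings.
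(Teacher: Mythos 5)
Your proof is correct and takes essentially the same approach as the paper: commutativity, the unit, and cancellativity (including the case split on $n \in N_1$ versus $n \notin N_1 \uplus N_2$) are handled identically, and for associativity your witness is exactly the paper's explicitly constructed intermediate interface $\interface_{12} = \paren{\lambdaFn{n}{\inflowOfInt{\interface}(n) \mOp \outflowOfInt{\interface_3}(n)},\; \lambdaFn{n}{\outflowOfInt{\interface_1}(n) \mOp \outflowOfInt{\interface_2}(n)}}$, with cancellativity resolving the same monoid-division side conditions. Your symmetric ternary composite is merely a presentational wrapper around the paper's direct verification that $\interface_{12} = \interface_1 \intComp \interface_2$ and $\interface = \interface_{12} \intComp \interface_3$, with the converse direction obtained in both treatments by symmetry and commutativity.
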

\techreport{
\begin{proof}
  We prove commutativity first, as it is used in the proof of associativity:
  \begin{itemize}
  \item $\intComp$ is commutative:\\
    This follows from the symmetry in the definition of $\intComp$ and the commutativity of the flow domain operator $\mOp$.
  \item $\intComp$ is associative, i.e. $\interface_1 \intComp (\interface_2 \intComp \interface_3) = (\interface_1 \intComp \interface_2) \intComp \interface_3$:\\
    Note that if any two of the interfaces $\interface_1, \interface_2$, and $\interface_3$ are not disjoint, then both sides of the equation are equal to $\bot$.
    We now show that if the LHS is defined, then the RHS is also defined and equal to it.
    Let $\interface_{23} = \interface_2 \intComp \interface_3$, and $\interface = \interface_1 \intComp \interface_{23}$.
    Define $\interface_{12} = \paren{\lambdaFn{n}{\inflowOfInt{\interface}(n) \mOp \outflowOfInt{\interface_3}(n)}, \lambdaFn{n'}{\outflowOfInt{\interface_1}(n') \mOp \outflowOfInt{\interface_2}(n)}}$.
    We first show that $\interface_{12} = \interface_1 \intComp \interface_2$.
    We know that the interfaces are disjoint, so let us check that the inflows are compatible.
    For $n \in \interface_1$,
    \begin{align*}
      \inflowOfInt{\interface_{12}}(n) \mOp \outflowOfInt{\interface_2}(n)
      &= \inflowOfInt{\interface}(n) \mOp \outflowOfInt{\interface_3}(n)
        \mOp \outflowOfInt{\interface_2}(n) \\
      &= \inflowOfInt{\interface}(n) \mOp \outflowOfInt{\interface_{23}}(n) \tag{As $\interface_{23} = \interface_2 \intComp \interface_3$} \\
      &= \inflowOfInt{\interface_1}(n). \tag{As $\interface = \interface_1 \intComp \interface_{23}$}
    \end{align*}
    On the other hand, for $n \in \interface_2$,
    \begin{align*}
      \inflowOfInt{\interface_{12}}(n) \mOp \outflowOfInt{\interface_1}(n)
      &= \inflowOfInt{\interface}(n) \mOp \outflowOfInt{\interface_3}(n)
        \mOp \outflowOfInt{\interface_1}(n) \\
      &= \inflowOfInt{\interface_{23}}(n) \mOp \outflowOfInt{\interface_3}(n) \tag{As $\interface = \interface_{23} \intComp \interface_1$} \\
      &= \inflowOfInt{\interface_2}(n). \tag{As $\interface_{23} = \interface_2 \intComp \interface_3$}
    \end{align*}
    Finally, the condition on the outflow follows by definition of $\interface_{12}$.

    We now show that $\interface = \interface_{12} \intComp \interface_3$.
    Again, the disjointness condition is satisfied by assumption, so let us check the inflows.
    If $n \in \interface_{12}$, $\inflowOfInt{\interface}(n) \mOp \outflowOfInt{\interface_3}(n) = \inflowOfInt{\interface_{12}}(n)$ by definition of $\interface_{12}$.
    And if $n \in \interface_3$,
    \begin{align*}
      \inflowOfInt{\interface}(n) \mOp \outflowOfInt{\interface_{12}}(n)
      &= \inflowOfInt{\interface}(n) \mOp \outflowOfInt{\interface_1}(n)
        \mOp \outflowOfInt{\interface_2}(n) \\
      &= \inflowOfInt{\interface_{23}}(n) \mOp \outflowOfInt{\interface_2}(n) \tag{As $\interface = \interface_{23} \intComp \interface_1$} \\
      &= \inflowOfInt{\interface_3}(n). \tag{As $\interface_{23} = \interface_2 \intComp \interface_3$}
    \end{align*}
    The condition on outflows is true because
    \begin{align*}
      \outflowOfInt{\interface}(n)
      &= \outflowOfInt{\interface_1}(n) \mOp \outflowOfInt{\interface_{23}}(n) \\
      &= \outflowOfInt{\interface_1}(n) \mOp \outflowOfInt{\interface_2}(n)
        \mOp \outflowOfInt{\interface_3}(n) \\
      &= \outflowOfInt{\interface_{12}}(n) \mOp \outflowOfInt{\interface_3}(n).
    \end{align*}

    Thus, if the LHS is defined and equal to $\interface$, then the RHS is defined and equal to $\interface$.
    By symmetry, and commutativity, the other direction is true as well.
  \item $\interfaceEmpty$ is an identity with respect to $\intComp$:\\
    This follows directly from the definitions.
  \item $\intComp$ is cancellative, i.e. $\interface_1 \intComp \interface_2 = \interface_1 \intComp \interface_3 \impl \interface_2 = \interface_3$:\\
    Let $\interface = \interface_1 \intComp \interface_2 = \interface_1 \intComp \interface_3$.
    Since the domains of $\interface_2$ and $\interface_3$ must be disjoint from $\interface_1$ and yet sum to the domain of $\interface$, they must be equal.
    Now for $n \in \interface_2$, by the definition of $\intComp$,
    \[
      \inflowOfInt{\interface_2}(n) = \inflowOfInt{\interface}(n) \mOp \outflowOfInt{\interface_1}(n)
      = \inflowOfInt{\interface_3}(n),
    \]
    so the inflows are equal.
    As for the outflows, if $n \not\in \interface$ then
    \[
      \outflowOfInt{\interface}(n) = \outflowOfInt{\interface_1}(n) \mOp \outflowOfInt{\interface_2}(n)
       = \outflowOfInt{\interface_1}(n) \mOp \outflowOfInt{\interface_3}(n),
     \]
     which, by cancellativity of the flow domain, implies that $\outflowOfInt{\interface_2}(n) = \outflowOfInt{\interface_3}(n)$.
     On the other hand, if $n \in \interface_1$, then
     \[
       \inflowOfInt{\interface_1}(n) = \inflowOfInt{\interface}(n) \mOp \outflowOfInt{\interface_2}(n)
       = \inflowOfInt{\interface}(n) \mOp \outflowOfInt{\interface_3}(n),
     \]
     which again, by cancellativity, implies that $\outflowOfInt{\interface_2}(n) = \outflowOfInt{\interface_3}(n)$.
  \end{itemize}
\end{proof}
}

This result forms the core of our reasoning technique; it enables us to make modifications within a chosen subgraph and, by proving preservation of its interface, know that the resulting subgraph composes with any context exactly as the original did.
Flow interfaces capture precisely the information relevant about a
flow graph, from the point of view of its context. In
\rSc{sec-expressivity} we provide additional examples of flow domains
that demonstrate the range of data structures and graph properties
that can be expressed using flows, including a notion of
\emph{universal flow} that in a sense provides a completeness result
for the expressivity of the framework.
We now turn to constructing proofs atop these new reasoning principles.

% \twfootnote{It would be good to add a short paragraph here about the completeness of flows eluding to the universal flow. We can have the details in an appendix, but I think we should mention it here.\as{ALEX: one option if we had space would be to have an ``expressiveness/examples'' subsection, in which we make both the general point and maybe list a few other flow domains and what they can capture.}}

%%% Local Variables:
%%% mode: latex
%%% TeX-master: "writeup"
%%% End:

\section{Proof Technique}\label{sec-proof-technique}

% Redo:

% start with how to specify
% two views: node-local and global interface
% Introduce predicates
% Show proof of PIP update

% Next section:
% product domains, replacement, Harris proof (as much as fits)

This section shows how to integrate flow reasoning into a standard separation logic, using the priority inheritance protocol (PIP) algorithm to illustrate our proof techniques.

Since flow graphs and flow interfaces form separation algebras, it is possible \as{in principle} to define a separation logic (SL) using these notions as \as{a custom} \emph{semantic model} (indeed, this is the proof approach taken in \cite{DBLP:journals/pacmpl/KrishnaSW18}).
By contrast, we integrate flow interfaces with a \emph{standard} separation logic without modifying its semantics.
This has the important technical advantage that our proof technique can be naturally integrated with existing separation logics and verification tools supporting SL-style reasoning.
We consider a standard \emph{sequential} SL in this section, but our technique can also be directly integrated with a concurrent SL such as RGSep (\as{as we show in} \rSc{sec-harris-proof}) or frameworks such as Iris~\cite{iris-ground-up} \as{supporting} (ghost) resources ranging over user-defined separation algebras.

\subsection{Encoding Flow-based Proofs in SL}
\label{sec-proof-technique-encoding}
%\asout{Typically, encoding a data structure invariant using flows requires placing constraints at two levels: the node-local invariant on the flow and fields of a node, and some constraint on the interface of the entire data structure (henceforth, \emph{global interface}).
%The latter is needed in order to give meaning to the flow values at each node.
%For instance, in the path-counting flow, specifying that the inflow of the global interface is $1$ at some designated node $r$ and $0$ elsewhere means that each node $n$'s flow value is the number of paths from $r$ to $n$.}
Proofs using our flow framework can employ a combination of specifications enforced at the node-level and in terms of the flow graphs and interfaces corresponding to larger heap regions such as entire data structures (henceforth, \emph{composite graphs} and \emph{composite interfaces}). At the node level, we write invariants that every node is intended to satisfy, typically relating the node's flow value to its local state (fields). For example, in the PIP, we use node-local invariants to express that a node's current priority is the maximum of the node's default priority and those in its current flow value. We typically express such specifications in terms of \emph{singleton (flow) graphs}, and their \emph{singleton interfaces}.

Specification in terms of \emph{composite} interfaces has several important purposes. One is to define custom inflows: \eg{} in the path-counting flow domain, specifying that the inflow of a composite interface is $1$ at some designated node $r$ and $0$ elsewhere enforces in any underlying flow graph that each node $n$'s flow value will be the number of paths from $r$ to $n$\footnote{Note that the analogous property cannot be captured at the node-level; when considering singleton interfaces per node in a tree rooted at $r$, \emph{every} singleton interface has an inflow of $1$.}. Composite interfaces can also be used to express that, in two states of execution, a portion of the heap ``looks the same'' with respect to composition (it has the same interface, and so can be composed with the same flow graphs), or to capture by \emph{how much} there is an observable difference in inflow or outflow; we employ this idea in the PIP proof below.

We now define an assertion syntax convenient for capturing both node-level and composite-level constraints, defined within an SL-style proof system.
%We now describe how to specify both types of constraints and reason about the effect of programs on these constraints in SL.
We assume a standard syntax and semantics of the underlying SL: see \refApp{sec-separation-logic} for more details.

\paragraph{Node Predicates}
The basic building block of our flow-based specifications is a node predicate $\nodePred(x, \fGraph)$, representing ownership of the fields of a single node $x$, as well as capturing its corresponding singleton flow graph $\fGraph$:
\begin{align*}
  \nodePred(x, \fGraph)
  & \defeq \exists \fields, \flowVar.\ x \mapsto \fields * \fGraph = (\set{x}, \lambdaFn{y}{\abstractionFn(x, \fields, y)}\ ,\ \flowVar)
    * \goodCondition(x, \fields, \flowVar(x))
\end{align*}
$\nodePred$ is implicitly parameterised by $\fields, \abstractionFn$ and $\goodCondition$; these are explained next and are typically fixed across any given flow-based proof.
The $\nodePred$ predicate expresses that we have a heap cell at location $x$ containing fields $\fields$ (a list of field-name/value mappings)\footnote{For simplicity, we assume that all fields of a flow graph node are to be handled by our flow-based technique, and that their ownership (via $\mapsto$ points-to predicates) is always carried around together; lifting these restrictions would be straightforward.}.
It also says that $\fGraph$ is a singleton flow graph with domain $\set{x}$ with some flow $\flowVar$, whose edge functions are defined by a user-defined abstraction function $\abstractionFn(x, \fields, y)$; this function allows us to define edges in terms of $x$'s field values.
Finally, the node, its fields, and its flow in this flow graph satisfy the custom predicate $\goodCondition$, used to encode node-local properties such as constraints in terms of the flow values of nodes.%\footnote{We assume here, for simplicity of presentation, that each object on the heap corresponds to a graph node, i.e., $\nodeDom = \addrs$.
%This approach can easily be extended to the case in which a graph node represents more than one heap location by replacing the $x \mapsto \fields$ assertion with an instance of a custom predicate $\hrepSpatial(x)$.}

\paragraph{Graph Predicates}
The analogous predicate for composite graphs is $\graphPred$. It carries ownership to the nodes making up potentially-unbounded graphs\sk{, using iterated separating conjunction over a \emph{set} of nodes $X$ as mentioned in \rSc{sec-introduction}:}
\[
  \graphPred(X, \fGraph) \defeq \exists \fGraphMap.\;
  \Sep_{x \in X} \nodePred(x, \fGraphMap(x))
  * \paren{\fGraphBigComp_{x \in X} \fGraphMap(x)} = \fGraph
\]
$\graphPred$ is also implicitly parameterised by $\fields, \abstractionFn$ and $\goodCondition$.
The existentially-quantified $\fGraphMap$ is a logical variable representing a \emph{function} from nodes in $X$ to corresponding singleton flow graphs. $\graphPred(X, \fGraph)$ describes a set of nodes $X$, such that each $x \in X$ is a $\nodePred$ (in particular, satisfying  $\goodCondition$), whose singleton flow graphs compose back to $\fGraph$\footnote{In specifications, we implicitly quantify at the top level over free variables such as $\fGraph$.}.
%\asfootnote{It seems intValid is superfluous in this definition now, since we pass the right composed interface in.}
%\asout{The validity predicate $\intValid{\interfaceMap(X)} \defeq \paren{\paren{\intBigComp_{x \in X} \interfaceMap(x)} \neq \bot}$ expresses the fact that the (iterated) interface composition of interfaces in $X$ exists\as{; \ie{} that these singleton interfaces have appropriate inflows and outflows to compose to \emph{some} composite interface for the domain $X$ of the current heap. In particular, }}
As well as carrying ownership of the underlying heap locations, $\graphPred$'s definition allows us to connect a node-level view of the region $X$ (each $\fGraphMap(x)$) with a composite-level view defined by $\fGraph$, on which we can impose appropriate graph-level properties such as constraints on the region's inflow.

\paragraph{Lifting to Interfaces}
Flow based proofs can often be expressed more elegantly and abstractly using predicates in terms of node and composite-level interfaces rather than flow graphs. To this end, we overload both our node and graph predicates with analogues whose second parameter is a flow interface, defined as follows:
\[
\begin{array}{rcl}
  \nodePred(x, \interface) &\; \defeq\; & \exists \fGraph.\; \nodePred(x, \fGraph) \land \interfaceFn(\fGraph) = \interface \\\
  \graphPred(X, \interface) &\; \defeq\; & \exists \fGraph.\; \graphPred(x, \fGraph) \land \interfaceFn(\fGraph) = \interface \\
\end{array}
\]
We will use these versions in the PIP proof below; interfaces capture all relevant properties for decomposition and composition of these flow graphs.

%
%\asout{Specifications. }
%
%\asout{The invariants of the data structure in question are encoded using a combination of the $\graphPred$ predicate, and another predicate $\globalInt$.
%The predicate $\globalInt$ can instead be used to constrain the \emph{composed} interface  $\interfaceMap(X)$ of the entire graph $X$, for instance expressing that it is a closed region with no outgoing edges.}
%
%\asout{For example, \todo{}.}
%
%\asout{While the \framework gives us the power to use interfaces of any size in our proofs to reason about concrete states, this choice is in some sense canonical.
%If we tie the singleton interfaces to abstract the singleton heap regions, then we can express the interface of any set of nodes as the composition of the respective singleton interfaces.
%On the other hand, if we only tied the interface of a larger region to the heap, we would lose the ability to precisely reason about modifications to single nodes when needed.}

\paragraph{Flow Lemmas}

\begin{figure}[t]
  \centering
  \begin{align}
    \graphPred(X_1 \uplus X_2, \fGraph)
    \;\; & \entails \;\; \exists \fGraph_1, \fGraph_2.\;
    \graphPred(X_1, \fGraph_1)
    * \graphPred(X_2, \fGraph_2)       \notag   \\
    & \phantom{\entails} \;\;\; * \fGraph_1 \fGraphComp \fGraph_2 = \fGraph \tag{\sc{Decomp}}\label{rule-decomp} \\
    \graphPred(X_1, \fGraph_1)
    * \graphPred(X_2, \fGraph_2)
    * \fGraph_1 \fGraphComp \fGraph_2 \neq \bot
    \;\; & \entails \;\;
           \graphPred(X_1 \uplus X_2, \fGraph_1 \fGraphComp \fGraph_2)
           \tag{\sc{Comp}}\label{rule-comp} \\
    \nodePred(x, \fGraph)
    \;\; & \equiv \;\;
           \graphPred(\set{x}, \fGraph)
           \tag{\sc{Sing}}\label{rule-singleton} \\
    \emp
    \;\; & \entails \;\;
           \graphPred(\emptyset, \fGraphEmpty)
           \tag{\sc{GrEmp}}\label{rule-gr-emp} \\
    \graphPred(X'_1, \fGraph'_1) * \graphPred(X_2, \fGraph_2)
    \land \fGraph = \fGraph_1 \fGraphComp \fGraph_2
    \;\; & \entails \;\;
           \graphPred(X'_1 \uplus X_2, \fGraph'_1 \fGraphComp \fGraph_2)
           \tag{\sc{Repl}}\label{rule-repl-equal} \\
    {} \land \interfaceFn(\fGraph_1) = \interfaceFn(\fGraph'_1)
    \;\; & \phantom{\entails} \;\;\;
           {} \land \interfaceFn(\fGraph) = \interfaceFn(\fGraph'_1 \fGraphComp \fGraph_2)
           \notag
  \end{align}
  \caption{Some useful lemmas for proving entailments between flow-based specifications.}
  \label{fig-proof-rules}
\end{figure}
We first illustrate our $\nodePred$ and $\graphPred$ assertions (capturing SL ownership of heap regions and abstracting these with flow interfaces) by identifying a number of lemmas which are generically useful in flow-based proofs. Reasoning at the level of flow interfaces is entirely in the \emph{pure} world (mathematics independent of heap-ownership and resources) with respect to the underlying SL reasoning; these lemmas are consequences of our defined assertions and the \framework{} definitions themselves.

Examples of these lemmas are shown in \rF{fig-proof-rules}. \refRule{rule-decomp} shows that we can always decompose a valid flow graph into subgraphs which are themselves flow graphs. Recomposition \refRule{rule-comp} is possible only if the interfaces of the subgraphs compose (\cf{} \defref{flow-graphs}). This rule, as well as  \refRule{rule-singleton}, and \refRule{rule-gr-emp} then all follow directly from the definition of $\graphPred$ and standard SL properties of iterated separating conjunction.
% \refRule{rule-replacement} is derived from the Replacement Theorem (\rT{thm-replacement}), and we discuss it further below.
% \refRule{rule-int-comp-in} and \refRule{rule-int-comp-out} are properties of composite interfaces that follow directly from the definition of interface composition.
The final rule \refRule{rule-repl-equal} is a direct consequence of rules \refRule{rule-comp}, \refRule{rule-decomp} and the congruence relation on flow graphs induced by their interfaces (\cf{} \rL{lem-interface-composition-congruence}). Conceptually, it expresses that after decomposing any flow graph into two parts $H_1$ and $H_2$, we can \emph{replace} $H_1$ with a new flow graph $H'_1$ with the same interface; when recomposing, the overall graph will be a flow graph with the same overall interface.

Note the connection between rules \refRule{rule-comp}/\refRule{rule-decomp} and the algebraic laws of standard inductive predicates such as $\lseg$.
For instance by combining the definition of $\graphPred$ \as{with these rules} and \refRule{rule-singleton} we can prove the following rule to fold or unfold the graph predicate:
%\begin{align}
%  \footnotesize
%  \graphPred(X \uplus \set{y}, \intBigComp_{x \in (X\uplus \set{y})} \interfaceMap(x))
%  \; \equiv \;\\
%       \nodePred(y, \interfaceMap(y))
%       * \graphPred(X, \intBigComp_{x \in X} \interfaceMap(x))
%       * \intValid{\interfaceMap(y) \intComp \intBigComp_{x \in X} \interfaceMap(x)}
%       \tag{\sc{(Un)Fold}}\label{rule-fold-unfold}
%\end{align}
\begin{align}
  \scriptsize
  \graphPred(X \uplus \set{y}, \fGraph)
  \equiv
       \exists \fGraph_y, \fGraph'. \nodePred(y, \fGraph_y)
       * \graphPred(X, \fGraph')
       * \fGraph = \fGraph_y \fGraphComp \fGraph'
       \tag{\sc{(Un)Fold}}\label{rule-fold-unfold}
\end{align}
\as{However, crucially (and unlike when using general inductive predicates \cite{ParkinsonBierman05}), this rule is symmetrical for any node $x$ in $X$; it works analogously for any desired order of decomposition of the graph, and for any data structure specified using flows.\asfootnote{Added this back as we have the space here}}
%
% This symmetry is typical of iterated separating conjunctions, but this rule captures how to interweave appropriate flow reasoning precisely. Using typical inductive predicates \cite{Parkinson},
%an advantage of using flow graph predicates is that their algebraic rules apply regardless of the data structure they describe.
%By contrast, for every inductive predicate and direction of traversal one would need to prove a new rule like \refRule{rule-fold-unfold}.

% \paragraph{Lemmas for flow interfaces}
When working with our overloaded $\nodePred$ and $\graphPred$ predicates, similar steps to those described by the above lemmas are useful. Given the these overloaded predicates, we simply apply the lemmas above to the \emph{existentially-quantified} flow-graphs in their definitions and then lift the consequence of the lemma back to the interface level using the congruence between our flow graph and interface composition notions (\rL{lem-interface-composition-congruence}).

\subsection{Proof of the PIP}

\def\adjustInflow{\delta}
\begin{figure}
  \footnotesize
  \centering
\begin{lstlisting}
// Let $\mkblue{\adjustInflow(M, q_1, q_2) \defeq M \setminus \ite{q_1 \geq 0}{\set{q_1}}{\emptyset} \cup \ite{q_2 \geq 0}{\set{q_2}}{\emptyset}}$

method update(n: Ref, from: Int, to: Int)
  requires $\mkblue{\nodePred(n, \interface_n) * \graphPred(X \setminus \set{n}, \interface') \land \interface = \interface'_n  \intComp \interface' \land \globalInt(\interface)}$
  requires $\mkblue{\interface'_n = (\set{n \goesto \adjustInflow(\inflowOfInt{\interface_n}(n), \mathit{from}, \mathit{to})}, \outflowOfInt{\interface_n}) \land \mathit{from} \neq \mathit{to}}$
  ensures $\mkblue{\graphPred(X, \interface)}$
{
  n.prios := n.prios $\setminus$ {from}
  if (to >= 0) {
    n.prios := n.prios $\cup$ {to}
  }
  from := n.curr_prio
  n.curr_prio := max(max(n.prios), n.def_prio)
  to := n.curr_prio

  if (from != to && n.next != null) {
    update(n.next, from, to)
  }
}

method acquire(p: Ref, r: Ref)
  requires $\mkblue{\graphPred(X, \interface) \land \globalInt(\interface) \land p \in X \land r \in X \land p \neq r}$
  ensures $\mkblue{\graphPred(X, \interface)}$
{
  $\annotOpt{ \exists \interface_r, \interface_p, \interface_1.\;
    \nodePred(r, \interface_r) * \nodePred(p, \interface_p)
    * \graphPred(X \setminus \set{r, p}, \interface_1)
    \land \interface = \interface_r \intComp \interface_p \intComp \interface_1
    \land \globalInt(\interface)
  }$
  if (r.next == null) {
    r.next := p; @\label{line:pip-r-next}@
    // Let $\mkpurple{q_r}$ = r.curr_prio
    $\annotOpt{
      & \exists \interface_r, \interface'_r, \interface_p, \interface_1.\;
      \nodePred(r, \interface'_r) * \nodePred(p, \interface_p)
      * \graphPred(X \setminus \set{r, p}, \interface_1)
      \land \interface = \interface_r \intComp \interface_p \intComp \interface_1 \\
      & \land \interface'_r = (\inflowOfInt{\interface_r}, \set{p \goesto \set{q_r}})
      \land \outflowOfInt{\interface_r} = \zeroFn \land \dotsc
    }$ @\label{line:pip-annot}@
    $\mkpurple{\entails}$ $\annotOpt{
      & \exists \interface_p, \interface'_p, \interface_2.\;
      \nodePred(p, \interface_p)
      * \graphPred(X \setminus \set{p}, \interface_2)
      \land \interface = \interface'_p \intComp \interface_2 \\
      & \land \interface'_p = (\set{p \goesto \delta(\inflowOfInt{\interface_p}(p), -1, q_r)}, \outflowOfInt{\interface_p})
      \land \dotsc
    }$ @\label{line:pip-annott}@
    update(p, -1, r.curr_prio)
    $\annotOpt{\graphPred(X, \interface)}$ @\label{line:pip-annottt}@
  } else {
    p.next := r; update(r, -1, p.curr_prio)
  }
}

method release(p: Ref, r: Ref)
  requires $\mkblue{\graphPred(X, \interface) \land \globalInt(\interface) \land p\in X \land r \in X \land p \neq r}$
  ensures $\mkblue{\graphPred(X, \interface)}$
{
  r.next := null; update(p, r.curr_prio, -1)
}
\end{lstlisting}
  \caption{Full PIP code and specifications, with proof sketch for \code{acquire}. \as{The comments and coloured annotations (lines \ref{line:pip-annot} to \ref{line:pip-annottt}) are used to highlight steps in the proof, and are explained in detail the text.}}
  \label{fig-pip-proof}
\end{figure}

We now have all the tools necessary to verify the priority inheritance protocol (PIP).
\rF{fig-pip-proof} gives the full algorithm with flow-based specifications; we also include some intermediate assertions to illustrate the reasoning steps for the \code{acquire} method, which we explain in more detail below.

We instantiate our framework in order to capture the PIP invariants as follows:
\begin{align*}
  \fields
  & \defeq \set{\nextField \colon y, \field{curr\_prio} \colon q, \field{def\_prio} \colon q^0, \field{prios} \colon Q} \\
  % & \\
  \abstractionFn(x, \fields, z)
  & \defeq
    \begin{cases}
      \lambdaFn{M}{\max(M \cup \{q^0\})} & \text{if } z = y \neq \nullVal \\
      \zeroFn & \text{otherwise}
    \end{cases} \\
  % & \\
  \goodCondition(x, \fields, M)
  & \defeq q^0 \geq 0 \land (\forall q' \in Q.\; q' \geq 0) \\
  & \quad \land M = Q \land q = \set{\max(Q \cup \{q^0\})} \\
  % & \\
  \globalInt(\interface)
  & \defeq
    \interface = (\set{\_ \goesto \emptyset}, \set{\_ \goesto \emptyset})
\end{align*}
Each node has the four fields listed in $\fields$.
We abstract the heap into a flow graph by letting each node have an edge to its \code{next} successor labelled by a function that passes to it the maximum incoming priority or the node's default priority: whichever is larger.
With this definition, one can see that the flow of every node will be the multiset containing \as{exactly the priorities of its predecessors}.
The node-local invariant $\goodCondition$ says that all priorities are non-negative, the flow $M$ of each node is stored in the \code{prios} field, and its current priority is the maximum of its default and incoming priorities.
Finally, the constraint $\globalInt$ on the global interface expresses that the graph is closed -- it has no inflow or outflow.

\paragraph{Specifications and Proof Outline}
Our end-to-end specifications of \code{acquire} and \code{release} guarantee that if we start with a valid flow graph (which is closed, according to $\globalInt$), we are guaranteed to return a valid flow graph with the same interface (\ie{} the graph remains closed)\asfootnote{I considered adding a remark here to the effect that keeping the identical interface allows for recomposition at call site with any other graphs, but thought it might get confusing: at least, I didn't find a good way.}. For clarity of the exposition, we focus here on how we prove that being a flow graph (with the same composite interface) is preserved; extending this specification to one which proves \eg{} that \code{acquire} adds the expected edge is straightforward in terms of standard separation logic, and we include such a specification in \refApp{sec-pip-full}.

The specification for \code{update} is more subtle. Remember that we call this function in states in which the current interfaces abstracting over the state of the whole graph's nodes \emph{cannot} compose to a flow graph; the propagation of priority information is still ongoing, and only once it completes will the nodes all satisfy their invariants and make up a flow graph. Instead, our precondition for \code{update} uses a ``fake'' interface $\interface'_n$ for the node $n$, while $n$'s current state actually matches interface $\interface_n$. The fake interface $\interface'_n$ is used to express that \emph{if} $n$ could adjust its inflow according to the propagated priority change \emph{without} changing its outflow, then it would compose back with the rest of the graph, and restore the graph's overall interface. The shorthand $\adjustInflow$ defines the required change to $n$'s inflow. In general (except when $n$'s \code{next} field is null, or $n$'s flow value is unchanged), \sk{it is not possible for $n$ to satisfy $\interface'_n$}; by updating $n$'s inflow, we will necessarily update its outflow. However, we can then construct a corresponding ``fake'' interface for the next node in the graph, reflecting the update yet to be accounted for.

\sk{
To illustrate this idea more clearly, let us consider the first \code{if}-branch in the proof of \code{acquire}.
Our intermediate proof steps are shown as purple annotations surrounded by braces.
The first step, as shown in the first line inside the method body, is to apply \refRule{rule-fold-unfold} twice (on the flow graphs represented by these predicates) and  peel off $\nodePred$ predicates for each of $r$ and $p$.
The update to $r$'s \code{next} field (line~\ref{line:pip-r-next}) causes the correct singleton interface of $r$ to change to $\interface'_r$: its outflow (previously none, since the \code{next} field was null) now propagates flow to $p$.
We summarise this state in the assertion on line~\ref{line:pip-annot} (we omit \eg{} repetition of properties from the function's precondition, focusing on the flow-related steps of the argument).
We now rewrite this state; using the definition of interface composition (\rD{def-flow-interfaces}) we deduce that although $\interface'_r$ and $\interface_p$ do not compose (since the former has outflow that the latter does not account for as inflow), the alternative ``fake'' interface $\interface'_p$ for $p$ (which artificially accounts for the missing inflow) \emph{would} do so (\cf{} line~\ref{line:pip-annott}).
Essentially, we show $\interface_r \intComp \interface_p = \interface'_r \intComp \interface'_p$, that the interface of $\set{r, p}$ would be unchanged if $p$ could somehow have interface $\interface'_p$.
Now by setting $\interface_2 = \interface'_r \intComp \interface_1$ and using algebraic properties of interfaces, we assemble the precondition expected by \code{update}.
After the call, \code{update}'s postcondition gives us the desired postcondition.
}

We focused here on the details of \code{acquire}'s proof, but very similar manipulations are required for reasoning about the recursive call in \code{update}'s implementation. The main difference there is that if the if-condition wrapping the recursive call is false then either the last-modified node has no successor (and so there is no outstanding inflow change needed), or we have \code{from = to} which implies that the ``fake'' interface is actually the same as the currently correct one.

Despite the property proved for the PIP example being a rather delicate recursive invariant over the (potentially cyclic) graph, the power of our framework enables extremely succinct specifications for the example, and proofs which require the application of relatively few generic lemmas. The integration with standard separation logic reasoning, and the parallel separation algebras provided by flow interfaces allow decomposition and recomposition to be simple proof steps. For this proof, we integrated with standard sequential separation logic, but in the next section we will show that compatibility with concurrent SL techniques is similarly straightforward.

%%% Local Variables:
%%% mode: latex
%%% TeX-master: "writeup"
%%% End:

\section{Advanced Flow Reasoning \& the Harris List}\label{sec-ea}

This section introduces some advanced \framework theory and demonstrates its use in the proof of the Harris list. \tw{We note that~\cite{DBLP:journals/pacmpl/KrishnaSW18} presented a proof of this data structure in the original flow framework. The proof given here shows that the new framework eliminates the need for the customized concurrent separation logic defined in~\cite{DBLP:journals/pacmpl/KrishnaSW18}.}
We start with a recap of Harris' algorithm \tw{adapted from~\cite{DBLP:journals/pacmpl/KrishnaSW18}}.

\subsection{The Harris List Algorithm}
\label{sec-harris-list}

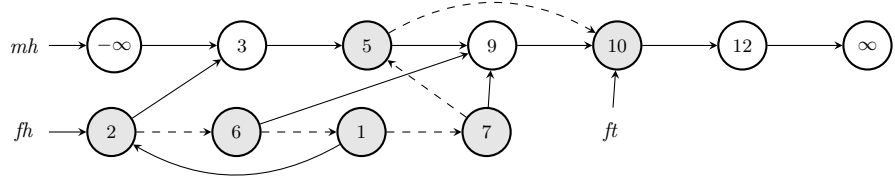
\begin{figure}[t]
  \centering
  \begin{tikzpicture}[>=stealth, font=\footnotesize, scale=0.8, every node/.style={scale=0.8}]
    % Nodes
    \node[stackVar] (hd) {$\mainListHead$};
    \node[unode, right=.5cm of hd] (inf) {$-\infty$};
    \node[unode, right= of inf] (n3) {$3$};
    \node[mnode, right= of n3] (n5) {$5$};
    \node[unode, right= of n5] (n9) {$9$};
    \node[mnode, right= of n9] (n10) {$10$};
    \node[unode, right= of n10] (n12) {$12$};
    \node[unode, right= of n12] (Inf) {$\infty$};

    %\node[stackVar] (fr) at ($(hd) + (1.5, 0)$) {$\freeListHead$};
    \node[stackVar, below=.5cm of hd] (fr) {$\freeListHead$};
    \node[mnode, right=.5cm of fr] (n2) {$2$};
    \node[mnode, right= of n2] (n6) {$6$};
    \node[mnode, right= of n6] (n1) {$1$};
    \node[mnode, right= of n1] (n7) {$7$};
    \node[stackVar, right= of n7] (ft) {$\freeListTail$};

    % Edges
    \draw[edge] (hd) to (inf);
    \draw[edge] (inf) to (n3);
    \draw[edge] (n3) to (n5);
    \draw[edge] (n5) to (n9);
    \draw[edge] (n9) to (n10);
    \draw[edge] (n10) to (n12);
    \draw[edge] (n12) to (Inf);

    \draw[edge] (fr) to (n2);
    \draw[fedge] (n2) to (n6);
    \draw[edge] (n2) to (n3);
    \draw[fedge] (n6) to (n1);
    \draw[edge] (n6) to (n9);
    \draw[fedge] (n1) to (n7);
    \draw[fedge] (n7) to (n5);
    \draw[edge] (n1) to[bend left=30] (n2);
    \draw[edge] (n7) to (n9);
    \draw[fedge] (n5) to[bend left=30] (n10);
    \draw[edge] (ft) to (n10);

  \end{tikzpicture}
  \caption[A potential state of the Harris list.]{A potential state of the Harris list with explicit memory
    management. \code{fnext} pointers are shown with dashed edges, marked nodes are shaded gray, and null pointers are omitted for clarity.}
  \label{fig-harris}
\end{figure}

The power of flow-based reasoning is exhibited in the proof of overlaid data structures such as the Harris' list, a concurrent non-blocking linked list algorithm \cite{DBLP:conf/wdag/Harris01}.
This algorithm implements a set data structure as a sorted list, and uses atomic compare-and-swap (CAS) operations to allow a high degree of parallelism.
As with the sequential linked list, Harris' algorithm inserts a new key $k$ into the list by finding nodes $k_1, k_2$ such that $k_1 < k < k_2$, setting $k$ to point to $k_2$, and using a CAS to change $k_1$ to point to $k$ only if it was still pointing to $k_2$.
However, a similar approach fails for the delete operation.
If we had consecutive nodes $k_1, k_2, k_3$ and we wanted to delete $k_2$ from the list (say by setting $k_1$ to point to $k_3$), there is no way to ensure with one CAS that $k_2$ and $k_3$ are also still adjacent (another thread could have inserted/deleted in between them).

Harris' solution is a two step deletion: first atomically mark $k_2$ as deleted (by setting a mark bit on its successor field) and then later remove it from the list using a single CAS.
After a node is marked, no thread can insert or delete to its right, hence a thread that wanted to insert $k'$ to the right of $k_2$ would first remove $k_2$ from the list and then insert $k'$ as the successor of $k_1$.

In a non-garbage-collected environment, unlinked nodes cannot be immediately freed as there may be suspended threads continuing to hold a reference to them.
A common solution is to maintain a second ``free list'' to which marked nodes are added before they are unlinked from the main list (this is the so-called drain technique).
These nodes are then labeled with a timestamp, which is used by a maintenance thread to free them when it is safe to do so.
This leads to the kind of data structure shown in \rF{fig-harris}, where each node has two pointer fields: a \code{next} field for the main list and an \code{fnext} field for the free list (shown as dashed edges).
Threads that have been suspended while holding a reference to a node that was added to the free list can simply continue traversing the \code{next} pointers to find their way back to the unmarked nodes of the main list.

Even for seemingly simple properties such as that the Harris list is memory safe and not leaking memory, the proof will rely on the following non-trivial invariants:
\begin{enumerate}[(a)]
\item \label{harris-inv-two-lists} The data structure consists of two (potentially overlapping) lists: a list on \code{next} edges beginning at $\mainListHead$ and one on \code{fnext} edges beginning at $\freeListHead$.
\item \label{harris-inv-lists-closed} The two lists are null terminated and \code{next} edges from nodes in the free list point to nodes in the free list or main list.
\item \label{harris-inv-marked} All nodes in the free list are marked.
\item \label{harris-inv-free-list-tail} $\freeListTail$ is an element in the free list.
\end{enumerate}

\subsubsection{Challenges}

\begin{figure}[t]
  \centering
  \begin{subfigure}[b]{0.15\textwidth}
    \centering
    \begin{tikzpicture}[>=stealth, scale=0.8, every node/.style={scale=0.8}, font=\footnotesize]
      \def\xsep{1}
      \def\ysep{-1.5}

      % No solution
      \node[unode] (a1) {$1$};
      \node[unode] (a2) at ($(a1) + (0, \ysep)$) {$?$};
      \node[unode] (a3) at ($(a2) + (0, \ysep)$) {$?$};
      \node[inflow] (a1-in) at ($(a1.north west) + (-.5, .5)$) {$1$};
      % A phantom node for horizontal alignment
      \node[inflow] (a1-in') at ($(a1.north east) + (.5, .5)$) {};
      % A phantom node for vertical alignment
      \node[prio] (a3-lab) at ($(a3.south) + (0, -0.25)$) {};

      \draw[edge, bend right] (a1-in) to (a1);
      \draw[edge] (a1) to (a2);
      \draw[edge, bend right] (a2) to (a3);
      \draw[edge, bend right] (a3) to (a2);
    \end{tikzpicture}
    \caption{}
    \label{fig-ea-no-solution}
  \end{subfigure}
  \begin{subfigure}[b]{0.15\textwidth}
    \centering
    \begin{tikzpicture}[>=stealth, scale=0.8, every node/.style={scale=0.8}, font=\footnotesize]
      \def\xsep{1}
      \def\ysep{-1.5}
      \def\xshift{3}

      % Many solutions
      \node[unode] (b1) {$1$};
      \node[unode] (b2) at ($(b1) + (0, \ysep)$) {$x$};
      \node[unode] (b3) at ($(b2) + (0, \ysep)$) {$x$};
      \node[inflow] (b1-in) at ($(b1.north west) + (-.5, .5)$) {$1$};
      % A phantom node for horizontal alignment
      \node[inflow] (b1-in') at ($(b1.north east) + (.5, .5)$) {};
      % A phantom node for vertical alignment
      \node[prio] (b3-lab) at ($(b3.south) + (0, -0.25)$) {};

      \draw[edge, bend right] (b1-in) to (b1);
      \draw[edge, bend right] (b2) to (b3);
      \draw[edge, bend right] (b3) to (b2);
    \end{tikzpicture}
    \caption{}
    \label{fig-ea-many-solutions}
  \end{subfigure}
  \begin{subfigure}[b]{0.6\textwidth}
    \centering
    \begin{tikzpicture}[>=stealth, scale=0.8, every node/.style={scale=0.8}, font=\footnotesize]
      \def\xsep{1.2}
      \def\ysep{-1.5}
      \def\xshift{4}
      \def\xlab{0.25}

      % Creating cycles (before)
      \node[unode] (c1) {$1$};
      \node[prio] (c1n) at ($(c1.east) + (\xlab, 0)$) {$n_1$};
      \node[unode] (c2) at ($(c1) + (2*\xsep, 0)$) {$1$};
      \node[prio] (c2n) at ($(c2.west) + (-\xlab, 0)$) {$n_2$};
      \node[unode] (c3) at ($(c1) + (0, \ysep)$) {$1$};
      \node[prio] (c3n) at ($(c3.west) + (-\xlab, 0)$) {$n_3$};
      \node[unode] (c4) at ($(c2) + (0, \ysep)$) {$1$};
      \node[prio] (c4n) at ($(c4.east) + (\xlab, 0)$) {$n_4$};
      \node[unode] (c5) at ($(c3) + (\xsep, \ysep)$) {$1$};
      \node[prio] (c5n) at ($(c5.south) + (0, -\xlab)$) {$n_5$};
      \node[inflow] (c1-in) at ($(c1.north west) + (-.5, .5)$) {$1$};

      \draw[edge, bend right] (c1-in) to (c1);
      \draw[edge] (c1) to (c3);
      \draw[edge] (c3) to (c5);
      \draw[edge] (c5) to (c4);
      \draw[edge] (c4) to (c2);

      \begin{scope}[on background layer] % Blue region around modified nodes
        \def\s{.3}
        \def\t{.1}
        \draw[draw=blue!50, rounded corners, thick, fill=blue!10]
        ($(c3n.north west) + (-\t, \s)$) -- ($(c4n.north east) + (\t, \s)$) -- ($(c4n.south east) + (\t, -\s)$) -- ($(c3n.south west) + (-\t, -\s)$) -- cycle;
      \end{scope}

      % Creating cycles (after)
      \node[unode] (d1) at ($(c1) + (\xshift+2*\xsep, 0)$) {$1$};
      \node[prio] (d1n) at ($(d1.east) + (\xlab, 0)$) {$n_1$};
      \node[unode] (d2) at ($(d1) + (2*\xsep, 0)$) {$1$};
      \node[prio] (d2n) at ($(d2.west) + (-\xlab, 0)$) {$n_2$};
      \node[unode] (d3) at ($(d1) + (0, \ysep)$) {$1$};
      \node[prio] (d3n) at ($(d3.west) + (-\xlab, 0)$) {$n_3$};
      \node[unode] (d4) at ($(d2) + (0, \ysep)$) {$1$};
      \node[prio] (d4n) at ($(d4.east) + (\xlab, 0)$) {$n_4$};
      \node[unode] (d5) at ($(d3) + (\xsep, \ysep)$) {$1$};
      \node[prio] (d5n) at ($(d5.south) + (0, -\xlab)$) {$n_5$};
      \node[inflow] (d1-in) at ($(d1.north west) + (-.5, .5)$) {$1$};
      % A phantom node for horizontal alignment
      \node[inflow] (d2-in') at ($(d2.north east) + (.5, .5)$) {};

      \draw[edge, bend right] (d1-in) to (d1);
      \draw[edge] (d1) to (d3);
      \draw[edge] (d3) to (d2);
      \draw[edge, bend right] (d4) to (d5);
      \draw[edge, bend right] (d5) to (d4);

      \begin{scope}[on background layer] % Blue region around modified nodes
        \def\s{.3}
        \def\t{.1}
        \draw[draw=blue!50, rounded corners, thick, fill=blue!10]
        ($(d3n.north west) + (-\t, \s)$) -- ($(d4n.north east) + (\t, \s)$) -- ($(d4n.south east) + (\t, -\s)$) -- ($(d3n.south west) + (-\t, -\s)$) -- cycle;
      \end{scope}

      % Squiggly arrow
      \node[phantomNode, font=\large] at ($($(c4)!0.5!(d3)$)$) {$\rightsquigarrow$};
    \end{tikzpicture}
    \caption{}
    \label{fig-ea-creating-cycles}
  \end{subfigure}
  \caption[Examples of graphs that motivate effective acyclicity.]{Examples of graphs that motivate effective acyclicity.
    All graphs use the path-counting flow domain, the flow is displayed inside each node, and the inflow is displayed as curved arrows to the top-left of nodes.
    \ref{fig-ea-no-solution} shows a graph and inflow that has no solution to \rEq{eqn-flow-equation};
    \ref{fig-ea-many-solutions} has many solutions.
    \ref{fig-ea-creating-cycles} shows a modification that preserves the interface of the modified nodes, yet goes from a graph that has a unique flow to one that has many solutions to \rEq{eqn-flow-equation}.}
  \label{fig-ea-examples}
\end{figure}
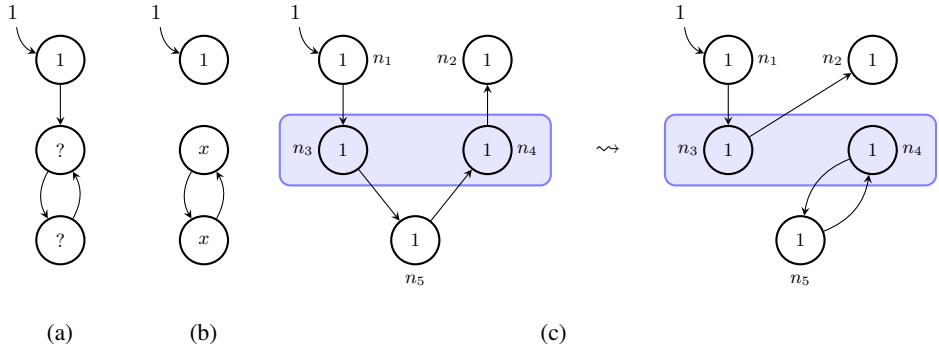

\sk{
To prove that Harris' algorithm maintains the invariants listed above we must tackle a number of challenges.
First, we must construct flow domains that allow us to describe overlaid data structures, such as the overlapping main and free lists (\rSc{sec-products-overlays}).
Second, the flow-based proofs we have seen so far center on showing that the interface of some modified region is unchanged.
However, if we consider a program that allocates and inserts a new node into a data structure (like the insert method of Harris), then the interface cannot be the same since the domain has changed (it has increased by the newly allocated node).
We must thus have a means to reason about preservation of flows by modifications that allocate new nodes (\rSc{sec-allocation-replacement}).
The third issue is that in some flow domains, there exist graphs $\graph$ and inflows $\inflow$ for which no solutions to the flow equation \rEq{eqn-flow-equation} exist.
For instance, consider the path-counting flow domain and the graph in \rF{fig-ea-no-solution}.
Since we would need to use the path-counting flow in the proof of the Harris list to encode its structural invariants, this presents a challenge (\rSc{sec-unique-fixpoints}).

We will next see how to overcome these three challenges in turn, and then apply those solution to the proof of the Harris list in \rSc{sec-harris-proof}.}

\subsection{Product Flows for Reasoning about Overlays}
\label{sec-products-overlays}

An important fact about flows is that any flow of a graph over a product of two flow domains is the product of the flows on each flow domain component.

\begin{lemma}
  \label{lem-fd-product}
  Given two flow domains $(\mDom_1, \mOp_1, \mZero_1, \edgeDom_1)$ and $(\mDom_2, \mOp_2, \mZero_2, \edgeDom_2)$, the \emph{product} domain $(\mDom_1 \times \mDom_2, \mOp, (\mZero_1, \mZero_2), \edgeDom_1 \times \edgeDom_2)$ where $(\mVar_1, \mVar_2) \mOp (\mVar'_1, \mVar'_2) \defeq (\mVar_1 \mOp_1 \mVar'_1, \mVar_2 \mOp_2 \mVar'_2)$ is a flow domain.
\end{lemma}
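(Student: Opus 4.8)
The plan is to unfold the definition of a flow domain (\rD{def-flow-graphs}'s prerequisite, i.e.~a commutative cancellative total monoid together with a set of edge functions) and verify its two requirements for the product structure: that $(\mDom_1 \times \mDom_2, \mOp, (\mZero_1, \mZero_2))$ is a commutative cancellative total monoid, and that $\edgeDom_1 \times \edgeDom_2$ is a set of endofunctions on $\mDom_1 \times \mDom_2$. For the latter I would first make explicit the (implicit) coercion in the statement: a pair $(e_1, e_2) \in \edgeDom_1 \times \edgeDom_2$ is to be read as the function $\lambdaFn{(\mVar_1, \mVar_2)}{(\mVar_1 \pipe e_1, \mVar_2 \pipe e_2)}$, which is manifestly an element of $(\mDom_1 \times \mDom_2) \to (\mDom_1 \times \mDom_2)$. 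Hence $\edgeDom_1 \times \edgeDom_2 \subseteq (\mDom_1 \times \mDom_2) \to (\mDom_1 \times \mDom_2)$ holds by construction, with nothing further to check.

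The monoid axioms all reduce to their componentwise counterparts. Since $\mOp$ is defined by $(\mVar_1, \mVar_2) \mOp (\mVar'_1, \mVar'_2) \defeq (\mVar_1 \mOp_1 \mVar'_1, \mVar_2 \mOp_2 \mVar'_2)$ and both $\mOp_1, \mOp_2$ are total, $\mOp$ is total. Associativity, commutativity, and the identity law for $(\mZero_1, \mZero_2)$ each follow by applying the corresponding law of $\mOp_1$ in the first coordinate and of $\mOp_2$ in the second, then re-pairing.

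The one property carrying any content is cancellativity. Suppose $(\mVar_1, \mVar_2) \mOp (\mVar'_1, \mVar'_2) = (\mVar_1, \mVar_2) \mOp (\mVar''_1, \mVar''_2)$. By definition of $\mOp$ this is equivalent to the conjunction $\mVar_1 \mOp_1 \mVar'_1 = \mVar_1 \mOp_1 \mVar''_1$ and $\mVar_2 \mOp_2 \mVar'_2 = \mVar_2 \mOp_2 \mVar''_2$; cancellativity of the two component monoids then yields $\mVar'_1 = \mVar''_1$ and $\mVar'_2 = \mVar''_2$, hence $(\mVar'_1, \mVar'_2) = (\mVar''_1, \mVar''_2)$. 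Note that, because the product operation is total, the side-condition ``$\ldots$ is defined'' appearing in the cancellativity requirement is vacuous here.

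I do not anticipate a genuine obstacle: the statement is the routine observation that commutative cancellative monoids are closed under finite products, and that the coordinatewise product of two endofunction sets lives in the endofunctions of the product. The only point deserving care is spelling out the coercion of $\edgeDom_1 \times \edgeDom_2$ into genuine endofunctions on $\mDom_1 \times \mDom_2$ so that the membership claim typechecks; this closure is precisely what licenses the componentwise decomposition of flows advertised in the sentence preceding the lemma.
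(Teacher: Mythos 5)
Your proof is correct and is precisely the routine componentwise verification that the paper presupposes: the paper states \rL{lem-fd-product} without any proof, so there is no authorial argument to diverge from, and your check of totality, associativity, commutativity, identity, and (componentwise) cancellativity, together with making explicit the coercion of a pair $(e_1,e_2) \in \edgeDom_1 \times \edgeDom_2$ into the endofunction $\lambdaFn{(\mVar_1,\mVar_2)}{(\mVar_1 \pipe e_1,\, \mVar_2 \pipe e_2)}$, covers everything the formal definition of a flow domain requires. The only cosmetic addition you might make is to note that the informally required zero edge function is also present, since $(\zeroFn, \zeroFn)$ coerces to the zero function on $\mDom_1 \times \mDom_2$.
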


This lemma greatly simplifies reasoning about overlaid graph structures; we will use the product of two path-counting flows to describe a structure consisting of two overlaid lists that make up the Harris list.

\subsection{Contextual Extensions and The Replacement Theorem}
\label{sec-allocation-replacement}

In general, when modifying a flow graph $\fGraph$ to another flow graph $\fGraph'$, requiring that $\fGraph'$ satisfies \emph{precisely} the same interface $\interfaceFn(\fGraph)$ can be too strong a condition as it does not permit allocating \emph{new nodes}.
Instead, we want to allow $\interfaceFn(\fGraph')$ to differ from $\interfaceFn(\fGraph)$
in that the new interface could have a larger domain, as long as the new
nodes are fresh and edges from the new nodes do not change the outflow of the modified region.

\begin{definition}
  An interface $\interface = (\inflow, \outflow)$ is \emph{contextually extended} by $\interface' = (\inflow', \outflow')$, written $\interface \intLessEquiv \interface'$, if and only if
  \begin{enumerate*}[label=(\arabic{enumi})]
  \item $\dom(\inflow) \subseteq \dom(\inflow')$,
  \item $\forall n \in \dom(\inflow).\; \inflow(n) = \inflow'(n)$, and
  \item $\forall n' \not\in \dom(\inflow).\; \outflow(n') = \outflow'(n')$.
  \end{enumerate*}
\end{definition}

The following theorem states that contextual extension preserves composability and is itself preserved under interface composition.
\begin{theorem}[Replacement Theorem]
  \label{thm-replacement}
  If $\interface = \interface_1 \intComp \interface_2$, and $\interface_1 \intLessEquiv \interface'_1$ are all valid interfaces such that $\interface'_1 \cap \interface_2 = \emptyset$ and $\forall n \in \interface'_1 \setminus \interface_1.\; \outflowOfInt{\interface_2}(n) = \mZero$,
  then there exists a valid $\interface' = \interface'_1 \intComp \interface_2$ such that $\interface \intLessEquiv \interface'$.
\end{theorem}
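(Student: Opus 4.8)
The plan is to work entirely at the level of interfaces: construct the candidate $\interface'$ explicitly, check that it genuinely is the composition $\interface'_1 \intComp \interface_2$ (in the sense of \rD{def-flow-interfaces}), verify that it contextually extends $\interface$, and finally appeal to \rL{lem-interface-composition-congruence} to recover \emph{validity}, i.e.\ realizability by an actual flow graph. Write $\interface_i$ with domain $N_i$, $\interface'_1$ with domain $N'_1 \supseteq N_1$, and $\interface = \interface_1 \intComp \interface_2$ with domain $N = N_1 \uplus N_2$; the freshly allocated nodes are exactly $N'_1 \setminus N_1$, and they are disjoint from $N_2$ by the hypothesis $\interface'_1 \cap \interface_2 = \emptyset$.

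Define $\interface' = (\inflow', \outflow')$ on domain $N' = N'_1 \uplus N_2$ by setting $\inflowOfInt{\interface'}(n) = \inflowOfInt{\interface}(n)$ for $n \in N$, $\inflowOfInt{\interface'}(n) = \inflowOfInt{\interface'_1}(n)$ for each new node $n \in N'_1 \setminus N_1$, and $\outflowOfInt{\interface'}(n) = \outflowOfInt{\interface'_1}(n) \mOp \outflowOfInt{\interface_2}(n)$ for $n \notin N'$. First I would verify that $\interface'$ satisfies the three clauses of composition for $\interface'_1 \intComp \interface_2$: disjointness is immediate; the inflow clause on old nodes $n \in N_1$ reduces to the inflow clause of the original composition together with $\inflowOfInt{\interface'_1}(n) = \inflowOfInt{\interface_1}(n)$ (contextual extension, clause 2); on each new node it uses precisely the hypothesis $\outflowOfInt{\interface_2}(n) = \mZero$, so that $\inflowOfInt{\interface'}(n) \mOp \outflowOfInt{\interface_2}(n) = \inflowOfInt{\interface'_1}(n)$; and on $n \in N_2$ it uses the inflow clause of the original composition together with $\outflowOfInt{\interface_1}(n) = \outflowOfInt{\interface'_1}(n)$ (contextual extension, clause 3, which applies since $n \notin N_1$). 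The outflow clause holds by construction. Hence $\interface'_1 \intComp \interface_2$ is defined and equals $\interface'$.

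Next I would check $\interface \intLessEquiv \interface'$: clause 1 is $N \subseteq N'$; clause 2 holds because $\inflow'$ agrees with $\inflow$ on $N$ by construction; and clause 3, on nodes genuinely external to $N'$, follows from the outflow clause of the original composition together with $\outflowOfInt{\interface_1}(n) = \outflowOfInt{\interface'_1}(n)$ for such $n$. For validity, since $\interface'_1$ and $\interface_2$ are valid I would pick flow graphs $\fGraph'_1, \fGraph_2$ realizing them; because $\interface'_1 \intComp \interface_2$ is now known to be defined, the congruence established in \rL{lem-interface-composition-congruence} (whose proof shows that a graph composition $\fGraph'_1 \fGraphComp \fGraph_2$ is defined exactly when the composition of their interfaces is) yields that $\fGraph'_1 \fGraphComp \fGraph_2$ is a flow graph whose interface is $\interface'$, so $\interface'$ is valid.

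The main obstacle is the treatment of the fresh nodes $N'_1 \setminus N_1$, which are interior to $\interface'$ but exterior to $\interface$. On the one hand, the hypothesis $\outflowOfInt{\interface_2} = \mZero$ on these nodes is exactly what makes the inflow side-condition of composition go through — without it, $\interface_2$ would ``expect'' outflow contributions from nodes it cannot account for. On the other hand, one must be careful that clause 3 of contextual extension is required only at nodes truly outside $N'$ (where $\outflow'$ is defined), rather than at the new nodes, which carry no outflow constraint. Once these two points are handled correctly, the remaining steps are routine equational rewriting justified by commutativity and cancellativity of $\mOp$.
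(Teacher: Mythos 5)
Your proposal is correct, and---since this version of the paper states Theorem~\ref{thm-replacement} without any accompanying proof---it supplies exactly the argument the development intends: construct $\interface'$ explicitly, verify the three clauses of \rD{def-flow-interfaces} for $\interface'_1 \intComp \interface_2$ (with the inflow clause at fresh nodes discharged precisely by the hypothesis $\outflowOfInt{\interface_2}(n) = \mZero$), check contextual extension clause by clause, and recover validity of $\interface'$ from realizations $\fGraph'_1, \fGraph_2$ via the converse direction of \rL{lem-interface-composition-congruence}, which the paper's proof of that lemma does establish. Your reading of clause (3) of contextual extension---quantifying over $n' \notin \dom(\inflowOfInt{\interface'})$ rather than $n' \notin \dom(\inflowOfInt{\interface})$ as literally written---is the right one, and indeed forced: at a fresh node $\outflowOfInt{\interface'}$ is undefined, so the literal reading would make $\interface_1 \intLessEquiv \interface'_1$ unsatisfiable whenever $\interface'_1$ strictly enlarges the domain, rendering the theorem vacuous in exactly the allocation scenario it targets. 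One pedantic repair: in the inflow check at $n \in \interface_2$, under your own reading of clause (3) the equality $\outflowOfInt{\interface_1}(n) = \outflowOfInt{\interface'_1}(n)$ must be justified by $n \notin \interface'_1$ (immediate from the hypothesis $\interface'_1 \cap \interface_2 = \emptyset$), not merely by $n \notin \interface_1$ as you wrote.
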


\sk{
In terms of our flow predicates, this theorem gives rise to the following adaptation of the \refRule{rule-repl-equal} rule:
\begin{align}
  & \graphPred(X'_1, \fGraph'_1) * \graphPred(X_2, \fGraph_2)
  \land \fGraph = \fGraph_1 \fGraphComp \fGraph_2
  \land \interfaceFn(\fGraph_1) \intLessEquiv \interfaceFn(\fGraph'_1) \notag \\
  \entails \;\;
  & \exists \fGraph'.\;
  \graphPred(X'_1 \uplus X_2, \fGraph')
  \land \fGraph' = \fGraph'_1 \fGraphComp \fGraph_2
  \land \interfaceFn(\fGraph) \intLessEquiv \interfaceFn(\fGraph')
  \tag{\sc{Repl+}}\label{rule-repl-alloc}
\end{align}

The rule \refRule{rule-repl-alloc} is derived from the Replacement Theorem by letting
$\interface = \interfaceFn(\fGraph), \interface_1 = \interfaceFn(\fGraph_1), \interface_2 = \interfaceFn(\fGraph_2)$ and $\interface'_1 = \interfaceFn(\fGraph'_1)$.
We know $\interface_1 \intLessEquiv \interface'_1$, $\fGraph = \fGraph_1 \fGraphComp \fGraph_2$ tells us (by \rL{lem-interface-composition-congruence}) that $\interface = \interface_1 \intComp \interface_2$, and $\graphPred(X'_1, \fGraph'_1) * \graphPred(X_2, \fGraph_2)$ gives us $\interface'_1 \cap \interface_2 = \emptyset$.
The final condition of the Replacement Theorem is to prove that there is no outflow from $X_2$ to any newly allocated node in $X'_1$.
While we can use additional ghost state to prove such constraints in our proofs, if we assume that the memory allocator only allocates fresh addresses and restrict the abstraction function $\abstractionFn$ to only propagate flow along an edge $(n, n')$ if $n$ has a (non-ghost) field with a reference to $n'$ then this condition is always true.
For simplicity, and to keep the focus of this paper on the flow reasoning, we make this assumption in all subsequent proofs.
}

\subsection{Existence and Uniqueness of Flows}
\label{sec-unique-fixpoints}

We typically express
global properties of a graph $\graph = (N, \edgeFn)$ by fixing a global inflow
$\inflow \colon N \to \mDom$ and then constraining the flow of each node in
$N$ using node-local conditions. However, as we discussed at the beginning of this section, there is no general guarantee that a flow exists or is unique for a given $\inflow$ and $\graph$.
% While flow interfaces provide us with a way to prove locally that a modification to a subgraph preserves the flow in the rest of the graph, we still have to compute and reason about the fixpoint of the flow equation in the modified subgraph.
% One problem is that the modified subgraph can be lifted to a modified flow graph, and consequently a flow interface, only if there exists a solution to the flow equation in the modified region.
% \skfootnote{Should we explain a bit more why the inflow is fixed?}
% However, for many flow domains, given an inflow $\inflow$ and a graph $(N, \edgeFn)$ it is not clear that there exists any flow $\flow$ satisfying $\flowEqn(\inflow, \edgeFn, \flow)$.
% For instance, consider the path-counting flow domain from
% \rE{ex-fd-path-count} and the case where $(N, \edgeFn)$ contains cycles.
% In other domains, when there does exist a solution to the flow equation, there could be multiple solutions.
% An example of this is \todo{}.
% In these domains, we may want to talk about specific solutions to the flow equation in our specifications.
The remainder of this section presents two complementary conditions under which we can prove that
our flow fixpoint equation always has a unique solution. To this end, we say that a flow domain $(\mDom, \mOp, \mZero, \edgeDom)$ has \emph{unique flows} if for every graph $(N, \edgeFn)$ over this flow domain and inflow $\inflow \colon N \to \mDom$, there exists a unique $\flow$ that satisfies the flow equation $\flowEqn(\inflow, \edgeFn, \flow)$.
But first, we briefly recall some more monoid theory.

\techreport{
  \subsubsection{Positive Monoids and Endomorphisms}
}
We say $\mDom$ is \emph{positive} if $\mVar_1 \mOp \mVar_2 = \mZero$ implies that $\mVar_1 = \mVar_2 = \mZero$.
For a positive monoid $\mDom$, we can define a partial order $\mLeq$ on its elements as $\mVar_1 \mLeq \mVar_2$ if and only if $\exists \mVar_3.\; \mVar_1 \mOp \mVar_3 = \mVar_2$.
Positivity also implies that every $\mVar \in \mDom$ satisfies $\mZero \mLeq \mVar$.

%\skfootnote{Is it confusing that we are using $e$ for random functions and also for the edge labels?}
% $\pipe$ is left associative, so $\mVar \pipe e \pipe e' = e'(e(\mVar))$.
For $e, e' \colon \mDom \to \mDom$, we write $e \mOp e'$ for the function that maps $\mVar \in \mDom$ to $e(\mVar) \mOp e'(\mVar)$.
We lift this construction to a set of functions $E$ and write it as $\mBigOp_{e \in E}e$.

\begin{definition}
  A function $e \colon \mDom \to \mDom$ is called an \emph{endomorphism} on $\mDom$ if for every $\mVar_1, \mVar_2 \in \mDom$, $e(\mVar_1 \mOp \mVar_2) = e(\mVar_1) \mOp e(\mVar_2)$.
  We denote the set of all endomorphisms on $\mDom$ by $\morphisms(\mDom)$.
\end{definition}

Note that \as{for cancellative $\mDom$, for} every endomorphism $e \in \morphisms(\mDom)$, $e(\mZero) = \mZero$ by cancellativity.
\as{Note further} that $e \mOp e' \in \morphisms(\mDom)$ for any $e, e' \in \morphisms(\mDom)$.
Similarly, for $E \subseteq \morphisms(\mDom)$, $\mBigOp_{e \in E}e \in \morphisms(\mDom)$.
We say that a set of endomorphisms $E \subseteq \morphisms(\mDom)$ is \emph{closed} if for every $e, e' \in E$, $e \fnComp e' \in E$ and $e \mOp e' \in E$.

% \subsubsection{Edge-local Flows}
% \label{sec-mp-flows}

% A simple but useful case is when all edge functions $e \in \edgeDom$ \emph{ignore} their input; i.e.~are constant functions. We call such a flow domain \emph{edge-local}.
% % and always propagate a fixed value.
% %\asout{We can think of these fixed values as \emph{messages} that each node is passing to its neighbours.}
% In this case, the flow of every node can be computed as a direct aggregation (according to the flow domain operator $\mOp$) of the
% (constant) values its neighbours edge functions propagate; the flow equation is no-longer recursive and always has a unique solution.
% %messages it receives, and hence the flow equation always has a unique solution.

% \begin{example}
%   The flow of a PIP graph can be encoded using an edge-local flow
%   domain. This is because the PIP implementation tracks the flow explicitly as
%   part of the state of each object (the multisets stored in the \lstinline+prios+
%   field). We explain this in more depth in \rSc{sec-mp-transformation}.
% \end{example}

% \begin{lemma}
%   \label{lem-mp-flows-exist}
%   If $(\mDom, \mOp, \mZero, \edgeDom)$ is a flow domain such that for every $e \in \edgeDom$ there exists $a \in \mDom$ such that $e \equiv \lambdaFn{\mVar}{a}$, then this flow domain has unique flows.
% \end{lemma}

% \subsubsection{Shortest Path Domains}

% \ftodo{If we figure out a common idea behind this domain, add something to this section.}

\subsubsection{Nilpotent Cycles}
\label{sec-np-flows}

Let $(\mDom, \mOp, \mZero, \edgeDom)$ be a flow domain where every edge function $e \in \edgeDom$ is an endomorphism on $\mDom$.
In this case, we can show that the flow of a node $n$ is the sum of the flow as computed along \emph{each path} in the graph that ends at $n$.
Suppose we additionally know that the edge functions are defined such that their composition along any \emph{cycle} in the graph eventually becomes the identically zero function.
In this case, we need only consider finitely many paths to compute the flow of a node, which means the flow equation has a unique solution.

\techreport{Formally, such edge functions are called \emph{nilpotent endomorphisms}:}
\begin{definition}
  A closed set of endomorphisms $E \subseteq \morphisms(\mDom)$ is called \emph{nilpotent} if there exists $p > 1$ such that $e^p \equiv \mZero$ for every $e \in E$.
\end{definition}
%\ftodo{Do you really need a single bound on all elements? Isn't it enough that every element has some power that's zero?}

\begin{example}
  The edge functions of the inverse reachability domain of
  \secref{expressivity}\twfootnote{The reference to the appendix for the example is a bit problematic} are
  nilpotent endomorphisms (taking $p=2$).
\end{example}

\techreport{
Before we prove that nilpotent endomorphisms lead to unique flows, we present some useful notions and lemmas when dealing with flow domains that are endomorphisms.

\begin{lemma}
  \label{lem-flow-eqn-paths}
  If $(\mDom, \mOp, \mZero, \edgeDom)$ is a flow domain such that $\edgeDom$ is a closed set of endomorphisms, $\graph = (N, \edgeFn)$ is a graph, $\inflow \colon N \to \mDom$ is an inflow such that $\flowEqn(\inflow, \edgeFn, \flow)$, and $L \geq 1$,
  \begin{align*}
    \flow(n) &= \inflow(n)
    \mOp \mBigOp_{\substack{n_1, \dotsc, n_k \in N \\ 1 \leq k < L}}
    \inflow(n_1) \pipe \edgeFn(n_1, n_2) \dotsm \edgeFn(n_{k-1}, n_k) \pipe \edgeFn(n_k, n) \\
    & \quad \mOp \mBigOp_{n_1, \dotsc, n_L \in N}
      \flow(n_1) \pipe \edgeFn(n_1, n_2) \dotsm \edgeFn(n_{L-1}, n_L) \pipe \edgeFn(n_L, n).
  \end{align*}
\end{lemma}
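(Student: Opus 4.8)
The plan is to prove the identity by induction on $L$, repeatedly unrolling the flow equation \rEq{eqn-flow-equation} one step at a time. The base case $L = 1$ is immediate: the middle aggregation ranges over the empty set of indices (there is no $k$ with $1 \leq k < 1$) and hence evaluates to $\mZero$, so the claimed identity collapses to exactly $\flow(n) = \inflow(n) \mOp \mBigOp_{n_1 \in N} \flow(n_1) \pipe \edgeFn(n_1, n)$, which is the flow equation instantiated at $n$.

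For the inductive step I would assume the identity for some $L \geq 1$ and establish it for $L + 1$. The key observation is that the final aggregation of the $L$-version ranges over sequences $n_1, \dotsc, n_L \in N$, and each summand has the form $g(\flow(n_1))$, where $g \defeq \edgeFn(n_L, n) \fnComp \dotsm \fnComp \edgeFn(n_1, n_2)$ is a composition of edge functions. Since each $\edgeFn(n_i, n_{i+1}) \in \edgeDom \subseteq \morphisms(\mDom)$ and a composition of endomorphisms is again an endomorphism, $g$ is itself an endomorphism. I would then substitute the flow equation for $\flow(n_1)$, namely $\flow(n_1) = \inflow(n_1) \mOp \mBigOp_{n_0 \in N} \flow(n_0) \pipe \edgeFn(n_0, n_1)$, and push $g$ through this aggregation using the endomorphism property $g(\mVar_1 \mOp \mVar_2) = g(\mVar_1) \mOp g(\mVar_2)$, extended to finite $\mBigOp$-sums by a routine induction (the empty case being covered by $g(\mZero) = \mZero$, which holds by cancellativity, as noted above).

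This splits each length-$L$ summand into two parts: $g(\inflow(n_1))$, which is $\inflow(n_1)$ propagated along the path $n_1, \dotsc, n_L, n$ of length $L$; and $\mBigOp_{n_0 \in N} g(\flow(n_0) \pipe \edgeFn(n_0, n_1))$, a sum over length-$(L+1)$ sequences $n_0, n_1, \dotsc, n_L, n$ of the flow at $n_0$ propagated along the extended path. Using commutativity and associativity of $\mOp$ to regroup, the first parts collect into precisely the $k = L$ term of the new middle aggregation (thereby extending its range from $1 \leq k < L$ to $1 \leq k < L + 1$), while the second parts, after reindexing $n_0, \dotsc, n_L$ as $n_1, \dotsc, n_{L+1}$, form exactly the new final aggregation over length-$(L+1)$ sequences. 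Together with the unchanged $\inflow(n)$ term and the inherited $1 \leq k < L$ part of the middle sum, this yields the identity for $L + 1$.

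The main obstacle I anticipate is purely bookkeeping: carefully managing the path-indexed notation so that the composition $g$ is correctly identified with the iterated $\pipe$-application, and justifying that the endomorphism property distributes over the finite aggregations (including the degenerate empty case). The genuine algebraic content is light, resting only on each edge function being an endomorphism and on the commutativity and associativity of $\mOp$, so the care lies in the indexing rather than in any deep argument.
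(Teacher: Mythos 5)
Your proposal is correct; the paper states Lemma~\ref{lem-flow-eqn-paths} without giving a proof (it appears only as an auxiliary fact in the extended material), and your induction on $L$ is exactly the natural argument the statement is designed for: unroll \rEq{eqn-flow-equation} at the head $n_1$ of each length-$L$ sequence, push the composed function $g$ through the resulting aggregation using the endomorphism property (with $g(\mZero) = \mZero$ by cancellativity, as the paper itself notes), and regroup the $g(\inflow(n_1))$ terms into the $k = L$ slice of the middle sum. One small simplification you could make: since the lemma assumes $\edgeDom$ is a \emph{closed} set of endomorphisms, the composition $g$ already lies in $\edgeDom$ by assumption, so you do not even need the general observation that compositions of endomorphisms are endomorphisms; and since $N$ is finite by the paper's definition of graph, the finite-sum distribution step you flag as bookkeeping is indeed routine.
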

}

We can now show that if all edges of a flow graph are labelled with edges from a nilpotent set of endomorphisms, then the flow equation has a unique solution:

\begin{lemma}
  \label{lem-nilpotent-unique}
  If $(\mDom, \mOp, \mZero, \edgeDom)$ is a flow domain such that $\mDom$ is a positive monoid and $\edgeDom$ is a nilpotent set of endomorphisms, then this flow domain has unique flows.
%  Given a graph $(N, \edgeFn)$ over this flow domain and inflow $\inflow \colon N \to \mDom$, there exists a unique $\flow$ that satisfies the flow equation $\flowEqn(\inflow, \edgeFn, \flow)$.
\end{lemma}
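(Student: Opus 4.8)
The plan is to reduce the flow equation to a closed-form, path-based expression that refers only to the inflow $\inflow$ and the edge functions, never to $\flow$ itself; both uniqueness and existence then fall out. The linchpin is the observation that, under the hypotheses, the composition of \emph{any} $p$ edge functions (not merely $p$ copies of a single one) collapses to the zero function $\zeroFn$.

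First I would establish this linchpin. Since $\graph = (N, \edgeFn)$ is finite, only finitely many functions occur as edge labels; call this finite set $E_0 \subseteq \edgeDom$ and put $f \defeq \mBigOp_{e \in E_0} e$. Because $\edgeDom$ is closed (being a nilpotent set), $f \in \edgeDom$, and nilpotency gives $f^p \equiv \zeroFn$. Now every edge function is an endomorphism, so composition $\fnComp$ distributes over $\mOp$ on both sides; expanding $f^p$ by this distributivity yields $f^p \equiv \mBigOp_{e_1, \dots, e_p \in E_0} e_p \fnComp \dots \fnComp e_1$, a pointwise $\mOp$-sum of all length-$p$ composites of labels. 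Evaluating at any $\mVar \in \mDom$ gives a finite $\mOp$-sum equal to $\mZero$, and positivity of $\mDom$ (iterated to finite sums) forces \emph{every} summand to be $\mZero$. Hence each composite $e_p \fnComp \dots \fnComp e_1$ is $\zeroFn$, so the composite along \emph{any} walk of length $p$ is $\zeroFn$ (and, as $\zeroFn$ is absorbing for $\fnComp$, so is the composite along any longer walk).

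With this in hand, \textbf{existence} is by explicit construction. I would define $\flow^\ast(n)$ to be the $\mOp$-sum, over all walks $n_0 \to \dots \to n_k = n$ of length $0 \leq k < p$, of $\inflow(n_0)$ propagated along the walk; this is a finite sum. Verifying $\flowEqn(\inflow, \edgeFn, \flow^\ast)$ is then a direct calculation: substituting $\flow^\ast$ into the right-hand side of \rEq{eqn-flow-equation} and using that each $\edgeFn(n', n)$ is an endomorphism, the term $\mBigOp_{n' \in N} \flow^\ast(n') \pipe \edgeFn(n', n)$ re-assembles exactly the walks of length $1$ up to $p$ ending at $n$; the length-$p$ walks contribute $\mZero$ by the linchpin, and adding $\inflow(n)$ (the empty walk) recovers $\flow^\ast(n)$. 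For \textbf{uniqueness}, I would take an arbitrary flow $\flow$ and invoke \rL{lem-flow-eqn-paths} with $L = p$; its remainder term ranges over length-$p$ composites and hence vanishes by the linchpin, leaving $\flow(n)$ equal to precisely the same closed form as $\flow^\ast(n)$. Thus every flow coincides with $\flow^\ast$, establishing both existence and uniqueness.

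The main obstacle is the linchpin step. Nilpotency only asserts $e^p \equiv \zeroFn$ for each \emph{individual} $e$, whereas a walk composes \emph{distinct} edge functions; bridging this gap is exactly where all three hypotheses are needed together -- closure under $\mOp$ to form $f$, nilpotency to annihilate $f^p$, and positivity to push the collapse down onto every individual composite. Everything after that is routine bookkeeping with walks.
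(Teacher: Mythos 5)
Your proposal is correct and takes essentially the paper's route: the paper states \rL{lem-flow-eqn-paths} expressly as preparation for this result, and your argument is exactly the intended one — instantiate that lemma with $L = p$ to kill the remainder term and obtain a closed-form path sum depending only on $\inflow$ (giving uniqueness), then verify that this same finite path sum satisfies \rEq{eqn-flow-equation} (giving existence). Your ``linchpin'' — upgrading per-element nilpotency to the vanishing of \emph{every} mixed length-$p$ composite by forming $f = \mBigOp_{e \in E_0} e$, using closedness of $\edgeDom$ under $\mOp$, expanding $f^p$ via endomorphism distributivity, and splitting the resulting sum with positivity — is precisely the step the paper's prose (``composition along any cycle eventually becomes the zero function'') leaves implicit, and you have correctly located it as the place where the positivity and closure hypotheses are actually consumed.
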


\subsubsection{Effectively Acyclic Flow Graphs}\label{sec-ea-flows}

There are some flow domains that compute flows useful in practice, but which do not guarantee either existence or uniqueness of fixpoints \emph{a priori} for all graphs.
For example, the path-counting flow from \rE{ex-fd-path-count} is one where for certain graphs, there exist no solutions to the flow equation (see \rF{fig-ea-no-solution}), and for others, there can exist more than one (in \rF{fig-ea-many-solutions}, the nodes marked with $x$ can have any path count, as long as they both have the same value).

In such cases, we explore how to restrict the class of \emph{graphs} we use in our flow-based proofs such that each graph has a unique fixpoint; the difficulty is that this restriction must be respected for composition of our graphs.
Here, we study the class of flow domains $(\mDom, \mOp, \mZero, \edgeDom)$ such that $\mDom$ is a positive monoid and $\edgeDom$ is a set of \emph{reduced} endomorphisms (defined below); in such domains we can decompose the flow computations into the various paths in the graph, and achieve unique fixpoints by restricting the kinds of cycles graphs can have.

\begin{definition}
  \label{def-ea}
  A flow graph $\fGraph = (N, \edgeFn, \flow)$ is \emph{effectively acyclic (EA)} if for every $1 \leq k$ and $n_1, \dotsc, n_k \in N$,
  \[ \flow(n_1) \pipe \edgeFn(n_1, n_2) \dotsb \edgeFn(n_{k-1}, n_k) \pipe \edgeFn(n_k, n_1) = \mZero. \]
\end{definition}
The simplest example of an effectively acyclic graph is one where the edges with non-zero edge functions form an acyclic graph.
However, our semantic condition is weaker: for example, when reasoning about two overlaid acyclic lists whose union happens to form a cycle, a product of two path-counting domains will satisfy effective acyclicity because the composition of different types of edges results in the zero function.

\begin{lemma}
  \label{lem-ea-flow-exists-unique}
  Let $(\mDom, \mOp, \mZero, \edgeDom)$ be a flow domain such that $\mDom$ is a positive monoid and $\edgeDom$ is a closed set of endomorphisms.
  Given a graph $(N, \edgeFn)$ over this flow domain and inflow $\inflow \colon N \to \mDom$, if there exists a flow graph $\fGraph = (N, \edgeFn, \flow)$ that is effectively acyclic, then $\flow$ is unique.
\end{lemma}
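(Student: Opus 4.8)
The plan is to convert the implicit fixpoint equation \rEq{eqn-flow-equation} into an explicit, path-indexed closed form that depends only on $\inflow$ and $\edgeFn$, so that any effectively acyclic solution is forced to equal it. The starting point is \rL{lem-flow-eqn-paths}: since $\edgeDom$ is a closed set of endomorphisms, I can unfold the flow equation to any depth $L \geq 1$, writing $\flow(n)$ as the sum of (i) the inflow $\inflow(n)$, (ii) the contributions of $\inflow$ propagated along all paths of length at most $L-1$ ending at $n$, and (iii) a remainder ranging over all index sequences $n_1, \dots, n_L$, whose summands have the form $\flow(n_1) \pipe \edgeFn(n_1, n_2) \dotsm \edgeFn(n_L, n)$. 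Terms (i) and (ii) depend only on $\inflow$ and $\edgeFn$, so the entire argument reduces to showing that for a suitable $L$ the remainder (iii) is identically $\mZero$ whenever $\fGraph$ is EA.

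I would take $L = \abs{N} + 1$. Then every index sequence $n_1, \dots, n_L$ in the remainder has a repetition $n_i = n_j$ with $i < j$ by the pigeonhole principle, so each path contains a cycle $n_i, \dots, n_{j-1}, n_j = n_i$. The key auxiliary fact is a \emph{monotonicity} property: for any path $n_1, \dots, n_i$, the partial value $v_i \defeq \flow(n_1) \pipe \edgeFn(n_1, n_2) \dotsm \edgeFn(n_{i-1}, n_i)$ satisfies $v_i \mLeq \flow(n_i)$. This follows by induction on $i$: the base case is immediate, and for the step I write $\flow(n_i) = v_i \mOp u$, apply the endomorphism $\edgeFn(n_i, n_{i+1})$ to get $\flow(n_i) \pipe \edgeFn(n_i, n_{i+1}) = v_{i+1} \mOp \big(u \pipe \edgeFn(n_i,n_{i+1})\big)$, and observe that \rEq{eqn-flow-equation} exhibits $\flow(n_i) \pipe \edgeFn(n_i, n_{i+1})$ as one summand of $\flow(n_{i+1})$; positivity of $\mDom$ then yields $v_{i+1} \mLeq \flow(n_{i+1})$.

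With monotonicity in hand I can kill the remainder. I factor the propagation along a repeating path as $C \fnComp B \fnComp A$, where $A$ propagates from $n_1$ to $n_i$ (so $v_i = A(\flow(n_1))$), $B$ propagates around the cycle from $n_i$ back to $n_j = n_i$, and $C$ propagates from $n_j$ onward to $n$; each is an endomorphism, being a composition of the endomorphisms in $\edgeDom$. Effective acyclicity (\rD{def-ea}) applied to this cycle gives $B(\flow(n_i)) = \mZero$. Writing $\flow(n_i) = v_i \mOp u$ (monotonicity), the endomorphism property gives $B(v_i) \mOp B(u) = B(\flow(n_i)) = \mZero$, so positivity forces $B(v_i) = \mZero$, whence the whole summand $C(B(v_i)) = C(\mZero) = \mZero$ (endomorphisms send $\mZero$ to $\mZero$, since flow domains are cancellative). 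Thus every summand of the remainder vanishes.

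I conclude that for $L = \abs{N}+1$ the remainder is $\mZero$, so $\flow(n) = \inflow(n) \mOp \mBigOp_{n_1,\dots,n_k \in N,\,1 \le k \le \abs{N}} \inflow(n_1) \pipe \edgeFn(n_1,n_2) \dotsm \edgeFn(n_k, n)$, an expression determined entirely by $\inflow$ and $\edgeFn$. Hence any effectively acyclic flow for the given graph and inflow must equal this value, so $\flow$ is the unique such solution. The main obstacle is precisely the remainder-vanishing step: the subtlety is that effective acyclicity constrains the cyclic propagation of the \emph{actual} flow value $\flow(n_i)$, while the remainder propagates only the partial value $v_i$; bridging this gap is exactly what the monotonicity property, combined with positivity of $\mDom$, accomplishes.
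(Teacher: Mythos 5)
Correct, and essentially the intended argument: the paper states this lemma without an in-text proof, but your route --- unfolding the flow equation via \rL{lem-flow-eqn-paths} at depth $L = \abs{N}+1$, locating a repeated node by pigeonhole, and annihilating each remainder summand through effective acyclicity --- is precisely what the paper's surrounding machinery (positivity and the induced order $\mLeq$, closed sets of endomorphisms, and $e(\mZero)=\mZero$ from cancellativity) is set up to support, yielding a closed form for $\flow$ in terms of $\inflow$ and $\edgeFn$ alone and hence uniqueness among effectively acyclic flows. The one genuinely delicate step --- that the EA condition constrains the cyclic image of the full flow value $\flow(n_i)$ while the remainder carries only the partial path value $v_i$ --- you bridge soundly with the monotonicity claim $v_i \mLeq \flow(n_i)$ (proved by induction using the flow equation and the endomorphism property) combined with positivity, so I see no gap.
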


While the restriction to effectively acyclic flow graphs guarantees us that the flow is the unique fixpoint of the flow equation, it is not easy to show that modifications to the graph preserve EA while reasoning locally.
Even modifying a subgraph to another with the same flow interface (which we know guarantees that it will compose with any context) can inadvertently create a cycle in the larger composite graph.
For instance, consider \rF{fig-ea-creating-cycles}, that shows a modification to nodes $\set{n_3, n_4}$ (the boxed blue region).
The interface of this region is $(\set{n_3 \goesto 1, n_4 \goesto 1}, \set{n_5 \goesto 1, n_2 \goesto 1})$, and so swapping the edges of $n_3$ and $n_4$ preserves this interface.
However, the resulting graph, despite composing with the context to form a valid flow graph, is not EA (in this case, it has multiple solutions to the flow equation).
This shows that flow interfaces are not powerful enough to preserve effective acyclicity.
For a special class of endomorphisms, we show that a local property of the modified subgraph can be checked, which implies that the modified composite graph continues to be EA.

\begin{definition}
  A closed set of endomorphisms $\edgeDom \subseteq \morphisms(\mDom)$ is called \emph{reduced} if $e \fnComp e \equiv \zeroFn$ implies $e \equiv \zeroFn$ for every $e \in \edgeDom$.
\end{definition}

Note that if $\edgeDom$ is reduced, then no $e \in \edgeDom$ can be nilpotent.
In that sense, this class of instantiations is complementary to those
in \rSc{sec-np-flows}.

\begin{example}
  Examples of flow domains that fall into this class include
  positive semirings of reduced rings (with the additive monoid of
  the semiring being the aggregation monoid of the flow domain and $E$ being any
  set of functions that multiply their argument with a constant flow
  value). Note that any direct product of integral rings is a reduced
  ring. Hence, products of the path counting flow domain are a
  special case.
\end{example}

For reduced endomorphisms, it is sufficient to check that a modification preserves the flow routed between every pair of source and sink node.
This pairwise check ensures that we do not create any new cycles in any larger graph.
Before we can define an analogous relation to contextual extension, we first define a useful notion:

\begin{definition}
  The \emph{capacity} of a flow graph $\graph = (N, \edgeFn)$ is $\capacity(\graph) \colon N \times \nodeDom \to (\mDom \to \mDom)$ defined inductively as $\capacity(\graph) \defeq \capacity^{\abs{\graph}}(\graph)$, where $\capacity^0(\graph)(n, n') \defeq \deltaFn{n}{n'}$ and
  \begin{equation}
    \capacity^{i+1}(\graph)(n, n') \defeq \deltaFn{n}{n'} \mOp \mBigOp_{n'' \in \graph} \capacity^i(\graph)(n, n'') \fnComp \edgeFn(n'', n').
  \end{equation}
\end{definition}

For a flow graph $\fGraph = (N, \edgeFn, \flow)$, we write $\capacity(\fGraph)(n, n') = \capacity((N, \edgeFn))(n, n')$ for the capacity of the underlying graph.
Intuitively, $\capacity(\graph)(n, n')$ is the function that summarizes how flow is routed from any source node $n$ in $\graph$ to any other node $n'$, including those outside of $\graph$.

\techreport{
\begin{lemma}
  \label{lem-capacity-sum-paths}
  The capacity is equal to the following sum-of-paths expression:
  \[
    \capacity^i(\graph)(n, n') = \deltaFn{n}{n'}
    \mOp \mBigOp_{\substack{n_1, \dotsc, n_k \in \graph \\ 0 \leq k < i}}
    \edgeFn(n, n_1) \dotsm \edgeFn(n_k, n').
  \]
\end{lemma}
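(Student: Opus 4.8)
The plan is to prove the identity by induction on the iteration index $i$, unfolding the recursive definition of $\capacity^{i+1}$ and matching it term by term against the sum-of-paths expression. Throughout, I read the path product $\edgeFn(n, n_1) \dotsm \edgeFn(n_k, n')$ as the $\fnComp$-composition $\edgeFn(n, n_1) \fnComp \dotsm \fnComp \edgeFn(n_k, n')$, so that growing a path at its target end corresponds to placing a further edge function on the right of the composition---exactly the shape produced by the recursive clause $\capacity^i(\graph)(n, n'') \fnComp \edgeFn(n'', n')$. The base case $i = 0$ is immediate: the right-hand sum ranges over $0 \leq k < 0$ and is therefore empty, leaving $\deltaFn{n}{n'}$, which is precisely $\capacity^0(\graph)(n, n')$ by definition.

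For the inductive step I would substitute the induction hypothesis for $\capacity^i(\graph)(n, n'')$ into the definition of $\capacity^{i+1}(\graph)(n, n')$ and push the post-composition with $\edgeFn(n'', n')$ inside the aggregation. The key algebraic fact is that post-composition distributes over the pointwise sum: for arbitrary functions, $(e_1 \mOp e_2) \fnComp e_3 = (e_1 \fnComp e_3) \mOp (e_2 \fnComp e_3)$, since $\mOp$ on functions is defined argument-wise; note this needs \emph{no} endomorphism assumption on the edge functions. Distributing over the whole sum splits $\capacity^i(\graph)(n, n'') \fnComp \edgeFn(n'', n')$ into a term $\deltaFn{n}{n''} \fnComp \edgeFn(n'', n')$ and the path terms $\bigl(\edgeFn(n, n_1) \fnComp \dotsm \fnComp \edgeFn(n_k, n'')\bigr) \fnComp \edgeFn(n'', n')$.

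I then handle the two kinds of terms separately. Aggregating $\deltaFn{n}{n''} \fnComp \edgeFn(n'', n')$ over $n'' \in \graph$ collapses to $\edgeFn(n, n')$, because $\deltaFn{n}{n''}$ is $\identityFn$ when $n'' = n$ and $\zeroFn$ otherwise (and $n \in \graph$ as a source node), so only the summand $n'' = n$ survives; this is exactly the length-zero ($k = 0$) path. Each path term, meanwhile, rewrites to $\edgeFn(n, n_1) \fnComp \dotsm \fnComp \edgeFn(n_k, n'') \fnComp \edgeFn(n'', n')$, i.e.\ the path from $n$ to $n'$ whose intermediate nodes are $n_1, \dotsc, n_k, n''$. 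Re-indexing by setting $n'' = n_{k+1}$ turns the nested sum (over $n''$ and over $0 \leq k < i$) into a single sum over paths with $1 \leq k+1 < i+1$ intermediate nodes. Merging this with the collapsed $\delta$-term (the $k = 0$ case) and the leading $\deltaFn{n}{n'}$ from the definition yields exactly $\deltaFn{n}{n'} \mOp \mBigOp_{0 \leq k < i+1} \edgeFn(n, n_1) \dotsm \edgeFn(n_k, n')$, closing the induction.

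The only delicate points are bookkeeping. First, one must keep the composition-order convention fixed so that appending $\edgeFn(n'', n')$ really does extend a path at its target end. Second, the re-indexing must correctly turn the summation variable $n''$ into the new last intermediate node, with the separate $\delta$-term accounting for the $k = 0$ path so that the merged index range becomes $0 \leq k < i+1$. Everything else---finiteness of the aggregation (since $N$ is finite), associativity of $\fnComp$, and the pointwise distributivity above---is routine.
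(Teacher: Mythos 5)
Your proof is correct; note, though, that the paper itself states this lemma \emph{without} proof (it is one of the deferred technical-report lemmas), so there is no in-paper argument to compare against --- your induction is plainly the intended one: unfold the recursive clause $\capacity^{i+1}(\graph)(n,n') = \deltaFn{n}{n'} \mOp \mBigOp_{n'' \in \graph} \capacity^i(\graph)(n,n'') \fnComp \edgeFn(n'',n')$, substitute the induction hypothesis, distribute, collapse the $\delta$-term, and re-index. Your handling of the one delicate point is also sound: because the recursion places $\capacity^i(\graph)(n,n'')$ on the \emph{outer} side of $\fnComp$, the distribution law you invoke, $(e_1 \mOp e_2) \fnComp e_3 = (e_1 \fnComp e_3) \mOp (e_2 \fnComp e_3)$, is a pointwise triviality needing no endomorphism hypothesis, and likewise $\zeroFn \fnComp e \equiv \zeroFn$ and $\identityFn \fnComp e \equiv e$ hold unconditionally, so in the $\delta$-collapse only the summand $n'' = n$ survives (using $n \in \graph$, as you note), yielding exactly the $k = 0$ path term. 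One caveat worth recording: if one instead reads the path product $\edgeFn(n,n_1) \dotsm \edgeFn(n_k,n')$ in diagrammatic (left-to-right) order --- which is how the paper's $\pipe$ notation treats such products elsewhere, e.g.\ in \rL{lem-flow-eqn-paths} and \rD{def-ea} --- then the same induction goes through, but the distribution step sits on the \emph{inner} side of the composition and genuinely requires the edge functions to be additive; that assumption is available in this section anyway, since $\edgeDom$ is taken to be a closed set of endomorphisms in $\morphisms(\mDom)$. Under either consistent convention the identity holds and your argument is complete.
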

}

We now define a relation between flow graphs that constrains us to modifications that preserve EA while allowing us to allocate new nodes\techreport{\footnote{The monoid ordering used in the following definition exists because we are working with a positive monoid.}}.

\begin{definition}
  \label{def-subflow-preserving}
  A flow graph $\fGraph'$ is a \emph{subflow-preserving extension} of $\fGraph$, written $\fGraph \capacityExtendedBy \fGraph'$, if $\interfaceFn(\fGraph) \intLessEquiv \interfaceFn(\fGraph')$,
  \begin{align*}
    \forall n \in \fGraph, n' \not\in \fGraph', \mVar.\; &
      \mVar \mLeq \inflowFn(\fGraph)(n) \impl
      \mVar \pipe \capacity(\fGraph)(n, n') = \mVar \pipe \capacity(\fGraph')(n, n'), \text{ and} \\
    \forall n \in \fGraph' \setminus \fGraph, n' \not\in \fGraph', \mVar.\; &
      \mVar \mLeq \inflowFn(\fGraph')(n) \impl
      \mVar \pipe \capacity(\fGraph')(n, n') = \mZero.
\end{align*}
\end{definition}

We now show that it is sufficient to check our local condition on a modified subgraph to guarantee composition back to an effectively-acyclic composite graph:
\begin{theorem}
  \label{thm-replacement-ea}
  Let $(\mDom, \mOp, \mZero, \edgeDom)$ be a flow domain such that $\mDom$ is a positive monoid and $\edgeDom$ is a reduced set of endomorphisms.
  If $\fGraph = \fGraph_1 \fGraphComp \fGraph_2$ and $\fGraph_1 \capacityExtendedBy \fGraph'_1$ are all effectively acyclic flow graphs such that $\fGraph'_1 \cap \fGraph_2 = \emptyset$ and $\forall n \in \fGraph'_1 \setminus \fGraph_1.\; \outflowFn(\fGraph_2)(n) = \mZero$, then there exists an effectively acyclic flow graph $\fGraph' = \fGraph'_1 \fGraphComp \fGraph_2$ such that $\fGraph \capacityExtendedBy \fGraph'$.
\end{theorem}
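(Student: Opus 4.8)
The plan is to split the conclusion into three obligations: (a) that $\fGraph' \defeq \fGraph_1' \fGraphComp \fGraph_2$ is a well-defined flow graph satisfying $\interfaceFn(\fGraph) \intLessEquiv \interfaceFn(\fGraph')$; (b) that $\fGraph'$ is effectively acyclic; and (c) that $\fGraph \capacityExtendedBy \fGraph'$. Obligation (a) follows from machinery already in hand. Unfolding $\fGraph_1 \capacityExtendedBy \fGraph_1'$ gives $\interfaceFn(\fGraph_1) \intLessEquiv \interfaceFn(\fGraph_1')$, and the side-conditions $\fGraph_1' \cap \fGraph_2 = \emptyset$ and $\forall n \in \fGraph_1' \setminus \fGraph_1.\ \outflowFn(\fGraph_2)(n) = \mZero$ are precisely the hypotheses of the interface-level Replacement Theorem (\rT{thm-replacement}); applying it to $\interfaceFn(\fGraph_1), \interfaceFn(\fGraph_1'), \interfaceFn(\fGraph_2)$ yields a valid $\interfaceFn(\fGraph_1') \intComp \interfaceFn(\fGraph_2)$ with $\interfaceFn(\fGraph) \intLessEquiv \interfaceFn(\fGraph_1') \intComp \interfaceFn(\fGraph_2)$. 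The converse direction of the congruence \rL{lem-interface-composition-congruence} then promotes definedness of this interface composition to definedness of $\fGraph'$ as a flow graph with $\interfaceFn(\fGraph') = \interfaceFn(\fGraph_1') \intComp \interfaceFn(\fGraph_2)$, settling (a).

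Obligation (b) is the heart of the argument and the place where reducedness is consumed. I would first prove a capacity-level reformulation of effective acyclicity: for a positive domain with a closed set of endomorphisms, $\fGraph$ is EA exactly when every nonempty cyclic composition vanishes on inflow-bounded values, which by positivity and \rL{lem-capacity-sum-paths} is equivalent to $\mVar \pipe \capacity(\fGraph)(n,n) = \mVar$ for all $n$ and $\mVar \mLeq \inflowFn(\fGraph)(n)$ (the $\deltaFn{n}{n}$ summand already contributes $\mVar$, so all genuine-cycle summands must contribute $\mZero$). Working at this aggregate level is essential, since the capacity clauses of $\capacityExtendedBy$ equate sums over all paths rather than individual edges. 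To bound $\capacity(\fGraph')(n,n)$ I would decompose each closed routing through $n$ into maximal alternating segments inside $\fGraph_1'$ and inside $\fGraph_2$; each $\fGraph_1'$-segment starts at an entry node whose incoming value is a single summand of that node's inflow in $\fGraph_1'$ (hence $\mLeq$ it by positivity, using that $\inflowFn(\fGraph_1)$ and $\inflowFn(\fGraph_1')$ agree on old nodes by contextual extension), so the first capacity clause of $\fGraph_1 \capacityExtendedBy \fGraph_1'$ lets me replace the $\fGraph_1'$-routing by the matching $\fGraph_1$-routing, while the second clause makes any segment entering a freshly allocated node contribute $\mZero$ and thereby kill that routing. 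This rewrites the composite self-capacity of $\fGraph'$ into the corresponding quantity for $\fGraph = \fGraph_1 \fGraphComp \fGraph_2$, which is trivial because $\fGraph$ is EA; hence $\fGraph'$ is EA as well.

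The delicate point — and the step I expect to be the main obstacle — is that a composite cycle may cross the $\fGraph_1'/\fGraph_2$ boundary arbitrarily often, so the segment-by-segment capacity matching must be iterated while keeping each re-entry value inflow-bounded, and the truncated capacity $\capacity = \capacity^{\abs{\fGraph}}$ must be shown to already capture cycles of every length. This is exactly where reducedness ($e \fnComp e \equiv \zeroFn \Rightarrow e \equiv \zeroFn$, i.e.\ no nonzero endomorphism is nilpotent) is needed: it guarantees that a would-be nonzero cyclic composition cannot be diluted away by truncation or by repeated boundary crossings, so that a nonzero cycle in $\fGraph'$ forces a nonzero cyclic composition already matched in $\fGraph$, contradicting EA of $\fGraph$ — this is the semantic content of the bad modification in \rF{fig-ea-creating-cycles}, which violates precisely the capacity clauses of $\capacityExtendedBy$. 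Finally, for obligation (c) the contextual extension $\interfaceFn(\fGraph) \intLessEquiv \interfaceFn(\fGraph')$ is already available from (a), and the two capacity equalities of \rD{def-subflow-preserving} for the whole graphs are obtained by the same alternating-segment decomposition, lifting the pairwise equalities for $\fGraph_1, \fGraph_1'$ through the common, unchanged summand $\fGraph_2$; together these yield $\fGraph \capacityExtendedBy \fGraph'$.
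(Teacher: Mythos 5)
Your obligation (a) is fine: combining \rT{thm-replacement} with the converse direction of \rL{lem-interface-composition-congruence} does yield that $\fGraph' = \fGraph'_1 \fGraphComp \fGraph_2$ is a well-defined flow graph with $\interfaceFn(\fGraph) \intLessEquiv \interfaceFn(\fGraph')$. The genuine gap is in obligation (b), and it sits exactly at the step you build everything else on: the claimed ``capacity-level reformulation of effective acyclicity.'' The condition $\mVar \pipe \capacity(\fGraph)(n,n) = \mVar$ for all $n$ and all $\mVar \mLeq \inflowFn(\fGraph)(n)$ is implied by EA, but it does \emph{not} imply EA, because \rD{def-ea} applies cycle compositions to \emph{flow} values $\flow(n_1)$, and in a non-EA graph the flow need not be generated by the inflow. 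The paper's own \rF{fig-ea-many-solutions} is a counterexample: the two-node cycle carries flow $x > 0$ sustained by the cycle itself, every node on the cycle has inflow $\mZero$, so your diagonal condition is vacuously satisfied there (and trivially at the root, which lies on no cycle), yet the graph is not effectively acyclic --- and this happens even with a reduced edge set such as $\set{\lambdaFn{\mVar}{k \cdot \mVar} \mid k \in \Nat}$. The same circularity infects your alternating-segment argument: to relate crossing-cycle contributions fed by $\flow'$ to quantities in $\fGraph$, you implicitly decompose $\flow'$ into inflow-routed path sums, but that decomposition (the vanishing of the trailing term in \rL{lem-flow-eqn-paths} as $L$ grows) is available only for EA graphs, i.e.\ only after the conclusion you are trying to prove. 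The capacity clauses of $\capacityExtendedBy$ quantify exclusively over inflow-bounded values, so by themselves they cannot see a cycle-sustained component of $\flow'$ that is invisible to every inflow.

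This is precisely the point at which reducedness must do real work, and your sketch concedes rather than closes it: the paragraph beginning ``the delicate point'' correctly identifies the obstacle (arbitrarily many boundary crossings, the truncation $\capacity = \capacity^{\abs{\fGraph}}$, and the need to exclude self-sustaining flow around a new composite cycle) but then merely asserts that reducedness ``guarantees that a would-be nonzero cyclic composition cannot be diluted away,'' without any derivation. The needed argument is roughly: suppose $\fGraph'$ admitted a nonzero cycle contribution; since $\edgeDom$ is reduced and closed, the cycle's composed endomorphism has all powers nonzero, so traversing the cycle repeatedly produces walks of unbounded length whose aggregate contributions, by positivity, would have to show up as a discrepancy against the finite, truncated capacity sums that your segment-matching has already equated with those of the EA graph $\fGraph$ --- a contradiction. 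Turning that outline into a proof (in particular, anchoring EA at nodes of the modified region $\fGraph'_1$, where $\flow'$ differs from $\flow$ and cannot be rotated onto a $\fGraph_2$ anchor) is the actual content of the theorem and is missing from the proposal; everything before it, including obligation (c), is correct plumbing around an unproven core.
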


\sk{
We define effectively acyclic versions of our flow graph predicates, $\nodePredEA(x, \fGraph)$ and $\graphPredEA(X, \fGraph)$, that take the same arguments but additionally constrain $\fGraph$ to be effectively acyclic.
The above theorem then implies the following entailment:
\begin{align}
  & \graphPredEA(X'_1, \fGraph'_1) * \graphPredEA(X_2, \fGraph_2)
  \land \fGraph = \fGraph_1 \fGraphComp \fGraph_2
  \land \fGraph_1 \capacityExtendedBy \fGraph'_1 \notag \\
  \entails \;\;
  & \exists \fGraph'.\;
  \graphPredEA(X'_1 \uplus X_2, \fGraph')
  \land \fGraph' = \fGraph'_1 \fGraphComp \fGraph_2
  \land \fGraph \capacityExtendedBy \fGraph'
  \tag{\sc{ReplEA}}\label{rule-repl-ea}
\end{align}
}

\subsection{Proof of the Harris List}
\label{sec-harris-proof}

We use the techniques seen in this section in the proof of Harris' list.
As the data structure consists of two potentially overlapping lists, we use \rL{lem-fd-product} to construct a product flow domain of two path-counting flows: one tracks the path count from the head of the main list, and one from the head of the free list.
We also work under the effectively acyclic restriction (i.e. we use the $\nodePredEA$ and $\graphPredEA$ predicates), both in order to obtain the desired interpretation of the flow as well as to ensure existence of flows in this flow domain.

We instantiate the framework using the following definitions of parameters:
\begin{align*}
  \fields
  & \defeq \set{\field{key} \colon k, \nextField \colon y, \fnextField \colon z} \\
  \abstractionFn(x, \fields, v)
  & \defeq (v = \nullVal \; ? \; \zeroFn : (v = y \land y \neq z \; ? \; \lambda_{(1, 0)} \\
  & \quad \quad \quad : (v \neq y \land y = z \; ? \; \lambda_{(0, 1)} : (v = y \land y = z \; ? \; \identityFn : \zeroFn)))) \\
  \goodCondition(x, \fields, \interface)
  & \defeq (\inflowOfInt{\interface}(x) \in \set{(1, 0), (0, 1), (1, 1)}) \land (\inflowOfInt{\interface}(x) \neq (1, 0) \impl M(y)) \\
  & \quad \land (x = \freeListTail \impl \inflowOfInt{\interface}(x) = (\_, 1)) \land (\neg M(y) \impl z = \nullVal) \\
  \globalInt(\interface)
  & \defeq \inflowOfInt{\interface}
    = \set{\mainListHead \goesto (1, 0), \freeListHead \goesto (0, 1), \_ \goesto (0, 0)}
   \land \outflowOfInt{\interface} = \set{\_ \goesto (0, 0)}
\end{align*}
Here, $\abstractionFn$ encodes the edge functions needed to compute the product of two path counting flows, the first component tracks path-counts from $\mainListHead$ on $\nextField$ edges and the second tracks path-counts from $\freeListHead$ on $\fnextField$ edges ($\lambda_{(1, 0)} \defeq \lambdaFn{(\mVar_1, \mVar_2)}{(\mVar_1, 0)}$ and $\lambda_{(0, 1)} \defeq \lambdaFn{(\mVar_1, \mVar_2)}{(0, \mVar_2)}$).
The node-local invariant $\goodCondition$ says:
the flow is one of $\set{(1, 0), (0, 1), (1, 1)}$ (meaning that the node is on one of the two lists, invariant \ref{harris-inv-two-lists});
if the flow is not $(1, 0)$ (the node is not only on the main list, i.e. it is on the free list) then the node is marked (indicated by $M(y)$, invariant \ref{harris-inv-marked});
and if the node is $\freeListTail$ then it must be on the free list (invariant \ref{harris-inv-free-list-tail}).
The constraint on the global interface, $\globalInt$, says that the inflow picks out $\mainListHead$ and $\freeListHead$ as the roots of the lists, and there is no outgoing flow (thus, all non-null edges from any node in the graph must stay within the graph, invariant \ref{harris-inv-lists-closed}).

\sk{
Since the Harris list is a concurrent algortihm, we perform the proof in rely-guarantee separation logic (RGSep)~\cite{viktor-thesis}.
Like in \rSc{sec-proof-technique}, we do not need to modify the semantics of RGSep in any way; our flow-based predicates can be defined and reasoning using our lemmas can be performed in the logic out-of-the-box.
For space reasons, the full proof can be found in \refApp{sec-harris-full}.}

%%% Local Variables:
%%% mode: latex
%%% TeX-master: "writeup"
%%% End:

%\section{Evaluation}\label{sec-evaluation}
%\input{evaluation}
\section{Related Work}\label{sec-related}
An abundance of SL variants provide complementary
mechanisms for modular reasoning about programs
(e.g.~\cite{iris-ground-up, DBLP:conf/esop/RaadVG15, DBLP:conf/pldi/SergeyNB15}). Most are parameterized by the underlying
separation algebra; our flow-based reasoning technique easily integrates with these existing logics.

\techreport{
\as{Recursive data structures are classically handled in SL using
\emph{recursive
predicates}~\cite{DBLP:conf/csl/OHearnRY01,
  DBLP:conf/lics/Reynolds02}. There is a rich line of work in automating
 such reasoning within decidable fragments (e.g.~\cite{DBLP:conf/fsttcs/BerdineCO04,DBLP:conf/atva/IosifRV14,
  DBLP:conf/tacas/KatelaanMZ19, DBLP:conf/cav/PiskacWZ13,
  DBLP:conf/nfm/EneaLSV17, DBLP:conf/vmcai/QiuW19}).
  However, recursive
  definitions are problematic for handling e.g.~graphs with cycles,
  sharing and unbounded indegree, overlaid structures and unconstrained traversals.}
}
%  While recursive predicates are
%well-suited for describing regular graph structures (e.g. lists,
%trees, etc.), they are problematic when dealing with graphs that
%involve cycles and nodes with unbounded indegree (e.g. the PIP, data
%structure overlays, algorithms operating on general graphs,
%etc.). Moreover, in order to verify that global data structure
%invariants are maintained by modifications, one needs to infer
%composition and decomposition lemmas for the recursive
%predicates. This can be done for certain decidable fragments (see
%e.g.~\cite{DBLP:conf/fsttcs/BerdineCO04,DBLP:conf/atva/IosifRV14,
%  DBLP:conf/tacas/KatelaanMZ19, DBLP:conf/cav/PiskacWZ13,
%  DBLP:conf/nfm/EneaLSV17, DBLP:conf/vmcai/QiuW19}) but generally
%relies on heuristics, in particular, if the invariants intertwine
%structural properties with properties about data. Using flows one can
%reason about structural and data properties uniformly and
%independently of each other.

The most common approach to reason about irregular graph structures in
SL is to use iterated separating
conjunction~\cite{Yang01ShorrWaite, DBLP:conf/cav/0001SS16}
and describe the graph as a set of nodes each of which satisfies some
local invariant. This approach has the advantage of being able to
naturally describe general graphs. %Moreover, modular reasoning about changes to
%the graph becomes very easy.
 However, it is hard to express
non-local properties that involve some form of fixpoint computation
over the graph structure. One approach
%A common solution to overcome this
%limitation
 is to abstract the program state as a mathematical graph
using iterated separating conjunction and then express non-local
invariants in terms of the abstract graph rather than the underlying
program state~\cite{DBLP:conf/popl/HoborV13,
DBLP:conf/pldi/SergeyNB15, DBLP:conf/aplas/RaadHVG16}. However, a
proof that a modification to the state maintains a global invariant of
the abstract graph must then often revert back to non-local \as{and manual} reasoning,
involving complex inductive arguments about paths, transitive
closure, and so on. Our technique \twout{and Viper encoding} also exploit iterated
separating conjunction for the underlying heap ownership\as{, with the key
benefit that }
%
%We build on the idea of using iterated separating conjunction in our
%proof technique and Viper encoding. The benefit of our encoding over
%prior work is that
flow interfaces exactly capture the necessary conditions on a modified subgraph in
order to compose with \emph{any} context and preserve desired non-local invariants.
%
%relies of the
%context of a graph region under modification that are needed to
%satisfy a global property of the graph. This way, we can reason
%locally about the modification without resorting to a global inductive
%proof argument.

\as{In recent work, Wang \etal{} present a Coq-mechanised proof of graph algorithms in C, based on a substantial library of graph-related lemmas, both for mathematical and heap-based graphs \cite{Wang19}. They prove rich functional properties, integrated with the VST tool. In contrast to our work, a substantial suite of lemmas and background properties are necessary, since these specialise to particular properties such as reachability. We believe that our \framework{} could be used to simplify framing lemmas in a way which remains parameteric with the property in question.
}

Proofs of a number of graph algorithms have been
mechanized in various verification tools and proof assistants,
including Trajan's SCC algorithm~\cite{DBLP:conf/itp/ChenCLMT19},
union-find~\cite{DBLP:journals/jar/ChargueraudP19}, Kruskal's minimum
spanning tree algorithm~\cite{DBLP:journals/afp/HaslbeckLB19}, and
network flow algorithms\cite{DBLP:journals/jar/LammichS19}. These
proofs generally involve non-local reasoning arguments about mathematical graphs.

The most closely related work
is~\cite{DBLP:journals/pacmpl/KrishnaSW18}, for which we already
provided a high-level comparison in \rSc{sec-introduction}. In
addition to the technical innovations made here (general proof
technique that integrates with existing SLs), the
most striking difference is in the underlying meta theory. The prior flow framework required flow domains to form a
semiring\as{; the analogue of} edge functions are restricted to multiplication
with a constant\as{, which must come from the same flow value set}.
\as{Our \framework} decouples the algebraic structure defining how
flow is \emph{aggregated} from the algebraic structure of the edge
functions. In this way, we obtain a more general framework that
applies to many more examples, and with simpler flow domains.
Strictly speaking, the \as{prior and our} framework are incomparable as
the prior did not require that flow aggregation is
cancellative. As we argue in \rSc{sec-principles},
cancellativity is, in general, necessary for local reasoning, and is critical for ensuring that the inflow of a composed graph is
uniquely determined. Due to this issue,
\cite{DBLP:journals/pacmpl/KrishnaSW18} requires proofs to reason
%reasoning
 about flow
interface \emph{equivalence classes}. This prevents the general modification of graphs with cyclic structures.
%
% complicates proofs (and their
%automation, introducing quantifier alternations), and entails strong
%restrictions on modifications of cyclic structures.\twfootnote{Should
%  we discuss difference in handling uniqueness of fixpoints? \as{That would be good too, though I'm not sure if we yet have the point about e.g. the PIP naturally establishing a non-least fixpoint.}}
\skfootnote{If we really want to beat our previous selves to death, we can also point out that our notion of interface (with a pointwise outflow) is simpler than the old notion (with a pairwise flow map).}

%When we compare to existing flows works, mention that we don't need
%node labels in flow interfaces anymore.

An alternative approach to using SL-style reasoning is to commit to
global reasoning but remain within decidable logics to enable
automation~\cite{DBLP:conf/cav/ItzhakyBINS13,
  DBLP:conf/popl/KlarlundS93, DBLP:conf/cade/WiesMK11,
  DBLP:conf/popl/MadhusudanQS12, DBLP:conf/popl/LahiriQ08}. However,
such logics are restricted to certain classes of graphs and certain
types of properties. For instance, reasoning about reachability in
unbounded graphs with two successors per node is
undecidable~\cite{DBLP:conf/csl/ImmermanRRSY04}.
\as{Recent work by Ter-Gabrielyan et al.~\cite{DBLP:journals/pacmpl/Ter-GabrielyanS19}
 shows how to deal with modular framing of \emph{pairwise reachability} specifications
 in an imperative setting. Their framing notion has parallels to our
 notion of interface composition, but allows subgraphs to \emph{change}
 the paths visible to their context. The work is specific to
 a reachability relation, and cannot express the rich variety of custom graph properties available in our technique.
 % by instantiating flow domains in our technique.
}

%%% Local Variables:
%%% mode: latex
%%% TeX-master: "writeup"
%%% End:

\section{Conclusions and Future Work}\label{sec-conclusion}
We have presented the \framework{}, enabling local modular reasoning about recursively-defined properties over general graphs. The core reasoning technique has been designed to make minimal mathematical requirements, providing great flexibility in terms of potential instantiations and applications. We identified key classes of these instantiations for which we can provide existence and uniqueness guarantees for the fixpoint properties our technique addresses and demonstrate \tw{our proof technique on several challenging examples.}
%, and showed how two of these (edge-local flow domains and effectively-acyclic graphs) can be built upon to provide automated, simple proof checking for a wide variety of challenging examples.
%
\tw{As future work, we plan to automate flow-based proofs in our new
framework using existing tools that support SL-style reasoning such as Viper~\cite{viper} and GRASShopper~\cite{DBLP:conf/tacas/PiskacWZ14}.}
% As future work, we plan to investigate the potential for an extended meta-theory supporting syncing of partial dirty regions, as an alternative proof technique to our edge-local transformation. We aim to connect our techniques to other verification tools, and to mechanise the theoretical foundations. Finally, we plan to develop a lightweight front-end tool, facilitating the application of our novel reasoning to further examples. 

%%% Local Variables:
%%% mode: latex
%%% TeX-master: "writeup"
%%% End:

%% Acknowledgments
% \begin{acks}                            %% acks environment is optional
%                                         %% contents suppressed with 'anonymous'
%   %% Commands \grantsponsor{<sponsorID>}{<name>}{<url>} and
%   %% \grantnum[<url>]{<sponsorID>}{<number>} should be used to
%   %% acknowledge financial support and will be used by metadata
%   %% extraction tools.
%   This material is based upon work supported by the
%   \grantsponsor{GS100000001}{National Science
%     Foundation}{http://dx.doi.org/10.13039/100000001} under Grant
%   No.~\grantnum{GS100000001}{nnnnnnn} and Grant
%   No.~\grantnum{GS100000001}{mmmmmmm}.  Any opinions, findings, and
%   conclusions or recommendations expressed in this material are those
%   of the author and do not necessarily reflect the views of the
%   National Science Foundation.
% \end{acks}

% references aren't counted in the page limit
\newpage
%% Bibliography
\bibliographystyle{splncs04}
\bibliography{dblp,references}

% SPLIT HERE
\newpage
\appendix

\section{Separation Logic}
\label{sec-separation-logic}

Separation logic (SL), is an extension of Hoare logic~\cite{DBLP:journals/cacm/Hoare69} that is tailored to perform modular reasoning about programs that manipulate mutable resources.
The primary application of SL has been to verify heap-based data structures, but the core of SL is an abstract separation logic (based on the logic of bunched implications (BI)~\cite{DBLP:journals/bsl/OHearnP99}) that can be instantiated to obtain various existing forms of SL by choosing an appropriate resource: any separation algebra (see \rSc{sec-preliminaries}).

\paragraph{Heaps}

Our separation logic uses standard partial heaps as its semantic model.
Let us assume we have the following fixed countably infinite sets:
$\values$, consisting of program values;
$\addrs$, consisting of memory addresses;
and $\fieldDom$, consisting of of field names.
Partial heaps are partial maps from addresses to partial maps from field-names to values:
\[ \states \defeq \setcomp{h}{h \colon \addrs \pto (\fieldDom \pto \values)} \]
It is easy to see that, under the disjoint union operator $\uplus$, and using the empty heap $\heap_{\emptySubscript}$, $(\states, \uplus, \heap_{\emptySubscript})$ forms a separation algebra.

\paragraph{Programming Language}

We consider the following simple imperative programming language:
\begin{equation*}
  \begin{array}{r l L}
    C \in \cmds \Coloneqq & \skipCommand & Empty command \\
    \mid & c & Basic command \\
    \mid & C_1;\; C_2 & Sequential composition \\
    \mid & C_1 + C_2 & Non-deterministic choice \\
    \mid & C^* & Looping \\
                          &  & \\
    c \Coloneqq & \assumeCommand(B) & Assume condition \\
    \mid & \code{x := } e & Variable assignment \\
    \mid & \code{x := } e.f  & Heap dereference \\
    \mid & e_1.f \code{ := } e_2 & Heap write \\
    \mid & \code{x := } \allocCommand() & Allocate heap cell \\
    {} & \dotsc &
  \end{array}
\end{equation*}
Here, $C$ stands for commands, $c$ for basic commands, \code{x} for program variables, $e$ for heap-independent expreesions, $f \in \fieldDom$ for field names, and $B$ for boolean expressions.
Since we are only concerned with partial correctness in this dissertation, we can define the more familiar program constructs as the following syntactic shorthands:
\begin{align*}
  \code{if}(B)\; C_1 \code{ else } C_2
  & \defeq (\assumeCommand(B);\; C_1) + (\assumeCommand(\neg B);\; C_2) \\
  \code{while}(B)\; C
  & \defeq (\assumeCommand(B);\; C)^*;\; \assumeCommand(\neg B)
\end{align*}

\paragraph{Assertions}

We assume that we start from a standard first-order logic over a signature that includes a countably infinite number of uninterpreted functions and predicates.
The only requirement on the underlying logic is that it supports additional uninterpreted sorts, functions and predicates, which can be axiomatised in the pure part of the logic\footnote{We will use this power to express all the values associated with flows and flow interfaces.}.

Let $\vars$ be an infinite set of variables (we omit sorts and type-checking from the presentation, for simplicity).
The syntax of assertions $\phi$ is given by the following:
\begin{align*}
  \phi & \Coloneqq
         P \mid \true \mid \phi \land \phi \mid \phi \impl \phi \mid \exists x.\; \phi \\
       & \quad \mid e \mapsto \set{f_1 \colon e_1, \dotsc}
         \mid \phi * \phi
         \mid \Sep_{x \in X} \phi
\end{align*}
Here, the first line consists of first order assertions $P$ (called \emph{pure} assertions in the SL world), the always valid assertion $\true$, standard boolean connectives, and existential quantification.
We can define the remaining boolean connectives and universal quantification as shorthands for the appropriate combination of these.
The second line contains the new predicates and connectives introduced by SL (so-called \emph{spatial} assertions).
The \emph{points-to} assertion $e \mapsto \set{f_1 \colon e_1, \dotsc}$ is a primitive assertion that denotes a heap cell at adderss $e$ containing fields $f_1$ with value $e_1$, etc.
The key feature of SL is the new connective $*$, or \emph{separating conjunction}, that is used to conjoin two \emph{disjoint} parts of the heap.
We use the $\Sep_{x \in X} \phi$ syntax to represent iterated separating conjunction (the bound variable $x$ ranges over a set $X$)\footnote{Most presentations of SL also include the \emph{separating implication} connective $\wand$. However, logics including $\wand$ are harder to automate and usually undecidable. By omitting $\wand$ we emphasize that we do not require it to perform flow-based reasoning}.

The semantics of the separation logic assertions are defined with respect to an interpretation of (logical and program) variables $\interp \colon \vars \pto \values$.
We write $\denotation{e}_\interp$ for the denotation of expression $e$ under interpretation $\interp$.
In particular, we have:
\begin{align*}
  \begin{array}{rl c l}
    h, \interp & \models e \mapsto \set{f_1 \colon e_1, \dotsc, f_k \colon e_k} & \iff
    & h(\denotation{e}_\interp) = \set{f_1 \goesto e_1, \dotsc, f_k \goesto e_k} \\[.4em]
    h, \interp & \models \phi_1 * \phi_2 & \iff & \exists h_1, h_2.\; (h = h_1 \uplus h_2) \land (h_1, \interp \models \phi_1) \land (h_2, \interp \models \phi_2)
    %
    % h, \interp & \models \phi_1 \magicwand \phi_2 & \iff & \forall h_1, h_2.\; (h \uplus h_1 = h_2) \land (h_1, \interp \models \phi_1) \impl (h_2, \interp \models \phi_2) \\[.4em]
    % & & \dots
  \end{array}
\end{align*}
Note that the logic presented here is \emph{garbage-collected}~\cite{DBLP:conf/aplas/CaoCA17} (also known as \emph{intuitionistic}).
Thus, the semantics of the points-to assertion $x \mapsto \set{f_1 \colon e_1, \dotsc, f_k \colon e_k}$ does not restrict the heap $h$ to only contain the address $x$, it only requires $x$ to be included in its domain.
This restriction is not essential but simplifies presentation.
% ; also, \rCh{chp-automation} considers an embedding of the flow framework in the Viper tool, whose logic also uses a garbage-collected semantics.
%The core connective of separation logic is the \emph{separating conjunction}, and a heap satisfies $\phi_1 * \phi_2$ if it can be split up into two disjoint heaps $h = h_1 \uplus h_2$ such that $h_i$ satisfies $\phi_i$ for $i \in \set{1, 2}$.

\paragraph{Operational Semantics}

We give a small-step operational semantics for our programming language.
Configurations are either $\faultConfig$ or a pair $(C, \state)$ of a command $C$ and a state $\state$ (i.e. a heap-interpretation pair).
The following rules define a reduction relation $\reduces{}{}$ between configurations:
\begin{mathpar}
  \inferH{Seq1}
  {}{(\skipCommand;\; C_2), \state \reduces{}{} C_2, \state}

  \inferH{Seq2}
  {C_1, \state \reduces{}{} C'_1, \state'}
  {(C_1;\; C_2), \state \reduces{}{} (C'_1;\; C_2), \state'}

  \inferH{Cho1}
  {}{(C_1 + C_2), \state \reduces{}{} C_1, \state}

  \inferH{Cho2}
  {}{(C_1 + C_2), \state \reduces{}{} C_2, \state}

  \inferH{Ass}
  {\denotation{B}_{\state}}
  {\assumeCommand(B), \state \reduces{}{} \skipCommand, \state}

  \inferH{Loop}
  {}{C^*, \state \reduces{}{} (\skipCommand + (C; C^*)), \state}
\end{mathpar}
While we can also give similar small-step semantics to basic commands (for instance, see~\cite{DBLP:conf/lics/Reynolds02}), it is easier to understand their axiomatic semantics, presented in the next paragraph.

Soundness of separation logic, especially the frame rule below, relies on the following locality property of the semantics of the programming language.
By defining our basic commands via an axiomatic semantics, they automatically satisfy this property, and by construction all composite commands will have the locality property.

\begin{definition}[Locality]
  \begin{enumerate}[label=(L\arabic{enumi})]
  \item If $(C, \state_1 \stateComp \state) \reduces{}{}^* \faultConfig$, then $(C, \state_1) \reduces{}{}^* \faultConfig$.
  \item If $(C, \state_1 \stateComp \state) \reduces{}{}^* (\skipCommand, \state_2)$, then either there exists $\state'_2$ such that $(C, \state_1) \reduces{}{}^* (\skipCommand, \state'_2)$ and $\state_2 = \state \stateComp \state'_2$, or $(C, \state_1) \reduces{}{}^* \faultConfig$.
  \end{enumerate}
\end{definition}

\paragraph{Proof Rules}

As with Hoare logic, programs are specified in separation logic by Hoare triples.

\begin{definition}[Hoare Triple]
  We say $\models \hoareTriple{\phi}{C}{\psi}$ if for every state $\state$ such that $\state \models \phi$ we have (1) $(C, \state) \not\reduces{}{}^* \faultConfig$, and (2) for every state $\state'$ such that $(C, \state) \reduces{}{}^* (\skipCommand, \state')$, $\state' \models \psi$.
\end{definition}

In the above definition, $\reduces{}{}^*$ is the reflexive transitive closure of the reduction relation $\reduces{}{}$.
Intuitively, the judgment $\hoareTriple{\phi}{C}{\psi}$ means that if a command $C$ is executed on a state satisfying the precondition $\phi$, then it executes without faults.
Moreover, if $C$ terminates, then the resulting state satisfies the postcondition $\psi$ (thus, this is a \emph{partial} correctness criterion).

Separation logic inherits the standard Floyd-Hoare structural proof rules, and the rule of consequence:
\begin{mathpar}
  \inferH{SL-Skip}
  {}{\proves \hoareTriple{\phi}{\skipCommand}{\phi}}

  \inferH{SL-Seq}
  {\proves \hoareTriple{\phi}{C_1}{\psi} \and \proves \hoareTriple{\psi}{C_2}{\rho}}
  {\proves \hoareTriple{\phi}{C_1; C_2}{\rho}}

  \inferH{SL-Choice}
  {\proves \hoareTriple{\phi}{C_1}{\psi} \and \proves \hoareTriple{\phi}{C_2}{\psi}}
  {\proves \hoareTriple{\phi}{C_1 + C_2}{\rho}}

  \inferH{SL-Loop}
  {\proves \hoareTriple{\phi}{C}{\phi}}
  {\proves \hoareTriple{\phi}{C^*}{\phi}}

  \inferH{SL-Conseq}
  {P' \impl \phi \and \proves \hoareTriple{\phi}{C}{\psi} \and \psi \impl Q'}
  {\proves \hoareTriple{P'}{C}{Q'}}
\end{mathpar}

The scalability of SL-based reasoning arises due to the following \emph{frame rule}:
\begin{mathpar}
  \inferH{SL-Frame}
  {\proves \hoareTriple{\phi}{C}{\psi}}
  {\proves \hoareTriple{\phi * \rho}{C}{\psi * \rho}}
\end{mathpar}
The frame rule allows one to lift a proof that a command $C$ executes safely on a state satisfying $\phi$, producing a state satisfying $\psi$ if it terminates, to the setting where an additional resource $\rho$ (the \emph{frame}) is present.
Since $C$ was safe when given only $\phi$, it does not access any resources outside $\phi$; hence, $\rho$ is untouched in the postcondition.
The soundness of the frame rule relies on the disjointness of resources enforced by the separating conjunction operator $*$\footnote{The frame rule relies on a side condition that the program variables modified by $C$ do not overlap with the free variables in $\rho$, but this condition can be omitted using the ``variables as resource'' technique~\cite{Bornat:2006:VRS:1706629.1706801}.}.

For the basic commands of the programming language, one can give \emph{small axioms}, proof rules that specify the minimum resource they need in order to execute safely.
The effect of basic commands on more complex states can be derived from these and the frame rule.
Here are some of the small axioms:
\begin{mathpar}
  \inferH{SL-Assign}
  {}{\proves \hoareTriple{\psi[x \goesto e]}{\code{x := } e}{\psi}}

  \inferH{SL-Write}
  {}{\proves \hoareTriple{e_1 \mapsto \set{f \colon \_, \dotsc}}{e_1.f \code{ := } e_2}{e_1 \mapsto \set{f \colon e_2, \dotsc}}}

  \inferH{SL-Read}
  {}{\proves \hoareTriple{e \mapsto \set{f \colon z, \dotsc} * e = y}{\code{x := } e.f}{y \mapsto \set{f \colon z, \dotsc} * x = z}}
\end{mathpar}
Note that we write $\psi[x \goesto e]$ for the assertion $\psi$ where all occurrences of $x$ are replaced with $e$, and $\_$ for an anonymous existential variable (to denote expressions we do not care about).

Together with standard axioms of first-order logic, the proof rules presented above are known to be complete~\cite{yang-thesis}\footnote{Note that Yang's completeness result depends crucially on the separating implication $\wand$ being included in the assertion language.}.
In other words, all valid Hoare triples can be derived by an appropriate combinations of these axioms.

%%% Local Variables:
%%% mode: latex
%%% TeX-master: "writeup"
%%% End:

\section{Expressivity of Flows}
\label{sec-expressivity}

We now give a few examples of flows to demonstrate the range of data structures whose properties can be expressed as local constraints on each node's flow.

We start by describing a few interesting examples of flows that
capture common graph properties. The path-counting flow defined in \rSc{sec-flows} is a very useful flow for describing the shape of common structures, e.g. lists (singly and doubly linked, cyclic), trees, and (by using product flow constructions) nested and overlaid combinations of these.
By considering products with flows for data properties, we can also describe structures such as sorted lists, binary heaps, and search trees.

The next flow is similar to the PIP flow defined in \rSc{sec-flows}
and can be used to specify the correctness of algorithms
such as Dijkstra's shortest paths algorithm.

\begin{definition}[Shortest Path Flow]
  The shortest path flow uses the flow domain $(\Nat^C, \cup,
  \emptyset, \set{\lambda_n \mid n \in C})$ of multisets over costs
  $C = \Nat \cup \{\infty\}$ where $\lambda_n(S) = \{n + \min(S)\}$.
  
  Given a flow graph $\fGraph = (N, \edgeFn, \flow)$ over this domain
  and a cost labeling function $c: N \times N \to C$ for edges, if
  $\edgeFn(n, n')$ is $\lambda_{c(n,n')}$ and $\inflowFn(\fGraph) =
    \lambdaFn{n}{\ite{n = s}{\{0\}}{\emptyset}}$ for some source node
    $s$, then $\flow(n)$ is the multiset of costs of all shortest paths from
    $s$ to $n$ via $n$'s predecessors. That is, the cost of the shortest path from
    $s$ to $n$ is $\min(\flow(n))$.
\end{definition}

The next flow can be used to reason about reachability properties in graphs.

\begin{definition}[Inverse Reachability Flow]
  Consider the flow domain  $(\Nat^{2^{\nodeDom}}, \cup, \emptyset, \edgeDom)$, consisting of the monoid of multisets of sets of nodes under multiset union and edge functions $\edgeDom$ containing $\zeroFn$ and for every $n \in \nodeDom$ the function
  \[
    \lambda_{n}(S) \defeq \set{P \goesto \ite{n \in P}{S(P \setminus \set{n})}{0}}.
  \]
  Given a flow graph $\fGraph = (N, \edgeFn, \flow)$, if $\edgeFn(n, n') = \lambda_{n}$ and $\inflowFn(\fGraph) = \lambdaFn{n}{\ite{n = r}{\set{\emptyset}}{\emptyset}}$, then $\flow(n)$ is a multiset containing, for each simple path in $\fGraph$ from $r$ to $n$, the set of all nodes occurring on that path.
\end{definition}

Finally, we demonstrate the full generality of the flow equation in
terms of its ability to capture global graph properties. To this end, we define a \emph{universal} flow that computes, at each node, sufficient information to reconstruct the entire graph.
This shows that flows are powerful enough to capture any graph property of interest.
\ftodo{The edge function definition here is hard to understand - simplify?}

\begin{definition}[Universal Flow]
  \label{def-universal-flow}
  Say we are given a set of nodes $N \subseteq \nodeDom$ and a function $\epsilon \colon N \times \nodeDom \to A$ labelling each pair of nodes from some set $A$ (for instance, to encode an unlabelled graph, $A = \set{0, 1}$ and $\epsilon(n, n')$ is $1$ iff an edge is present in the graph).
  Consider the flow domain  $(\Nat^{2^{\nodeDom \times \nodeDom \times A}}, \cup, \emptyset, \edgeDom)$, consisting of the monoid of multisets of sets of tuples $(n, n', a)$ of edges $(n, n')$ and labels $a \in A$ under multiset union and edge functions $\edgeDom$ containing $\zeroFn$ and for every $n, n' \in \nodeDom, a \in A$ the function
  \[
    \lambda_{n, n', a}(S) \defeq \set{P \goesto \ite{(n, n', a) \in P}{S(P \setminus \set{(n, n', a)})}{0}}.
  \]
  Given a flow graph $\fGraph = (N, \edgeFn, \flow)$, if $\edgeFn(n, n') = \lambda_{n, n', \epsilon(n, n')}$ and $\inflowFn(\fGraph) = \lambdaFn{n}{\set{\emptyset}}$, then $\flow(n)$ is a multiset containing, for each simple path in $\fGraph$ ending at $n$, a set of all edge-label tuples of edges occurring on that path.
\end{definition}

To see why the universal flow computes the entire graph at each node, let us look at the edge functions in more detail.
The way to think of $\edgeFn(n, n') = \lambda_{n, n', \epsilon(n, n')}$ is that it looks at each path $P'$ in the input multiset $S$ and if $P'$ does not contain the tuple $(n, n', \epsilon(n, n'))$ then it adds the tuple to $P'$ and adds the resulting path $P$ to the output multiset.
In order to convert this procedure into a multiset comprehension style definition, the formal definition above starts from each path $P$ in the output multiset and works backward (i.e. $P' = P \setminus \set{(n, n', \epsilon(n, n'))}$).

To understand the flow computation, let us start with the inflow to a node $n$, the singleton multiset containing the empty set $\set{\emptyset}$, and track its progress through a path.
For every $n'$, the edge function $\edgeFn(n, n')$ acts on $\set{\emptyset}$ and propagates the singleton multiset $\set{(n, n', \epsilon(n, n'))}$.
In this way, if we consider a sequence of (distinct) edges $(n_1,
n_2), \dotsc, (n_{k-1}, n_k)$, then this value becomes
\[\set{(n_1, n_2, \epsilon(n_1, n_2)), \dotsc, (n_{k-1}, n_k,
    \epsilon(n_{k-1}, n_k))} \enspace.\]
However, the minute we follow an edge $(n_i, n_{i+1})$ that has occurred on the path before, the edge function $\edgeFn(n_i, n_{i+1})$ will send this value to the empty multiset $\emptyset$.
Thus, the flow at each node $n$ turns out to be the multiset containing sets of edge-label tuples for each simple path in the graph\footnote{This flow domain has the property that any graph has a unique solution to the flow equation (see \rSc{sec-np-flows}).}.
Note that we label all pairs of nodes in the graph by edges of the form $\lambda_{n, n', \epsilon(n, n')}$.
This means that $\flow(n)$ will contain one set for every sequence of pairs of nodes in the graph, even those corresponding to edges that do not ``exist'' in the original graph $\epsilon$.
From this information, one can easily reconstruct all of $\epsilon$ and hence any graph property of the global graph.

The power of the universal flow to capture any graph property comes with a cost: the flow footprint of any modification is the entire global graph.
This means that we lose all powers of local reasoning, and revert to expensive global reasoning about the program.
This is to be expected, however, because the universal flow captures all details of the graph, even ones that are possibly irrelevant to the correctness of the program at hand.
The art of using flows is to carefully define a flow that captures exactly the necessary global information needed to locally prove correctness of a given program.

For example, the inverse reachability flow is a simplified version of the universal flow in that for each edge it only keeps track of the source node.
By capturing less information about the global graph, however, this flow permits more modifications: for instance, one can swap the order of two nodes in a simple path and only update the flows of the two nodes modified.
This is an example of carefully tuning the flow domain to match the modifications performed by the program.

%%% Local Variables:
%%% mode: latex
%%% TeX-master: "writeup"
%%% End:

\section{The PIP}
\label{sec-pip-full}

\sk{
In order to expose field values in the top level specification (e.g. to say that acquire results in the appropriate edge) we extend the signatures of our core predicates to allow extra custom parameters:
$\goodCondition(x, \fields, \mVar, \dotsc)$ and $\nodePred(x, \fGraph, \dotsc)$.}
For the PIP, we instantiate the framework as follows, where $\eta$ is a function from nodes to nodes storing the values of the next fields (as enforced by the last line of $\goodCondition$):
\begin{align*}
  \fields
  & \defeq \set{\nextField \colon y, \field{curr\_prio} \colon q, \field{def\_prio} \colon q^0, \field{prios} \colon Q} \\
  & \\
  \abstractionFn(x, \fields, z)
  & \defeq
    \begin{cases}
      \lambdaFn{M}{\max(\max(M), q^0)} & \text{if } z = y \neq \nullVal \\
      \zeroFn & \text{otherwise}
    \end{cases} \\
  & \\
  \goodCondition(x, \fields, M, \eta)
  & \defeq q^0 \geq 0 \land (\forall q' \in Q.\; q' \geq 0) \\
  & \quad \land M = Q \land q = \set{\max(\max(Q), q^0)} \\
  & \quad \land \eta(x) = y \land \eta(x) \neq x \\
  & \\
  \globalInt(\interface)
  & \defeq
    \interface = (\set{\_ \goesto \emptyset}, \set{\_ \goesto \emptyset})
\end{align*}
\sk{Thus, $\nodePred(x, \fGraph, \eta)$ describes a node $x$ abstracted by flow graph $\fGraph$ and whose next field is $\eta(x)$.}

\begin{lstlisting}
// Let $\mkblue{\adjustInflow(M, q_1, q_2) \defeq M \setminus \ite{q_1 \geq 0}{\set{q_1}}{\emptyset} \cup \ite{q_2 \geq 0}{\set{q_2}}{\emptyset}}$
// Let $\mkblue{\Delta(\interface, n, q_1, q_2) \defeq (\set{n \goesto \adjustInflow(\inflowOfInt{\interface}(n), q_1,q_2)}, \outflowOfInt{\interface})}$

method update(n: Ref, from: Int, to: Int)
  requires $\mkblue{\nodePred(n, \interface_n, \eta) * \graphPred(X \setminus \set{n}, \interface', \eta) \land \interface = \interface'_n  \intComp \interface' \land \globalInt(\interface)}$
  requires $\mkblue{\interface'_n = \Delta(\interface_n, n, \mathit{from}, \mathit{to}) \land \mathit{from} \neq \mathit{to}}$
  ensures $\mkblue{\graphPred(X, \interface, \eta)}$
{
  n.prios := n.prios $\setminus$ {from}
  if (to >= 0) {
    n.prios := n.prios $\cup$ {to}
  }
  from := n.curr_prio
  n.curr_prio := max(max(n.prios), n.def_prio)
  to := n.curr_prio

  if (from != to && n.next != null) { // Let n' := n.next
    $\annotOpt{
      & \exists \interface'_{n'}, \interface''_n, \interface_1.\;
      \nodePred(n, \interface''_n, \eta) * \nodePred(n', \interface_{n'}, \eta)
      * \graphPred(X \setminus \set{n, n'}, \interface_1, \eta)
      \land \interface =  \interface''_n \intComp \interface'_{n'} \intComp \interface_1 \\
      & \land \interface''_n = (\inflowOfInt{\interface'_n}, \set{n' \goesto \set{\mathit{to}}})
      \land \interface'_{n'} = \Delta(\interface_{n'}, n', \mathit{from}, \mathit{to}) \land \mathit{from} \neq \mathit{to}
    }$
    $\mkpurple{\entails}$
    $\annotOpt{
      & \exists \interface'_{n'}, \interface'. \;
      \nodePred(n', \interface_{n'}, \eta) * \graphPred(X \setminus \set{n'}, \interface', \eta)
      \land \interface' \intComp \interface'_{n'} = \interface\\
      & \land \interface'_{n'} = \Delta(\interface_{n'}, n', \mathit{from}, \mathit{to})  \land \mathit{from} \neq \mathit{to}
    }$
    update(n.next, from, to)
  }
}

method acquire(p: Ref, r: Ref)
  requires $\mkblue{\graphPred(X, \interface, \eta) \land \globalInt(\interface)}$
  requires $\mkblue{p \in X \land r \in X \land p \neq r \land \eta(p) = \nullVal}$
  ensures $\mkblue{\graphPred(X, \interface, \eta')}$
  ensures $\mkblue{\ite{\eta(r) = \nullVal}{\eta' = \eta[r \goesto p]}{\eta' = \eta[p \goesto r]}}$
{
  $\annotOpt{ \exists \interface_r, \interface_p, \interface_1.\;
    \nodePred(r, \interface_r, \eta) * \nodePred(p, \interface_p, \eta)
    * \graphPred(X \setminus \set{r, p}, \interface_1, \eta)
    \land \interface = \interface_r \intComp \interface_p \intComp \interface_1
    \land \globalInt(\interface)
  }$
  if (r.next == null) {
    // Let $\mkpurple{q_r}$ = r.curr_prio ($\geq 0$ due to $\nodePred$)
    $\annotOpt{
      & \exists \interface_r, \interface_p, \interface_1.\;
      \nodePred(r, \interface_r, \eta) * \nodePred(p, \interface_p, \eta)
      * \graphPred(X \setminus \set{r, p}, \interface_1, \eta)
      \land \interface = \interface_r \intComp \interface_p \intComp \interface_1 \\
      & \land q_r \geq 0 \land \outflowOfInt{\interface_r} = \zeroFn \land \dotsc
    }$
    r.next := p
    // Let $\eta' = \eta[r\mapsto p]$
    $\annotOpt{
      & \exists \interface_r, \interface'_r, \interface_p, \interface_1.\;
      \nodePred(r, \interface'_r, \eta') * \nodePred(p, \interface_p, \eta')
      * \graphPred(X \setminus \set{r, p}, \interface_1, \eta)
      \land \interface = \interface_r \intComp \interface_p \intComp \interface_1 \\
      & \land \interface'_r = (\inflowOfInt{\interface_r}, \set{p \goesto \set{q_r}})
      \land q_r \geq 0 \land \outflowOfInt{\interface_r} = \zeroFn \land \dotsc
    }$
    $\mkpurple{\entails}$
    $\annotOpt{
      & \exists \interface_p, \interface'_p, \interface_2.\;
      \nodePred(p, \interface_p, \eta')
      * \graphPred(X \setminus \set{p}, \interface_2, \eta')
      \land \interface = \interface'_p \intComp \interface_2 \\
      & \land \interface'_p = (\set{p \goesto \delta(\inflowOfInt{\interface_p}(p), -1, q_r)}, \outflowOfInt{\interface_p})
      \land \dotsc
    }$
    update(p, -1, r.curr_prio)
    $\annotOpt{\graphPred(X, \interface, \eta')}$
  } else {
    p.next := r
    update(r, -1, p.curr_prio)
  }
}

method release(p: Ref, r: Ref)
  requires $\mkblue{\graphPred(X, \interface, \eta) \land \globalInt(\interface)}$
  requires $\mkblue{p\in X \land r \in X \land p \neq r \land \eta(r) = p}$
  ensures $\mkblue{\graphPred(X, \interface, \eta')}$
  ensures $\mkblue{\eta' = \eta[r \mapsto \nullVal]}$
{
  r.next := null
  update(p, r.curr_prio, -1)
}
\end{lstlisting}

% NOTE: release doesn't give the resource to the next waiting process

% \newpage

% Iterative version:

% \begin{lstlisting}
% // ---- PIP algorithm:

% method acquire(p: Ref, r: Ref, X: Set[Ref])
%   requires $\graphPred'(X, \interface) * \globalInt(\interface)$
%   // X and $\interface$ are existentially quantified over entire method
%   requires p $\in$ X $*$ r $\in$ X $*$ p != r $*$ p.next == null  // TODO
%   ensures $\graphPred'(X, \interface) * \globalInt(\interface)$  // has same interface
%   ensures // TODO what is the functional spec?
% {
%   var c: Node, n: Node, from: Int, to: Int
%   var X$_1$: Set[Node], X$_2$: Set[Node]

%   if (r.next == null) {
%     c := r; n := p
%   } else {
%     c := p; n := r
%   }
%   c.next := n
%   X$_1$ := {c}; X$_2$ := X $\setminus$ {c, n}
%   from := 0; to := c.curr_prio

%   while (n != null && from != to)
%     // n is the next node to update
%     // X$_1$ is the set of modified nodes (minus n)
%     // X$_2$ is the set of unmodified nodes
%     // TODO: this is only the n != null case..
%     invariant $\graphPred(X_1, \interfaceMap') * \nodePred(n, \interface_n) * \graphPred(X_2, \interfaceMap)$  // TODO make it just Gr(X \ {n}) * (n == null ? Iempty : someFn(In))
%     invariant $\interface = \interfaceMap'(X_1) \intComp \interface'_n \intComp \interfaceMap(X_2) \land \code{to} > 0$
%     invariant $\interface'_n = (\set{n \goesto \ite{\code{from} \neq 0}{\inflowOfInt{\interface_n}(n) \setminus \set{\code{from}}}{\inflowOfInt{\interface_n}(n)} \cup \set{\code{to}}}, \outflowOfInt{\interface_n})$
%   {
%     n.prios := n.prios $\setminus$ {from}
%     n.prios := n.prios $\cup$ {to}
%     from := n.curr_prio
%     n.curr_prio := max(n.curr_prio, to)
%     to := n.curr_prio
%     X$_1$ := X$_1$ $\cup$ {n}
%     n := n.next
%     X$_1$ := X$_1$ $\setminus$ {n}; X$_2$ := X$_2$ $\setminus$ {n}
%   }
% }
% \end{lstlisting}

%%% Local Variables:
%%% mode: latex
%%% TeX-master: "writeup"
%%% End:

\section{The Harris List}
\label{sec-harris-full}

\techreport{
  \ftodo{Does this algorithm end up being the same as Michael's SPAA'02 algorithm? Also check linearizability argument - if you don't need prophecy, can it be part of a lock-free template?}
}

We perform the proof of the Harris list in rely-guarantee separation logic (RGSep)~\cite{viktor-thesis}.
RGSep is parametrised by the program states (any separation algebra), the language of assertions (a variant of separation logic), and the programming language (as long as the basic commands are \emph{local}, see \rSc{sec-separation-logic}).
Unilke~\cite{DBLP:journals/pacmpl/KrishnaSW18}, where RGSep was instantiated with a bespoke separation algebra, assertion language, and programming language with custom constructs for flows, in this paper we instantiate RGSep with the standard partial heap separation algebra and the standard separation logic assertion language from \rSc{sec-separation-logic}.
We do, however, need to switch to a concurrent programming language, so we adopt the simple imperative language used in by the original RGSep presentation~\cite{viktor-thesis}.
As we did in \rSc{sec-proof-technique}, we only need to define flow predicates $\nodePred$ and $\graphPred$ within the logic and assume the flow lemmas from \rF{fig-proof-rules} in order to perform flow-based proofs in RGSep.

For the Harris, we instantiate the framework as follows, where we extend the $\goodCondition$ and $\nodePred$ predicates to also keep track of the mark status (the $M$ predicate encodes whether a reference is marked), \nextField and \fnextField values of each node:
\begin{align*}
  \fields
  & \defeq \set{\field{key} \colon k, \nextField \colon y, \fnextField \colon z} \\
  & \\
  \abstractionFn(x, \fields, v)
  & \defeq (v = \nullVal \; ? \; \zeroFn \\
  & \quad \quad : (v = y \land y \neq z \; ? \; \lambda_{(1, 0)} \\
  & \quad \quad \quad : (v \neq y \land y = z \; ? \; \lambda_{(0, 1)} \\
  & \quad \quad \quad \quad : (v = y \land y = z \; ? \; \identityFn : \zeroFn)))) \\
  & \\
  \goodCondition(x, \fields, \interface, m, x_n, x_f)
  & \defeq (\inflowOfInt{\interface}(x) \in \set{(1, 0), (0, 1), (1, 1)}) \\
  & \quad \land (\inflowOfInt{\interface}(x) \neq (1, 0) \impl M(y)) \\
  & \quad \land (x = \freeListTail \impl \inflowOfInt{\interface}(x) = (\_, 1)) \\
  & \quad \land (\neg M(y) \impl z = \nullVal) \\
  & \quad \land (M(y) \impl m = \marked_\_) \land (\neg M(y) \impl m = \unmarked) \\
  & \quad \land x_n = y \land x_f = z \\
  & \\
  \globalInt(\interface)
  & \defeq \inflowOfInt{\interface}
    = \set{\mainListHead \goesto (1, 0), \freeListHead \goesto (0, 1), \_ \goesto (0, 0)} \\
  & \quad \land \outflowOfInt{\interface} = \set{\_ \goesto (0, 0)}
  & \\
  \nodePred(x, \interface, m, x_n, x_f)
  & \defeq x \mapsto \fields * \goodCondition(x, \fields, \interface, m, x_n, x_f) \\
  & \quad * \dom(\interface) = \set{x}
    * \forall y.\; \outflowOfInt{\interface}(y) = \abstractionFn(x, \fields, y)(\inflowOfInt{\interface}(x))
\end{align*}
Note that we extend $\goodCondition$ with extra parameters $m, x_n$, and $x_f$ that keep track of the mark status (the $M$ predicate encodes whether a reference is marked), \nextField and \fnextField values of $x$.
We also expose these parameters in the  $\nodePred$ predicate.
In our proof, we ignore these additional parameters to $\nodePred$ when we do not care about them (i.e. $\nodePred(x, \interface) = \nodePred(x, \interface, \_, \_, \_)$, etc.).

\subsection{Actions}

RGSep consists of two kinds of assertions: boxed assertions $\boxed{\phi}$ talk about the shared state among threads, and unboxed assertions $\phi$ talk about thread-local state.
An RGSep proof requires an intermediate assertion in between every two atomic modifications to the shared state, along with a stability proof that this intermediate assertion is preserved by interference of other threads.
Interference is formally specified via \emph{actions}, two-state relations that describe modifications performed by threads to the shared state.

\begin{align}
  \nodePred(l, \interface_l, \unmarked)
  & \rightsquigarrow
    \begin{aligned}
      & \nodePred(l, \interface'_l, \unmarked) * \nodePred(n, \interface_n) \\
      & \land \interface_l \intLessEquiv (\interface'_l \intComp \interface_n)
    \end{aligned}
  \tag{Insert} \\
  \nodePred(r, \interface_r, \unmarked)
  & \rightsquigarrow
  \nodePred(r, \interface_r, \marked_t)
  \tag{Mark} \\
  \nodePred(\freeListTail, \interface_f, \marked_\_, x, \nullVal) * \nodePred(r, \interface_r, \marked_t, y, \nullVal)
  & \rightsquigarrow
    \begin{aligned}
      & \nodePred(\freeListTail, \interface'_f, \marked_\_, x, r) * \nodePred(r, \interface'_r, \marked_t, y, \nullVal) \\
      & \land (\interface_f \intComp \interface_r) = (\interface'_f \intComp \interface'_r)
    \end{aligned}
  \tag{Link} \\
  \nodePred(l, \interface_l, \unmarked, r, \_) * \nodePred(r, \interface_r, \marked_\_, x, y)
  & \rightsquigarrow
    \begin{aligned}
      & \nodePred(l, \interface'_l, \unmarked, x, \_) * \nodePred(r, \interface'_r, \marked_\_, x, y) \\
      & \land (\interface_l \intComp \interface_r) = (\interface'_l \intComp \interface'_r)
    \end{aligned}
    \tag{Unlink}
\end{align}

\techreport{
\ftodo{Should we also have an action for moving $\freeListTail$?}
}

\subsection{Proof}

Since we work under the effectively acyclic restriction, we use the predicates $\nodePredEA$ and $\graphPredEA$ in our proof.
As we did with $\nodePred$ and $\graphPred$, we overload  $\nodePredEA$ and $\graphPredEA$ to talk about flow interfaces instead of flow graphs:
\[
\begin{array}{rcl}
  \nodePredEA(x, \interface) &\; \defeq\; & \exists \fGraph.\; \nodePredEA(x, \fGraph) \land \interfaceFn(\fGraph) = \interface \\\
  \graphPredEA(X, \interface) &\; \defeq\; & \exists \fGraph.\; \graphPredEA(x, \fGraph) \land \interfaceFn(\fGraph) = \interface \\
\end{array}
\]
The only place where effective acyclicity reasoning comes in is when we reason about modifications to the shared state: we then have to show that the modified flow graph is a subflow-preserving extension (so that we can use the rule \refRule{rule-repl-ea}).
We show some examples of this below with purple annotations in braces.
Finally, the proof also requires showing stabilitiy of all intermediate assertions in blue.
This is straightforward and follows the proof method used by~\cite{DBLP:journals/pacmpl/KrishnaSW18,viktor-thesis}, so we omit these details here.

\begin{lstlisting}
procedure search(k: Key) returns (l: Ref, r: Ref)
  requires $\mkblue{\boxed{\exists X, \interface.\;
      \graphPredEA(X, \interface) \land \globalInt(\interface)}}$
  ensures $\mkblue{\boxed{\exists X, \interface.\;
      \graphPredEA(X, \interface) \land \globalInt(\interface)
      \land \set{l, r} \subseteq X}}$
{
  $\annot{\boxed{\exists X, \interface_\mainListHead, \interface_1.\;
      \nodePredEA(\mainListHead, \interface_\mainListHead)
      * \graphPredEA(X \setminus \set{\mainListHead}, \interface_1)
      \land \globalInt(\interface_\mainListHead \intComp \interface_1) \land \set{l, r} \subseteq X
    }}$
  l := mh; r := mh
  var n := head.next

  while (isMarkedRef(n) || r.key < k)
  invariant $\mkblue{\boxed{\exists X, \interface.\;
      \graphPredEA(X, \interface) \land \globalInt(\interface)
      \land \set{l, r} \subseteq X \land (\neg M(n) \impl n \in X)}}$
  {
    if (isMarkedRef(n)) {  // r marked
      $\annot{\boxed{\exists X, \interface_l, \interface_1.\;
          \nodePredEA(l, \interface_l) * \graphPredEA(X \setminus \set{l}, \interface_1)
          \land \globalInt(\interface_l \intComp \interface_1)
        }}$
      if (CAS(l.next, r, unmarked(n))) {  // try to unlink r
        // Success: now l.next == n, l still unmarked
        r := unmarked(n)
      } else {
        // either something inserted in between l and r or marked l, restart
        search(k)
      }
    } else {
      l := r
      r := n
    }
    $\annot{\boxed{\exists X, \interface_r, \interface_1.\;
        \nodePredEA(r, \interface_r) * \graphPredEA(X \setminus \set{r}, \interface_1)
        \land \globalInt(\interface_r \intComp \interface_1) \land l \in X
      }}$
    n := r.next
  }
  return (l, r)
}

procedure insert(k: Key)
  requires $\mkblue{\boxed{\exists X, \interface.\;
      \graphPredEA(X, \interface) \land \globalInt(\interface)}}$
  ensures $\mkblue{\boxed{\exists X, \interface.\;
      \graphPredEA(X, \interface) \land \globalInt(\interface)}}$
{
  Node n := new Node(k, null, null)
  $\annot{\boxed{\exists X, \interface.\;
      \graphPredEA(X, \interface) \land \globalInt(\interface)}
    * \nodePredEA(n, \_, \unmarked, \nullVal, \nullVal)
  }$

  while (true)
    invariant $\mkblue{\boxed{\exists X, \interface.\;
        \graphPredEA(X, \interface) \land \globalInt(\interface)}
      * \nodePredEA(n, \_, \unmarked, \nullVal, \nullVal)
    }$
  {
    var l, r := search(k)
    $\annot{\boxed{\exists X, \interface.\;
        \graphPredEA(X, \interface) \land \globalInt(\interface)
        \land \set{l, r} \subseteq X}
      * \nodePredEA(n, \_, \unmarked, \nullVal, \nullVal)
    }$
    if (r.key == k) {
      free(n)
      return false
    }
    n.next := r
    $\annot{\boxed{\exists X, \interface_l, \interface_r, \interface_1.\;
        \nodePredEA(l, \interface_l) * \graphPredEA(X \setminus \set{l}, \interface_1)
        \land \globalInt(\interface_l \intComp \interface_1)}
      * \nodePredEA(n, \_, \unmarked, \nullVal, \nullVal)
    }$
    if (CAS(l.next, r, n)) {
      $\annotOpt{ \nodePredEA(l, \fGraph'_l) * \nodePredEA(n, \fGraph'_n)
        * \graphPredEA(X \setminus \set{l}, \fGraph_1)
        \land \fGraph_l \capacityExtendedBy (\fGraph'_l \intComp \fGraph'_n)
        \land \globalInt(\interfaceFn(\fGraph_l \intComp \fGraph_1))
      }$
      return true 
    }
  }
}

procedure delete(k: Key)
  requires $\mkblue{\boxed{\exists X, \interface.\;
      \graphPredEA(X, \interface) \land \globalInt(\interface)}}$
  ensures $\mkblue{\boxed{\exists X, \interface.\;
      \graphPredEA(X, \interface) \land \globalInt(\interface)}}$
{
  var l, r, n

  while (true)
    invariant $\mkblue{\boxed{\exists X, \interface.\;
        \graphPredEA(X, \interface) \land \globalInt(\interface)}}$
  {
    l, r := search(k)
    $\annot{\boxed{\exists X, \interface.\;
        \graphPredEA(X, \interface) \land \globalInt(\interface)
        \land \set{l, r} \subseteq X}}$
    n := r.next
    if (r.key != k) {
      return false
    }
    if (!isMarkedRef(n)) {  // r unmarked
      $\annot{\boxed{\exists X, \interface_r, \interface_1.\;
          \nodePredEA(r, \interface_r) * \graphPredEA(X \setminus \set{r}, \interface_1)
          \land \globalInt(\interface_r \intComp \interface_1)
          \land l \in X} \land \neg M(n)}$
      if (CAS(r.next, n, marked(n))) {  // mark r
        $\annotOpt{
          \nodePredEA(r, \interface_r, \marked_t, \nullVal) *  \graphPredEA(X \setminus \set{r}, \interface_1)
          \land \globalInt(\interface_r \intComp \interface_1)
          \land \set{l, r} \subseteq X
        }$
        break
      }
    }  // if r already marked, we should have returned false, so retry
  }
  $\annot{\boxed{\exists X, \interface.\;
      \nodePredEA(r, \interface_r, \marked_t, \nullVal) * \graphPredEA(X \setminus \set{r}, \interface_1)
      \land \globalInt(\interface_r \intComp \interface_1) \land l \in X \land r \neq \freeListTail}}$
  // Try to unlink r from main list
  CAS(l.next, r, n)  // If this fails, next search will unlink
  $\annot{\boxed{\exists X, \interface.\;
      \nodePredEA(r, \interface_r, \marked_t, \nullVal) * \graphPredEA(X \setminus \set{r}, \interface_1)
      \land \globalInt(\interface_r \intComp \interface_1) \land l \in X \land r \neq \freeListTail}}$
  // Link r to free list
  while (!CAS(ft.next, null, r)) {}
  $\annot{\boxed{\exists X, \interface.\;
      \nodePredEA(r, \interface_r, \marked_t, \nullVal) * \graphPredEA(X \setminus \set{r}, \interface_1)
      \land \globalInt(\interface_r \intComp \interface_1) \land l \in X \land r \neq \freeListTail
      \land \inflowOfInt{\interface_r}(r) = (\_, 1)}}$
  ft := r

  return true
}
\end{lstlisting}

% \begin{lstlisting}
% procedure search(k: Key)
%   requires $\mkblue{\graphPredEA(X, \interface) \land \globalInt(\interface)}$
%   ensures l.key < k <= r.key
%   ensures // at some point, l.next == r and l and r unmarked
% {
%   var l := head
%   var r := head
%   var n := head.next

%   while (isMarkedRef(n) || r.key < k)
%     invariant l.key < k <= r.key
%     invariant // l.next == r (thus, l unmarked) at some point 
%     invariant // r.next == n at some point
%   {
%     if (isMarkedRef(n)) {  // r marked
%       if (!CAS(l.next, r, unmarked(n))) {  // try to unlink r
%         // either something inserted in between l and r or l marked
%         search(k)
%       } else {
%         // now l.next == n, l unmarked
%         r := n
%       }
%     } else {
%       l := r
%       r := n
%     }
%     n := r.next
%   }
%   return (l, r)
% }

% procedure insert(k: Key)
% {
%   Node n := new Node(k, null, null)

%   while (true)
%   {
%     var l, r := search(k)
%     if (r.key == k) {
%       free(n)
%       return false
%     }
%     n.next := r
%     if (CAS(l.next, r, n)) {
%       return true 
%     }
%   }
% }

% procedure delete(k: Key)
% {
%   var l, r, n

%   while (true)
%   {
%     l, r := search(k)
%     n := r.next
%     if (r.key != k) {
%       return false
%     }
%     if (!isMarkedRef(n)) {  // r unmarked
%       if (CAS(r.next, n, marked(n))) {  // mark r
%         break
%       }
%     }  // if r already marked, we should have returned false, so retry
%   }

%   // Try to unlink r from main list (Note: can do this first since we relax invariant)
%   CAS(l.next, r, n)  // If this fails, next search will do the unlink

%   // Link r to free list
%   while (!CAS(ft.next, null, r))
%   ft := r

%   return true
% }
% \end{lstlisting}

%%% Local Variables:
%%% mode: latex
%%% TeX-master: "writeup"
%%% End:

\end{document}

%%% Local Variables:
%%% mode: latex
%%% TeX-master: t
%%% End: